\documentclass{article}

\usepackage[english]{babel}
\usepackage{natbib}
\usepackage[letterpaper,top=2cm,bottom=2cm,left=3cm,right=3cm,marginparwidth=1.75cm]{geometry}
\usepackage{amsmath, amsthm, amssymb}
\usepackage{graphicx}
\usepackage[colorlinks=true, allcolors=blue]{hyperref}
\usepackage{subcaption}
\usepackage{mathtools}
\usepackage{wrapfig}

\usepackage{algorithm}
\usepackage{algpseudocode}

\newcommand{\vecmat}[1]{#1}
\newcommand{\blind}{1}

\DeclareMathOperator*{\argmax}{argmax}
\DeclareMathOperator*{\tr}{tr}
\newtheorem{theorem}{Theorem}[section]
\newtheorem{proposition}[theorem]{Proposition}

\title{Iterative Methods for Vecchia-Laplace Approximations for Latent Gaussian Process Models}
\author{
  Pascal Kündig\footnotemark[1]  \footnotemark[3] \footnotemark[4]\\
  \and
  Fabio Sigrist\footnotemark[2] \footnotemark[1] 
}

\begin{document}
\date{}
\maketitle

\begin{abstract}
Latent Gaussian process (GP) models are flexible probabilistic non-parametric function models. Vecchia approximations are accurate approximations for GPs to overcome computational bottlenecks for large data, and the Laplace approximation is a fast method with asymptotic convergence guarantees to approximate marginal likelihoods and posterior predictive distributions for non-Gaussian likelihoods. Unfortunately, the computational complexity of combined Vecchia-Laplace approximations grows faster than linearly in the sample size when used in combination with direct solver methods such as the Cholesky decomposition. Computations with Vecchia-Laplace approximations can thus become prohibitively slow precisely when the approximations are usually the most accurate, i.e., on large data sets. In this article, we present iterative methods to overcome this drawback. Among other things, we introduce and analyze several preconditioners, derive new convergence results, and propose novel methods for accurately approximating predictive variances. We analyze our proposed methods theoretically and in experiments with simulated and real-world data. In particular, we obtain a speed-up of an order of magnitude compared to Cholesky-based calculations and a threefold increase in prediction accuracy in terms of the continuous ranked probability score compared to a state-of-the-art method on a large satellite data set. All methods are implemented in a free C++ software library with high-level Python and R packages.
\end{abstract}
\footnotetext[1]{Lucerne University of Applied Sciences and Arts}
\footnotetext[2]{Seminar for Statistics, ETH Zurich}
\footnotetext[3]{University of Basel}
\footnotetext[4]{Corresponding author: pascal.kuendig@hslu.ch}

\section{Introduction}

Gaussian processes (GPs) \citep{cressie2015statistics, williams2006gaussian} are a flexible class of probabilistic non-parametric models with numerous successful applications in statistics, machine learning, and other disciplines. Traditionally, estimation and prediction are done using the Cholesky decomposition of the covariance matrix which has $O(n^3)$ computational complexity. Multiple methods have been proposed to overcome this computational bottleneck for large data, see \citet{heaton2019case} for a review. In spatial statistics, Vecchia approximations \citep{vecchia1988estimation, datta2016hierarchical, katzfuss2017general} have recently ``emerged as a leader among the sea of approximations" \citep{guinness2019gaussian}. 

When GPs are used in models with non-Gaussian likelihoods, marginal likelihoods and posterior predictive distributions can usually not be calculated in closed-form, and an approximation such as Laplace's method, expectation propagation, or variational Bayes methods must be used. The Laplace approximation \citep{tierney1986accurate} is typically the computationally most efficient method, but, depending on the likelihood, it can be inaccurate for small sample sizes \citep{nickisch2008approximations}. In particular, estimated variance parameters can be downward biased. However, the Laplace approximation converges asymptotically to the correct quantity, and the approximation is thus expected to become accurate for large data sets. Supporting this argument, we show in Figure \ref{fig:LaplaceBias} the estimated marginal variance parameter obtained with a Vecchia-Laplace approximation for varying sample sizes $n$ on simulated data with a Bernoulli likelihood and the setting described in Section \ref{simSetting}. Estimation is done using the iterative methods introduced in this paper and repeated on $100$ simulated data sets for every $n$. Figure \ref{fig:LaplaceBias} shows that the downward bias of the variance parameter vanishes as $n$ grows.
\begin{figure}[ht!]
    \centering
    \includegraphics[width=0.8\linewidth]{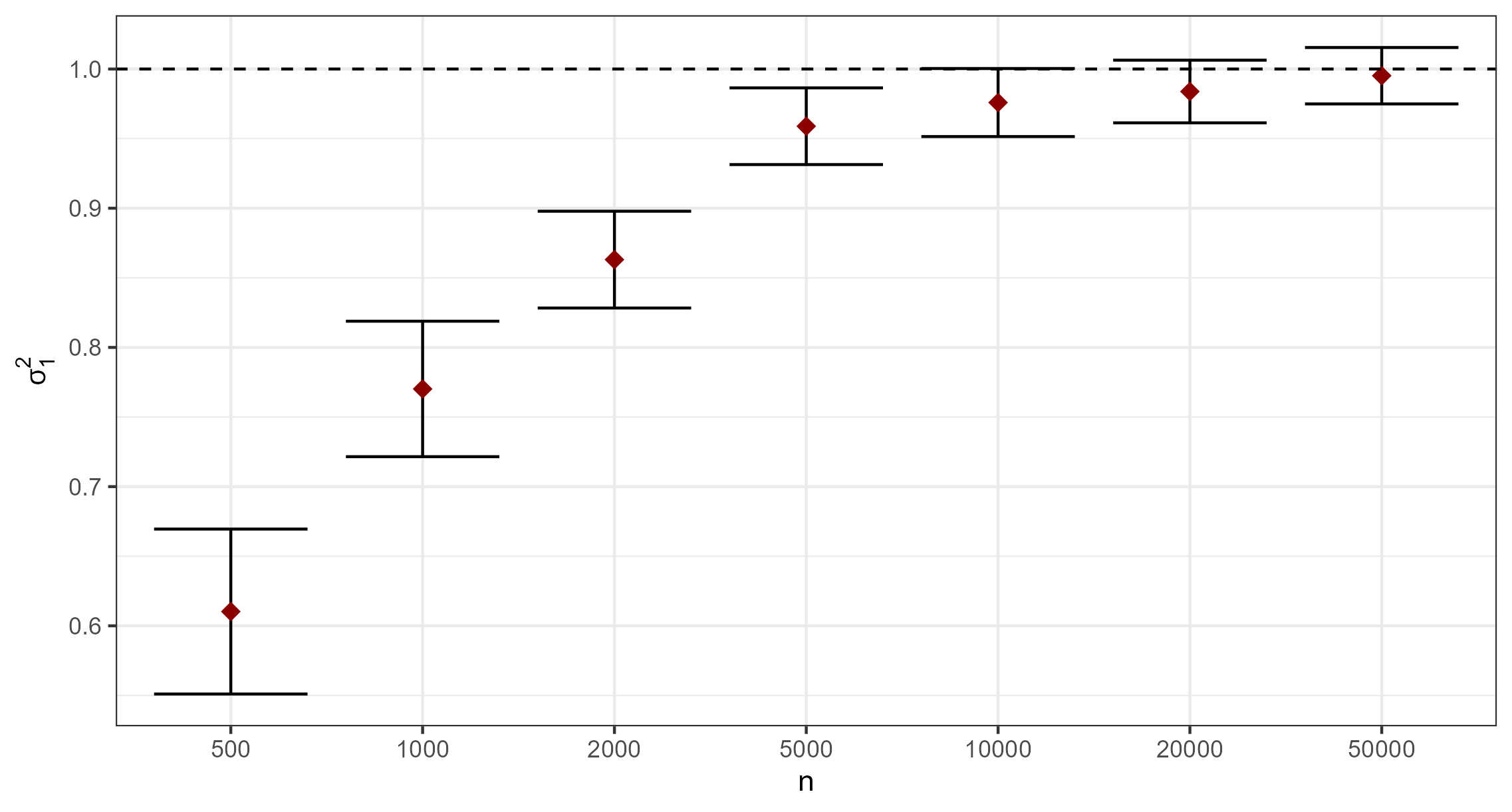}
    \caption{Estimated variance parameter $\sigma^2_1$ obtained with a Vecchia-Laplace approximation vs. different sample sizes $n$ for binary data. The red rhombi represent means and the whiskers are $\pm 2 \times$ standard errors. The dashed line indicates the true parameter $\sigma^2_1=1$.}
    \label{fig:LaplaceBias}
\end{figure}
Unfortunately, the computational costs of combined Vecchia-Laplace approximations still grow fast in the sample size when used in combination with direct solver methods such as the Cholesky decomposition, which is the go-to approach to date. Specifically, the computational complexity is $O(n^\frac{3}{2})$ in two dimensions and at least $O(n^2)$ in higher dimensions \citep{zilber2021vecchia}. I.e., computations with Vecchia-Laplace approximations can become prohibitively slow when the approximations are usually the most accurate. 

In this article, we show how parameter estimation and computation of posterior predictive distributions can be done with Vecchia-Laplace approximations using iterative methods \citep{saad2003iterative} such as the conjugate gradient (CG) method and stochastic Lanczos quadrature (SLQ) \citep{ubaru2017fast} which rely on matrix-vector multiplications with sparse triangular matrices. We propose and analyze several preconditioners to accelerate convergence and reduce variances of stochastic approximations. Gradients are calculated using stochastic trace estimation (STE) with almost no computational overhead once likelihoods are evaluated. To increase the precision of stochastic gradients, we use a novel control variate-based form of variance reduction. Moreover, we introduce and compare different methods for calculating predictive variances. We analyze our methods theoretically and in experiments using simulated and real-world data. Our theoretical results show that the CG method with our proposed preconditioners is expected to converge fast. Supporting this argument, we obtain a reduction in runtime of about one order of magnitude compared to the Cholesky decomposition with identical accuracy for parameter estimation and prediction. Further, the application of our methods on a large satellite water vapor data set leads to a threefold increase in prediction accuracy in terms of the continuous ranked probability score compared to a current state-of-the-art method \citep{zilber2021vecchia}.

\subsection{Relation to existing work}
Iterative methods have been used to reduce the computational cost for GPs in both statistics \citep{stein2013stochastic, aune2013iterative, aune2014parameter, majumder2022kryging} and machine learning \citep{gardner2018gpytorch, wenger2022preconditioning}. \citet{schaefer2021sparse} apply the CG method to solve linear systems when applying a Vecchia approximation to a latent GP in a model with a Gaussian likelihood. Recently, \citet{gelsinger2023log} propose to use the CG method and STE in an EM algorithm when applying a Laplace approximation and a spectral approximation for log-Gaussian Cox process models. None of the previous works have applied iterative methods to Vecchia-Laplace approximations. Instead of inverting matrices in Newton's method for finding the mode, \citet{zilber2021vecchia} show that the solution of the corresponding linear system can be seen as a pseudo-posterior mean and propose to apply a Vecchia approximation to the joint distribution of the latent GP and Gaussian pseudo-data for calculating this pseudo-posterior mean. Vecchia approximations are thus not applied to approximate a statistical model but to approximate a numerical computation. 

\section{Vecchia-Laplace approximations}\label{summary_VL}
Let $b(\cdot) \sim GP(0,c_{\theta}(\cdot,\cdot))$ be a zero-mean latent Gaussian process (GP) on a domain $\mathcal{D} \subset \mathbb{R}^d$ with a covariance function $\text{Cov}(b(\vecmat{s}),b(\vecmat{s}')) = c_{\theta}(\vecmat{s},\vecmat{s}')$, $\vecmat{s},\vecmat{s}'\in \mathcal{D}$, that depends on a set of parameters $\vecmat{\theta}\in\Theta\subset \mathbb{R}^q$. 
Assume further that a response variable \(\vecmat{y} \in \mathbb{R}^n\) observed at locations \(\vecmat{s}_1, \dots, \vecmat{s}_n \in \mathcal{D}\) follows a distribution with a density \(p(\vecmat{y} \mid \vecmat{\mu}, \vecmat{\xi})\), where \(\vecmat{\mu} \in \mathbb{R}^n\) are (potentially link function-transformed) parameters, and \(\vecmat{\xi} \in \Xi \subset \mathbb{R}^r\) are auxiliary parameters, such as a shape parameter for a gamma likelihood. Assume further that \(p(\vecmat{y} \mid \vecmat{\mu}, \vecmat{\xi})\) is log-concave in \(\vecmat{\mu}\). The parameter $\vecmat{\mu}=\vecmat{F}(\vecmat{X})+\vecmat{b}$ is the sum of a deterministic function $\vecmat{F}(\vecmat{X})$ and $\vecmat{b} = (b(\vecmat{s}_1),\dots,b(\vecmat{s}_n))^T\sim \mathcal{N}(\vecmat{0},\vecmat{\Sigma})$, where $\vecmat{X}\in\mathbb{R}^{n\times p}$ contains predictor variables and $\Sigma_{ij} = c_{\theta}(\vecmat{s_i},\vecmat{s_j}), 1\leq i,j\leq n$. Typically, $\vecmat{F}(\vecmat{X})=\vecmat{X}\vecmat{\beta}, \vecmat{\beta} \in \mathbb{R}^p$, but $F(\cdot)$ can also be modeled using machine learning methods such as tree-boosting \citep{sigrist2022latent}. We assume conditional independence of $\vecmat{y}$ given $\vecmat{\mu}$: $p(\vecmat{y}|\vecmat{\mu},\vecmat{\xi})=\prod_{i=1}^n p(y_i|\mu_i,\vecmat{\xi})$. The marginal likelihood is given by $p(\vecmat{y}|\vecmat{F},\vecmat{\theta},\vecmat{\xi})=\int p(\vecmat{y}|\vecmat{\mu},\vecmat{\xi})p(\vecmat{b}|\vecmat{\theta})d\vecmat{b}$. In the case of non-Gaussian likelihoods, there is typically no analytic expression for $p(\vecmat{y}|\vecmat{F},\vecmat{\theta},\vecmat{\xi})$ and an approximation has to be used.

\subsection{Vecchia approximations}\label{vecchia_approx}
We apply a Vecchia approximation to the latent GP: $p(\vecmat{b}|\vecmat{\theta}) \approx \prod_{i=1}^n p(b_i|\vecmat{b}_{N(i)},\vecmat{\theta})$, where $\vecmat{b}_{N(i)}$ are subsets of $(b_1,\dots,b_{i-1})$, and $N(i) \subseteq \{1,\dots,i-1\}$ with $|N(i)| \leq m$. If $i > m+1$, $N(i)$ is often chosen as the $m$ nearest neighbors of $\vecmat{s}_i$ among $\vecmat{s}_1,\dots,\vecmat{s}_{i-1}$. Note that $p(b_i|\vecmat{b}_{N(i)},\vecmat{\theta}) = \mathcal{N}(\vecmat{A}_i \vecmat{b}_{N(i)},D_i)$, where $\vecmat{A}_i = \vecmat{\Sigma}_{i,N(i)} \vecmat{\Sigma}_{N(i)}^{-1}$, $D_i = \Sigma_{i,i}  - \vecmat{\Sigma}_{N(i),i}^T \vecmat{\Sigma}_{N(i)}^{-1} \vecmat{\Sigma}_{N(i),i}$, $\vecmat{\Sigma}_{N(i),i}$ is a sub-vector of $\vecmat{\Sigma}$ with the $i$-th column and row indices $N(i)$, and $\vecmat{\Sigma}_{N(i)}$ denotes sub-matrix of $\vecmat{\Sigma}$ consisting of rows and columns $N(i)$. Defining a sparse lower triangular matrix $\vecmat{B}\in\mathbb{R}^{n\times n}$ with $1$'s on the diagonal, off-diagonal entries $\vecmat{B}_{i,N_{(i)}} = -\vecmat{A}_i$, and $0$ otherwise, and a diagonal matrix $\vecmat{D}\in\mathbb{R}^{n\times n}$ with $D_i$ on the diagonal, one obtains the approximation $\vecmat{b} \overset{\text{approx}}{\sim} \mathcal{N}(\vecmat{0},\tilde{\vecmat{\Sigma}})$, where $\tilde{\vecmat{\Sigma}}^{-1} = \vecmat{B}^T\vecmat{D}^{-1}\vecmat{B}$ is sparse. We denote the density of this approximate distribution as $\tilde p(\vecmat{b}|\vecmat{\theta})$. Calculating a Vecchia approximation has $O(nm^3)$ computational and $O(nm)$ memory cost. Often, accurate approximations are obtained for small $m$'s. 

\subsection{Vecchia-Laplace approximations}\label{VLA}
A Vecchia-Laplace approximation is obtained by combining a Vecchia with a Laplace approximation. Specifically, a Vecchia-Laplace approximation to the negative log-marginal likelihood $-\log(p(\vecmat{y}|\vecmat{F},\vecmat{\theta},\vecmat{\xi}))$ modulo constant terms is given by
\begin{equation}\label{VLA_loss}
L^{VLA}(\vecmat{y},\vecmat{F},\vecmat{\theta},\vecmat{\xi})=-\log p(\vecmat{y}|\vecmat{\mu}^*,\vecmat{\xi}) + \frac{1}{2} {\vecmat{b}^*}^T\tilde{\vecmat{\Sigma}}^{-1} \vecmat{b}^* + \frac{1}{2}\log\det\left(\tilde{\vecmat{\Sigma}} \vecmat{W}+\vecmat{I}_n\right),
\end{equation}
where $\vecmat{\mu}^* = \vecmat{F}(\vecmat{X})+\vecmat{b}^*$ and $\vecmat{b}^*=\argmax_{\vecmat{b}}\log p(\vecmat{y}|\vecmat{\mu},\vecmat{\xi}) - \frac{1}{2} \vecmat{b}^T\tilde{\vecmat{\Sigma}}^{-1} \vecmat{b}$ is the mode of $p(\vecmat{y}|\vecmat{\mu},\vecmat{\xi})\tilde p(\vecmat{b}|\vecmat{\theta})$, and $\vecmat{W}\in\mathbb{R}^{n\times n}$ is diagonal with $W_{ii}=-\frac{\partial^2 \log p(y_i| \mu_i,\vecmat{\xi})}{\partial \mu_i^2}\Big|_{\vecmat{\mu}=\vecmat{\mu}^*}$. The mode $\vecmat{b}^*$ is usually found with Newton's method \citep{williams2006gaussian}, and one iteration is given by
\begin{equation}\label{newton}
\vecmat{b}^{*t+1}
=\left(\vecmat{W} + \tilde{\vecmat{\Sigma}}^{-1}\right)^{-1}\left(\vecmat{W} \vecmat{b}^{*t}+\frac{\partial\log p(\vecmat{y}|\vecmat{\mu}^{*t},\vecmat{\xi})}{\partial \vecmat{b}}\right), ~~ t = 0, 1, \dots
\end{equation}
Since $p(\vecmat{y}|\vecmat{F},\vecmat{\theta},\vecmat{\xi}) = p(\vecmat{y}|\vecmat{F},\vecmat{b},\vecmat{\xi})\tilde p(\vecmat{b}|\vecmat{\theta}) / p(\vecmat{b}|\vecmat{y},\vecmat{\theta},\vecmat{\xi})$, a Vecchia-Laplace approximation for $p(\vecmat{y}|\vecmat{F},\vecmat{\theta},\vecmat{\xi})$ is equivalent to the approximation $p(\vecmat{b}|\vecmat{y},\vecmat{\theta},\vecmat{\xi}) \approx \mathcal{N}\left(\vecmat{b}^*,\left(\vecmat{W}+\tilde{\vecmat{\Sigma}}^{-1}\right)^{-1}\right)$. If a first- or second-order method is used for minimizing $L^{VLA}(\vecmat{y},\vecmat{F},\vecmat{\theta},\vecmat{\xi})$, gradients with respect to $\vecmat{\theta}$, $\vecmat{F}$, and $\vecmat{\xi}$ are needed. We include these in Appendix \ref{gradientsVLA}. Note that gradients with respect to $\vecmat{F}$ are used, for instance, for linear models since $\frac{\partial L^{VLA}(\vecmat{y},\vecmat{F},\vecmat{\theta},\vecmat{\xi})}{\partial \vecmat{\beta}}=\vecmat{X}^T\frac{\partial L^{VLA}(\vecmat{y},\vecmat{F},\vecmat{\theta},\vecmat{\xi})}{\partial \vecmat{F}}$ when $\vecmat{F}(\vecmat{X})=\vecmat{X\beta}$.

 


\subsubsection{Prediction with Vecchia-Laplace approximations}\label{prediction}
The goal is to predict either the latent variable $\vecmat{\mu}_p\in \mathbb{R}^{n_p}$ or the response variable $\vecmat{y}_p \in \mathbb{R}^{n_p}$ at $n_p$ locations $\vecmat{s}_{p,1},\dots,\vecmat{s}_{p,n_p} \in \mathcal{D}$ using the posterior predictive distributions $p(\vecmat{\mu}_p|\vecmat{y},\vecmat{\theta},\vecmat{\xi})=\int p(\vecmat{\mu}_p|\vecmat{b},\vecmat{\theta})p(\vecmat{b}|\vecmat{y},\vecmat{\theta},\vecmat{\xi})d\vecmat{b}$ and $p(\vecmat{y}_p|\vecmat{y},\vecmat{\theta},\vecmat{\xi})=\int p(\vecmat{y}_p|\vecmat{\mu}_p,\vecmat{\xi}) p(\vecmat{\mu}_p|\vecmat{y},\vecmat{\theta},\vecmat{\xi})d\vecmat{\mu}_p$, respectively. A Vecchia approximation is applied to the joint distribution of the GP at the training and prediction locations $(\vecmat{b},\vecmat{\mu}_p) \in \mathbb{R}^{n+n_p}$ \citep{katzfuss2020vecchia}. We focus on the case where the observed locations appear first in the ordering which has the advantage that the nearest neighbors found for estimation can be reused. We obtain $\vecmat{\mu}_p|\vecmat{b} \sim \mathcal{N}(\vecmat{\nu}_p,\vecmat{\Xi}_p)$, where $\vecmat{\nu}_p = \vecmat{F}(\vecmat{X}_p)-\vecmat{B}_p^{-1}\vecmat{B}_{po}\vecmat{b}$, $\vecmat{X}_p \in \mathbb{R}^{n_p\times p}$ is the predictive design matrix, and $\vecmat{\Xi}_p = \vecmat{B}_p^{-1}\vecmat{D}_p\vecmat{B}_p^{-T}$, where $\vecmat{B}_{po} \in \mathbb{R}^{n_p\times n}$, $\vecmat{B}_p \in \mathbb{R}^{n_p \times n_p}$, $\vecmat{D}_p^{-1} \in \mathbb{R}^{n_p \times n_p}$ are the following sub-matrices of the approximate precision matrix $\Tilde{Cov}\bigl((\vecmat{b},\vecmat{\mu}_p)\bigl)^{-1}$ \citep[see, e.g.,][Proposition 3.3]{sigrist2022gaussian}:
\begin{equation*}
\begin{split}
    \Tilde{Cov}\bigl((\vecmat{b},\vecmat{\mu}_p)\bigl)^{-1} =  \begin{pmatrix} \vecmat{B} & \vecmat{0}\\ \vecmat{B}_{po} &\vecmat{B}_p \end{pmatrix}^T \begin{pmatrix} \vecmat{D}^{-1} &\vecmat{0}\\ \vecmat{0} &\vecmat{D}_p^{-1} \end{pmatrix} \begin{pmatrix} \vecmat{B} & \vecmat{0}\\ \vecmat{B}_{po} &\vecmat{B}_p \end{pmatrix}.
\end{split}
\end{equation*}
Combining $p(\vecmat{b}|\vecmat{y},\vecmat{\theta},\vecmat{\xi}) \approx \mathcal{N}\left(\vecmat{b}^*,\left(\vecmat{W}+\tilde{\vecmat{\Sigma}}^{-1}\right)^{-1}\right)$ and $\vecmat{\mu}_p|\vecmat{b} \sim \mathcal{N}(\vecmat{\nu}_p,\vecmat{\Xi}_p)$, we obtain 
\begin{equation}\label{latentpostpredictive}
    \begin{split}
        p(\vecmat{\mu}_p|\vecmat{y},\vecmat{\theta},\vecmat{\xi}) \approx \mathcal{N}\left(\vecmat{\omega}_p,\vecmat{\Omega}_p\right), ~~\vecmat{\omega}_p &= \vecmat{F}(\vecmat{X}_p)-\vecmat{B}_p^{-1}\vecmat{B}_{po}\vecmat{b}^*,\\
        \vecmat{\Omega}_p &= \vecmat{B}_p^{-1} \vecmat{D}_p \vecmat{B}_p^{-T} + \vecmat{B}_p^{-1}\vecmat{B}_{po}(\vecmat{W} + \tilde{\vecmat{\Sigma}}^{-1})^{-1} \vecmat{B}_{po}^T \vecmat{B}_p^{-1}.
    \end{split}
\end{equation}
The integral in $p(\vecmat{y}_p|\vecmat{y},\vecmat{\theta},\vecmat{\xi})$ is analytically intractable for most likelihoods and must be approximated using numerical integration or by simulating from $p(\vecmat{\mu}_p|\vecmat{y},\vecmat{\theta},\vecmat{\xi})\approx \mathcal{N}\left(\vecmat{\omega}_p,\vecmat{\Omega}_p\right)$.

\section{Iterative methods for Vecchia-Laplace approximations}\label{VLAwithItMethods}
Parameter estimation and prediction with Vecchia-Laplace approximations involve several time-consuming operations. First, calculating linear solves $(\vecmat{W}+\tilde{\vecmat{\Sigma}}^{-1}) \vecmat{u} = \vecmat{b}$, $\vecmat{b} \in \mathbb{R}^n$ (i) in Newton's method for finding the mode, see \eqref{newton}, (ii) for implicit derivatives of the log-marginal likelihood, see Appendix \ref{gradientsVLA}, and (iii) for predictive distributions, see \eqref{latentpostpredictive}. Second, calculating the log-determinant $\log\det(\tilde{\vecmat{\Sigma}} \vecmat{W}+\vecmat{I}_n)$ in $L^{VLA}(\vecmat{y},\vecmat{F},\vecmat{\theta},\vecmat{\xi})$. And third, calculating trace terms such as $\tr\left((\vecmat{W}+\tilde{\vecmat{\Sigma}}^{-1})^{-1}\frac{\partial (\vecmat{W}+\tilde{\vecmat{\Sigma}}^{-1})}{\partial{\theta}_k}\right)$ for the derivatives of log-determinants, see Appendix \ref{gradientsVLA}. Traditionally, these operations are performed using a Cholesky decomposition of $\vecmat{W}+\tilde{\vecmat{\Sigma}}^{-1}$. We propose to do these operations with iterative methods as follows. 

For linear solves with the matrix $\vecmat{W}+\tilde{\vecmat{\Sigma}}^{-1}$, we use the preconditioned conjugate gradient (CG) method, which solves a linear system $(\vecmat{W}+\tilde{\vecmat{\Sigma}}^{-1}) \vecmat{u} = \vecmat{b}$ by iteratively doing matrix-vector multiplications with $\vecmat{W}+\tilde{\vecmat{\Sigma}}^{-1} = \vecmat{W}+\vecmat{B}^T\vecmat{D}^{-1}\vecmat{B}$. This can be done fast since $\vecmat{B}$ is a sparse triangular matrix. The convergence properties of the CG method can be drastically improved by preconditioning. This means solving the equivalent system 
\begin{equation}\label{pcls1}
    \vecmat{P}^{-\frac{1}{2}}(\vecmat{W}+\tilde{\vecmat{\Sigma}}^{-1})\vecmat{P}^{-\frac{T}{2}}\vecmat{P}^{\frac{T}{2}}\vecmat{u} = \vecmat{P}^{-\frac{1}{2}}\vecmat{b},
\end{equation}
where $\vecmat{P}$ is a symmetric positive definite matrix. The preconditioned CG algorithm is included in Appendix \ref{appendix:CGalgo}. The convergence rate of the preconditioned CG method depends on the condition number of $\vecmat{P}^{-\frac{1}{2}}(\vecmat{W}+\tilde{\vecmat{\Sigma}}^{-1})\vecmat{P}^{-\frac{T}{2}}$. In practice, numerical convergence is often achieved with $l \ll n$ iterations, and linear solves with $\vecmat{W}+\tilde{\vecmat{\Sigma}}^{-1}$ can thus be calculated in $O(ln)$ time. In Section \ref{section:conv_theory}, we derive convergence rates for the CG method with two preconditioners proposed in this article. Alternatively, since $\vecmat{W} + \tilde{\vecmat{\Sigma}}^{-1} = \tilde{\vecmat{\Sigma}}^{-1}(\tilde{\vecmat{\Sigma}} + \vecmat{W}^{-1})\vecmat{W}$, we can also solve
\begin{equation}\label{pcls2}
\vecmat{P}^{-\frac{1}{2}}(\tilde{\vecmat{\Sigma}} + \vecmat{W}^{-1})\vecmat{P}^{-\frac{T}{2}}\vecmat{P}^{\frac{T}{2}} \vecmat{W} \vecmat{u} = \vecmat{P}^{-\frac{1}{2}}\tilde{\vecmat{\Sigma}} \vecmat{b}.
\end{equation}
Multiplication with $\tilde{\vecmat{\Sigma}}$ also costs $O(n)$ as this only involves sparse triangular solves.


Different techniques have been proposed in the literature to calculate the log-determinant of a large, symmetric positive definite matrix. \citet{dong2017scalable} compare stochastic Lanczos quadrature (SLQ) \citep{ubaru2017fast}, Chebyshev approximations, and other methods and conclude that SLQ achieves the highest accuracy and fastest runtime. In this article, we use SLQ in combination with the preconditioned CG method to approximate $\log\det(\tilde{\vecmat{\Sigma}}\vecmat{W}+\vecmat{I}_n)$. Specifically, we first note that
\begin{equation}\label{logdetsplit}
    \log\det(\tilde{\vecmat{\Sigma}}\vecmat{W}+\vecmat{I}_n) = -\log\det(\vecmat{D}^{-1}) + \log\det(\vecmat{P}) + \log\det(\vecmat{P}^{-\frac{1}{2}}(\vecmat{W}+\tilde{\vecmat{\Sigma}}^{-1})\vecmat{P}^{-\frac{T}{2}}),
\end{equation}
and
\begin{equation}\label{logdetsplit2}
    \log\det(\tilde{\vecmat{\Sigma}}\vecmat{W}+\vecmat{I}_n) = \log\det(\vecmat{W}) + \log\det(\vecmat{P}) + \log\det(\vecmat{P}^{-\frac{1}{2}}(\vecmat{W}^{-1}+\tilde{\vecmat{\Sigma}})\vecmat{P}^{-\frac{T}{2}}).
\end{equation}
We then use SLQ to approximate the last terms in \eqref{logdetsplit} and \eqref{logdetsplit2}. In brief, SLQ uses the relation $\log(\det(\vecmat{A}))=\tr(\log(\vecmat{A}))$ for a symmetric positive-definite matrix $\vecmat{A}$ which allows for applying stochastic trace estimation (STE), and $\log(\vecmat{A})$ is approximated with the Lanczos algorithm. When using \eqref{logdetsplit}, we obtain the following approximation:
\begin{equation}\label{def_SLQ}
    \log\det(\vecmat{P}^{-\frac{1}{2}}(\vecmat{W}+\tilde{\vecmat{\Sigma}}^{-1})\vecmat{P}^{-\frac{T}{2}}) \approx \frac{1}{t} \sum_{i=1}^t\|\vecmat{P}^{-\frac{1}{2}}\vecmat{z}_i\|_2^2 \vecmat{e}_1^T \log(\tilde{\vecmat{T}}_i) \vecmat{e}_1 \approx \frac{n}{t} \sum_{i=1}^t \vecmat{e}_1^T \log(\tilde{\vecmat{T}}_i) \vecmat{e}_1,
\end{equation}
where $\vecmat{z}_1,\dots, \vecmat{z}_t \in \mathbb{R}^{n}$ are i.i.d. random vectors with $\mathbb{E}[\vecmat{z}_i]=0$ and $\mathbb{E}[\vecmat{z}_i\vecmat{z}_i^T]=\vecmat{P}$, $\tilde{\vecmat{Q}}_i \tilde{\vecmat{T}}_i \tilde{\vecmat{Q}}_i^T \approx \vecmat{P}^{-\frac{1}{2}}(\vecmat{W}+\tilde{\vecmat{\Sigma}}^{-1})\vecmat{P}^{-\frac{T}{2}}$ is the partial Lanczos decomposition obtained after $l$ steps of the Lanczos algorithm with $\vecmat{P}^{-\frac{1}{2}}\vecmat{z}_i$ as start vector, $\tilde{\vecmat{Q}}_i \in \mathbb{R}^{n\times l}$, $\tilde{\vecmat{T}}_i\in \mathbb{R}^{l\times l}$, and $\vecmat{e}_1=(1,0,\dots,0)^T$. A derivation of this approximation is given in Appendix \ref{deriv_SLQ}. In this article, we use Gaussian random vectors $\vecmat{z}_i\sim \mathcal{N}(\vecmat{0},\vecmat{P})$. In Section \ref{section:conv_theory}, we derive probabilistic error bounds for the approximation in \eqref{def_SLQ} which depend on the condition number of $\vecmat{P}^{-\frac{1}{2}}(\vecmat{W}+\tilde{\vecmat{\Sigma}}^{-1})\vecmat{P}^{-\frac{T}{2}}$. \citet{wenger2022preconditioning} highlight that a decomposition as in \eqref{logdetsplit} with a preconditioner performs variance reduction. Intuitively, the more accurate the preconditioner $\vecmat{P} \approx (\vecmat{W}+\tilde{\vecmat{\Sigma}}^{-1})$, the smaller $\log\det(\vecmat{P}^{-\frac{1}{2}}(\vecmat{W}+\tilde{\vecmat{\Sigma}}^{-1})\vecmat{P}^{-\frac{T}{2}})$, and thus the smaller the variance of its stochastic approximation.

As in \citet{gardner2018gpytorch}, we use a technique from \citet{saad2003iterative} to calculate the partial Lanczos tridiagonal matrices $\tilde{\vecmat{T}}_1,\dots,\tilde{\vecmat{T}}_t$ from the coefficients of the preconditioned CG algorithm when solving $(\vecmat{W}+\tilde{\vecmat{\Sigma}}^{-1})^{-1}\vecmat{z}_1,\dots, (\vecmat{W}+\tilde{\vecmat{\Sigma}}^{-1})^{-1}\vecmat{z}_t$ $t$ times, see Appendix \ref{appendix:CGalgo}. In doing so, we avoid running the Lanczos algorithm, which brings multiple advantages: Numerical instabilities are not an issue, storing $\tilde{\vecmat{Q}}_i$ is not necessary, and the linear solves $(\vecmat{W}+\tilde{\vecmat{\Sigma}}^{-1})^{-1}\vecmat{z}_i$ can be reused in the STE for calculating derivatives of the log-determinant:
\begin{equation}\label{grad_STE}
    \frac{\partial\log\det(\vecmat{W}+\tilde{\vecmat{\Sigma}}^{-1})}{\partial\theta_k} 
    \approx \frac{1}{t}\sum_{i=1}^t ((\vecmat{W}+\tilde{\vecmat{\Sigma}}^{-1})^{-1}\vecmat{z}_i)^T\frac{\partial (\vecmat{W}+\tilde{\vecmat{\Sigma}}^{-1})}{\partial\theta_k}\vecmat{P}^{-1} \vecmat{z}_i.
\end{equation}
The derivation of this approximation is given in Appendix \ref{deriv_SLQ}. Gradients can thus be calculated with minimal computational overhead once the likelihood is calculated. Further, we do variance reduction by using a control variate based on $\frac{\partial\log\det(\vecmat{P})}{\partial\theta_k}$:
\begin{equation*}
    \frac{\partial\log\det(\vecmat{A})}{\partial\theta_k} \approx \; c\; \tr\left(\vecmat{P}^{-1}\frac{\partial \vecmat{P}}{\partial\theta_k}\right) +\frac{1}{t}\sum_{i=1}^t \underbrace{\left(\vecmat{A}^{-1}\vecmat{z}_i\right)^{T}\frac{\partial \vecmat{A}}{\partial\theta_k}\vecmat{P}^{-1}\vecmat{z}_i}_\text{=:$h(\vecmat{z}_i)$} - c\underbrace{\left(\vecmat{P}^{-1} \vecmat{z}_i\right)^T\frac{\partial \vecmat{P}}{\partial\theta_k}\vecmat{P}^{-1}\vecmat{z}_i}_\text{=:$r(\vecmat{z}_i)$},
\end{equation*}
where $\vecmat{A}=\vecmat{W}+\tilde{\vecmat{\Sigma}}^{-1}$, $\frac{\partial\vecmat{A}}{\partial\theta_k}=\frac{\partial\tilde{\vecmat{\Sigma}}^{-1}}{\partial\theta_k}$, and $c=\widehat{\text{Cov}}(h(\vecmat{z}_i),r(\vecmat{z}_i))/\widehat{\text{Var}}(r(\vecmat{z}_i))$. This is similar to the variance reduction in \citet{wenger2022preconditioning} who, however, use $c=1$. In contrast, we consider $r(\vecmat{z}_i)$ as a control variate and use the optimal weight $c$. In Appendix \ref{PAppendix}, we show in detail how to calculate derivatives of the log-determinant with respect to $\vecmat{F}$, $\vecmat{\theta}$, and $\vecmat{\xi}$ using STE and variance reduction for different preconditioners presented in the following.

\subsection{Preconditioners} \label{section:preconditioners}
Preconditioners accelerate the convergence speed of the CG method and reduce the variance of the approximations of the log-determinant and its derivatives. To practically use a matrix $\vecmat{P}$ as a preconditioner, we need to construct it, perform linear solves with it, calculate $\log \det(\vecmat{P})$, and sample from $\mathcal{N}(\vecmat{0},\vecmat{P})$ in a fast way. Below, we introduce several preconditioners. The preconditioners in Sections \ref{sec_vadu}, \ref{sec_zirc}, and \ref{sec_other} are applied using the versions in \eqref{pcls1} and \eqref{logdetsplit}, whereas the preconditioner in Section \ref{sec_lrac} is used in the alternative expressions in \eqref{pcls2} and \eqref{logdetsplit2}. 

\subsubsection{VADU and LVA preconditioners}\label{sec_vadu}
The ``\textbf{V}ecchia \textbf{a}pproximation with \textbf{d}iagonal \textbf{u}pdate" (VADU) preconditioner is given by
\begin{equation}\label{P_VADU}
    \vecmat{P}_{\text{VADU}} = \vecmat{B}^T(\vecmat{W} + \vecmat{D}^{-1})\vecmat{B}.
\end{equation}
Similarly, we also consider the ``\textbf{l}atent \textbf{V}ecchia \textbf{a}pproximation" (LVA) preconditioner $\vecmat{P}_{\text{LVA}} = \vecmat{\vecmat{B}}^T \vecmat{D}^{-1} \vecmat{B}$. In the following, we shed some light on the properties of the VADU and LVA preconditioners using intuitive arguments. Note that
\begin{equation}\label{P_VADU_prec}
\vecmat{P}_{\text{VADU}}^{-\frac{1}{2}}(\vecmat{W}+\tilde{\vecmat{\Sigma}}^{-1})\vecmat{P}_{\text{VADU}}^{-\frac{T}{2}} = (\vecmat{W} + \vecmat{D}^{-1})^{-\frac{1}{2}}\vecmat{B}^{-T}\vecmat{W}\vecmat{B}^{-1}(\vecmat{W} + \vecmat{D}^{-1})^{-\frac{1}{2}} + (\vecmat{D}\vecmat{W} + \vecmat{I}_n)^{-1},
\end{equation}
where $\vecmat{I}_n$ denotes the identity matrix. Heuristically, $\vecmat{P}_{\text{VADU}}^{-\frac{1}{2}}(\vecmat{W}+\tilde{\vecmat{\Sigma}}^{-1})\vecmat{P}_{\text{VADU}}^{-\frac{T}{2}}$ is thus close to $\vecmat{I}_n$ and has a low condition number if either (i) $D_{i}W_{ii}$ is small and $W_{ii} + D_{i}^{-1}$ is large for all $i$ or (ii) $\vecmat{B}\approx \vecmat{I}_n$. In particular, if $\vecmat{B} = \vecmat{I}_n$, then $\vecmat{P}_{\text{VADU}}^{-\frac{1}{2}}(\vecmat{W}+\tilde{\vecmat{\Sigma}}^{-1})\vecmat{P}_{\text{VADU}}^{-\frac{T}{2}} = \vecmat{I}_n$. Concerning (i), $D_{i}$ defined in Section \ref{vecchia_approx} is the predictive variance of $b_i$ conditional on its neighbors, and $D_{i}$ is thus small when the covariance function $c_{\theta}(\cdot,\cdot)$ decreases slowly with the distance among neighboring locations. In particular, $D_{i}$ usually becomes smaller with increasing $n$ when holding the domain $\mathcal{D}$ fixed. Further, $W_{ii}$ defined in Section \ref{VLA} will usually not systematically increase with $n$. For instance, for a binary likelihood with a probit and logit link, $W_{ii}\leq 1$ and $W_{ii}\leq 0.25$, respectively. We thus expect this preconditioner to work well for large $n$. On the other hand, when the covariance function $c_{\theta}(\cdot,\cdot)$ decreases quickly with the distance among neighboring locations, $D_{i}$ is usually not small, but $\vecmat{B}$ will be close to the identity $\vecmat{I}_n$. 

Further, $\vecmat{P}_{\text{LVA}}^{-\frac{1}{2}}(\vecmat{W}+\tilde{\vecmat{\Sigma}}^{-1})\vecmat{P}_{\text{LVA}}^{-\frac{T}{2}} = \vecmat{D}^{\frac{1}{2}}\vecmat{B}^{-T}\vecmat{W}\vecmat{B}^{-1}\vecmat{D}^{\frac{1}{2}} + \vecmat{I}_n$ holds. Similar arguments as for $\vecmat{P}_{\text{VADU}}$ can thus be made that $\vecmat{P}_{\text{LVA}}$ is expected to perform well when the covariance function $c_{\theta}(\cdot,\cdot)$ decreases slowly with the distance among neighboring locations. However, when the covariance function $c_{\theta}(\cdot,\cdot)$ decreases quickly with the distance among neighboring locations and thus $\vecmat{B} \approx \vecmat{I}_n$, $\vecmat{P}_{\text{LVA}}$ is not expected to work as well as $\vecmat{P}_{\text{VADU}}$. This can be seen by considering $\vecmat{B} = \vecmat{I}_n$ for which $\vecmat{P}_{\text{LVA}}^{-\frac{1}{2}}(\vecmat{W}+\tilde{\vecmat{\Sigma}}^{-1})\vecmat{P}_{\text{LVA}}^{-\frac{T}{2}} = \vecmat{D}^{\frac{1}{2}}\vecmat{W}\vecmat{D}^{\frac{1}{2}} + \vecmat{I}_n$. For this reason, we generally prefer $\vecmat{P}_{\text{VADU}}$ over $\vecmat{P}_{\text{LVA}}$. In line with this, empirical experiments reported in Section \ref{CG_P_Comparison} show that $\vecmat{P}_{\text{VADU}}$ performs slightly better. 

In Section \ref{section:conv_theory}, we provide theory concerning the convergence of the preconditioned CG method and SLQ with the $\vecmat{P}_{\text{VADU}}$ and $\vecmat{P}_{\text{LVA}}$ preconditioners which support the above intuitive arguments. Calculating $\vecmat{P}_{\text{VADU}}^{-1}\vecmat{b}$ and $\vecmat{P}_{\text{LVA}}^{-1}\vecmat{b}$ has complexity $O(n)$ as it mainly consists of two triangular solves with $\vecmat{B}$, $\log\det(\vecmat{P}_{\text{VADU}}) = \log\det\left(\vecmat{W} + \vecmat{D}^{-1}\right)$ and $\log\det(\vecmat{P}_{\text{LVA}}) = \log\det\left(\vecmat{D}^{-1}\right)$,
and simulating from $\mathcal{N}(\vecmat{0},\vecmat{P}_{\text{VADU}})$ and $\mathcal{N}(\vecmat{0},\vecmat{P}_{\text{LVA}})$ has complexity $O(n)$.

\subsubsection{LRAC preconditioner}\label{sec_lrac}
A ``\textbf{l}ow-\textbf{r}ank \textbf{a}pproximation for the \textbf{c}ovariance matrix" (LRAC) preconditioner is given by
\begin{equation}\label{P_LRAC}
    \vecmat{P}_{\text{LRAC}} = \vecmat{L}_k\vecmat{L}_k^T + \vecmat{W}^{-1}, ~~ \text{where}~ \vecmat{L}_k \in \mathbb{R}^{n\times k} ~\text{and}~\tilde{\vecmat{\Sigma}} \approx \vecmat{L}_k\vecmat{L}_k^T.
\end{equation}
We use the pivoted Cholesky decomposition \citep{harbrecht2012low} to obtain $\vecmat{L}_k$ as this is a state-of-the-art approach for Gaussian likelihoods \citep{gardner2018gpytorch}. Since calculating entries of the approximated $\tilde{\vecmat{\Sigma}}$ is slow for large $n$, we use the original covariance matrix $\vecmat{\Sigma}$ to calculate $\vecmat{L}_k$. 
Running $k$ iterations of the pivoted Cholesky decomposition algorithm costs $O(nk^2)$, whereby only the diagonal elements of the matrix $\vecmat{\Sigma}$ and $k$ of its rows need to be accessed. Further, for linear solves and to calculate the log-determinant, we can use the Woodbury matrix identity and the matrix determinant lemma, and the costs thus scale linearly with $n$. We have also tried adding $\text{diag}(\tilde{\vecmat{\Sigma}}) - \text{diag}(\vecmat{L}_k\vecmat{L}_k^T)$ to the diagonal of $\vecmat{P}_{\text{LRAC}}$ such that $\tilde{\vecmat{\Sigma}} + \vecmat{W}^{-1}$ and the preconditioner have the same diagonal. However, this did not improve the properties of the preconditioner (results not tabulated).

\subsubsection{ZIRC preconditioner}\label{sec_zirc}
\citet{schaefer2021sparse} apply a zero fill-in incomplete Cholesky factorization as preconditioner for solving linear systems for Vecchia approximations with Gaussian likelihoods. We consider a slight modification of this denoted as ``\textbf{z}ero fill-in \textbf{i}ncomplete \textbf{r}everse \textbf{C}holesky" (ZIRC) preconditioner and given by $\vecmat{P}_{\text{ZIRC}} = \vecmat{L}^T\vecmat{L} \approx \vecmat{W}+\tilde{\vecmat{\Sigma}}^{-1}$, where $\vecmat{L} \in \mathbb{R}^{n\times n}$ is a sparse lower triangular matrix that has the same sparsity pattern as either $\vecmat{B}$ or $\tilde{\vecmat{\Sigma}}^{-1}$. See Algorithm \ref{ZIRC_algo} in Appendix \ref{appendix:ReverseIncompChol} for a precise definition. Unfortunately, we often observe breakdowns \citep{scott2014positive}., i.e., (clearly) negative numbers in the calculations of square roots, with both sparsity patterns. For the sparsity pattern of $\tilde{\vecmat{\Sigma}}^{-1}$, breakdowns occur less frequently, but it is slower. 

\subsubsection{Other preconditioners}\label{sec_other}
We consider three additional preconditioners. First, we use a diagonal preconditioner $\vecmat{P}_1$ with the diagonal given by $(\vecmat{P}_1)_{ii} = (\vecmat{W} + \vecmat{B}^T \vecmat{D}^{-1} \vecmat{B})_{ii}$. Next, we apply a low-rank pivoted Cholesky approximation \citep{harbrecht2012low} to the precision matrix $\tilde{\vecmat{\Sigma}}^{-1} \approx \vecmat{L}_k\vecmat{L}_k^T$, $\vecmat{L}_k \in \mathbb{R}^{n\times k}$, and obtain $\vecmat{P}_2 = \vecmat{W} + \vecmat{L}_k\vecmat{L}_k^T$. However, the spectrum of precision matrices is usually flatter than that of covariance matrices and the first few largest eigenvalues are less dominant. The final preconditioner we consider is a low-rank approximation to $\tilde{\vecmat{\Sigma}}^{-1}$ using sub-matrices of $\vecmat{B}$ and $\vecmat{D}^{-1}$. It is given by $\vecmat{P}_3 = \vecmat{W} + \vecmat{B}_k^T \vecmat{D}_k^{-1} \vecmat{B}_k$, where $\vecmat{B}_k \in \mathbb{R}^{k \times n}$ is the sub-matrix of $\vecmat{B}$ that contains the $k$ rows $S_k$ for which $\sum_{i\in S_k}D_{ii}^{-1}(\sum_{j=1}^{n} B_{ij}^2)$ is maximal and $\vecmat{D}_k^{-1} \in \mathbb{R}^{k \times k}$ is the corresponding sub-matrix of $\vecmat{D}^{-1}$. One can show that $\vecmat{B}_k^T \vecmat{D}_k^{-1} \vecmat{B}_k$ is the low-rank approximation of $\tilde{\vecmat{\Sigma}}^{-1}$ for which the trace norm $\tr(\tilde{\vecmat{\Sigma}}^{-1} - \vecmat{B}_k^T \vecmat{D}_k^{-1} \vecmat{B}_k)$ is minimal among all low-rank approximations constructed by taking sub-matrices of $\vecmat{B}$ and $\vecmat{D}^{-1}$. This holds since $\tr(\vecmat{B}^T \vecmat{D}^{-1} \vecmat{B})  - \tr(\vecmat{B}_k^T \vecmat{D}_k^{-1} \vecmat{B}_k)$ is minimal when $\tr(\vecmat{B}_k^T \vecmat{D}_k^{-1} \vecmat{B}_k) = \tr(\vecmat{D}_k^{-1} \vecmat{B}_k \vecmat{B}_k^T)$ is maximal and $ \tr(\vecmat{D}_k^{-1} \vecmat{B}_k \vecmat{B}_k^T) = \sum_{i=1}^{k} (\vecmat{D}_k^{-1})_{ii}\Bigl(\sum_{j=1}^{n}(\vecmat{B}_k)_{ij}^2\Bigl) = \sum_{i \in S_k} D_{ii}^{-1}\Bigl(\sum_{j=1}^{n} B_{ij}^2\Bigl)$.

\subsection{Convergence theory} \label{section:conv_theory}
Theorem \ref{conv_VADU} provides lower bounds on the smallest eigenvalues and upper bounds on the largest eigenvalues of the preconditioned matrices $\vecmat{P}^{-\frac{1}{2}}(\vecmat{W}+\tilde{\vecmat{\Sigma}}^{-1})\vecmat{P}^{-\frac{T}{2}}$ as well as upper bounds on the convergence rate of the CG method with the VADU and LVA preconditioners.

\begin{theorem}\label{conv_VADU}
Let $\vecmat{u}_{l+l'}$ denote the approximate solution of $(\vecmat{W}+\tilde{\vecmat{\Sigma}}^{-1})\vecmat{u} = \vecmat{b}$ in iteration $(l + l')$, $l,l'\in \mathbb{N}$ , $l<n$, of the preconditioned CG method, and let $\lambda_n(\vecmat{A})\leq \dots\leq \lambda_1(\vecmat{A})$ denote the eigenvalues of a symmetric matrix $\vecmat{A}\in \mathbb{R}^{n\times n}$. The following holds:
\begin{equation*}
\begin{split}
   \lambda_n(\vecmat{P}_{\text{VADU}}^{-\frac{1}{2}}(\vecmat{W}+\tilde{\vecmat{\Sigma}}^{-1})\vecmat{P}_{\text{VADU}}^{-\frac{T}{2}}) & > \min((D_iW_{ii} + 1)^{-1}),\\
\lambda_l(\vecmat{P}_{\text{VADU}}^{-\frac{1}{2}}(\vecmat{W}+\tilde{\vecmat{\Sigma}}^{-1})\vecmat{P}_{\text{VADU}}^{-\frac{T}{2}}) &\leq (\lambda_l(\tilde{\vecmat{\Sigma}})\max(W_{ii}) + 1)\max((D_iW_{ii} + 1)^{-1}), 
\end{split}
\end{equation*}
\begin{equation}\label{CG_conv_VADU}
\|\vecmat{u}_{l+l'} - \vecmat{u}\|_{\vecmat{M}} \leq 2 \left(\frac{\sqrt{\gamma^{\text{VADU}}_l} - 1}{\sqrt{\gamma^{\text{VADU}}_l} + 1}\right)^{l'}\|\vecmat{u}_0 - \vecmat{u}\|_{\vecmat{M}} \leq 2 \left(\frac{\sqrt{\tilde \gamma^{\text{VADU}}_l} - 1}{\sqrt{\tilde\gamma^{\text{VADU}}_l} + 1}\right)^{l'}\|\vecmat{u}_0 - \vecmat{u}\|_{\vecmat{M}},
\end{equation}
where 
\begin{equation*}
\begin{split}
\gamma^{\text{VADU}}_l &= \frac{(\lambda_{l+1}(\tilde{\vecmat{\Sigma}})\max(W_{ii}) + 1)\max((D_iW_{ii} + 1)^{-1})}{(\lambda_n(\tilde{\vecmat{\Sigma}})\min(W_{ii}) + 1)\min((D_iW_{ii} + 1)^{-1})}, ~~
\tilde\gamma^{\text{VADU}}_l = \frac{\lambda_{l+1}(\tilde{\vecmat{\Sigma}})\max(W_{ii}) + 1}{\min((D_iW_{ii} + 1)^{-1})},
\end{split}
\end{equation*}
and $\vecmat{M}= \vecmat{P}_{\text{VADU}}^{-\frac{1}{2}}(\vecmat{W}+\tilde{\vecmat{\Sigma}}^{-1})\vecmat{P}_{\text{VADU}}^{-\frac{T}{2}}$. For the LVA preconditioner, the following holds:
\begin{equation*}
\lambda_n(\vecmat{P}_{\text{LVA}}^{-\frac{1}{2}}(\vecmat{W}+\tilde{\vecmat{\Sigma}}^{-1})\vecmat{P}_{\text{LVA}}^{-\frac{T}{2}}) > 1,~~
\lambda_l(\vecmat{P}_{\text{LVA}}^{-\frac{1}{2}}(\vecmat{W}+\tilde{\vecmat{\Sigma}}^{-1})\vecmat{P}_{\text{LVA}}^{-\frac{T}{2}}) \leq \lambda_l(\tilde{\vecmat{\Sigma}})\max(W_{ii}) + 1,
\end{equation*}
\begin{equation}\label{CG_conv_LVA}
\|\vecmat{u}_{l+l'} - \vecmat{u}\|_{\vecmat{M}} \leq 2 \left(\frac{\sqrt{\gamma^{\text{LVA}}_l} - 1}{\sqrt{\gamma^{\text{LVA}}_l} + 1}\right)^{l'}\|\vecmat{u}_0 - \vecmat{u}\|_{\vecmat{M}} \leq 2 \left(\frac{\sqrt{\tilde \gamma^{\text{LVA}}_l} - 1}{\sqrt{\tilde\gamma^{\text{LVA}}_l} + 1}\right)^{l'}\|\vecmat{u}_0 - \vecmat{u}\|_{\vecmat{M}},
\end{equation}
where 
\begin{equation*}
\gamma^{\text{LVA}}_l = \frac{\lambda_{l+1}(\tilde{\vecmat{\Sigma}})\max(W_{ii}) + 1}{\lambda_n(\tilde{\vecmat{\Sigma}})\min(W_{ii}) + 1},~~
\tilde\gamma^{\text{LVA}}_l = \lambda_{l+1}(\tilde{\vecmat{\Sigma}})\max(W_{ii}) + 1,
\end{equation*}
and $\vecmat{M}= \vecmat{P}_{\text{LVA}}^{-\frac{1}{2}}(\vecmat{W}+\tilde{\vecmat{\Sigma}}^{-1})\vecmat{P}_{\text{LVA}}^{-\frac{T}{2}}$.
\end{theorem}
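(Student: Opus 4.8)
The plan is to build everything on the two explicit congruence identities already recorded in the text: \eqref{P_VADU_prec} for VADU and the analogous identity $\vecmat{P}_{\text{LVA}}^{-\frac{1}{2}}(\vecmat{W}+\tilde{\vecmat{\Sigma}}^{-1})\vecmat{P}_{\text{LVA}}^{-\frac{T}{2}} = \vecmat{D}^{\frac{1}{2}}\vecmat{B}^{-T}\vecmat{W}\vecmat{B}^{-1}\vecmat{D}^{\frac{1}{2}} + \vecmat{I}_n$ for LVA. In both cases the preconditioned matrix splits as a positive-definite part plus a diagonal part, so the whole theorem reduces to (a) two-sided eigenvalue bounds for such a split and (b) the standard convergence theory of CG, into which these bounds are substituted. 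Throughout I use that the congruence $\vecmat{P}^{-\frac{1}{2}}(\cdot)\vecmat{P}^{-\frac{T}{2}}$ has the same spectrum as the corresponding generalized eigenvalue problem, and I assume $\vecmat{W}\succ 0$ (strict log-concavity), which is what makes the lower bounds strict.

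For the eigenvalue bounds I proceed as follows. Writing the VADU matrix as $\vecmat{M}=\vecmat{M}_1+\vecmat{M}_2$ with $\vecmat{M}_1=(\vecmat{W}+\vecmat{D}^{-1})^{-\frac{1}{2}}\vecmat{B}^{-T}\vecmat{W}\vecmat{B}^{-1}(\vecmat{W}+\vecmat{D}^{-1})^{-\frac{1}{2}}\succ 0$ and $\vecmat{M}_2=(\vecmat{D}\vecmat{W}+\vecmat{I}_n)^{-1}$ diagonal, the smallest-eigenvalue bound is immediate from Weyl's inequality, $\lambda_n(\vecmat{M})\geq\lambda_n(\vecmat{M}_1)+\lambda_n(\vecmat{M}_2)>\lambda_n(\vecmat{M}_2)=\min((D_iW_{ii}+1)^{-1})$. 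For the upper bound on $\lambda_l(\vecmat{M})$, I again use Weyl, $\lambda_l(\vecmat{M})\leq\lambda_l(\vecmat{M}_1)+\lambda_1(\vecmat{M}_2)$, and bound $\lambda_l(\vecmat{M}_1)$ by a short chain of Loewner-monotone steps: replace $\vecmat{W}$ by $\max(W_{ii})\vecmat{I}_n$, use the similarity $\lambda_l(\vecmat{E}^T\vecmat{E})=\lambda_l(\vecmat{E}\vecmat{E}^T)$ with $\vecmat{E}=\vecmat{B}^{-1}(\vecmat{W}+\vecmat{D}^{-1})^{-\frac{1}{2}}$, bound $(\vecmat{W}+\vecmat{D}^{-1})^{-1}=\vecmat{D}^{\frac{1}{2}}(\vecmat{D}\vecmat{W}+\vecmat{I}_n)^{-1}\vecmat{D}^{\frac{1}{2}}\preceq\max((D_iW_{ii}+1)^{-1})\vecmat{D}$, and finally recognize $\vecmat{B}^{-1}\vecmat{D}\vecmat{B}^{-T}=\tilde{\vecmat{\Sigma}}$. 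This yields $\lambda_l(\vecmat{M}_1)\leq\lambda_l(\tilde{\vecmat{\Sigma}})\max(W_{ii})\max((D_iW_{ii}+1)^{-1})$, and adding $\lambda_1(\vecmat{M}_2)=\max((D_iW_{ii}+1)^{-1})$ gives exactly the stated bound. The sharper lower bound $\lambda_n(\vecmat{M})\geq(\lambda_n(\tilde{\vecmat{\Sigma}})\min(W_{ii})+1)\min((D_iW_{ii}+1)^{-1})$ follows by running the identical chain with all inequalities reversed (minima replacing maxima). The LVA case is the same argument but strictly simpler: there $\vecmat{M}_2=\vecmat{I}_n$ and the similarity step directly gives $\lambda_l(\vecmat{D}^{\frac{1}{2}}\vecmat{B}^{-T}\vecmat{B}^{-1}\vecmat{D}^{\frac{1}{2}})=\lambda_l(\tilde{\vecmat{\Sigma}})$, so no reshuffling of $\vecmat{D}$ against $(\vecmat{D}\vecmat{W}+\vecmat{I}_n)^{-1}$ is needed.

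For the CG bounds I invoke the standard fact that preconditioned CG is ordinary CG applied to $\vecmat{M}\tilde{\vecmat{u}}=\tilde{\vecmat{b}}$, together with the effective-condition-number refinement of the Chebyshev error bound: choosing the CG polynomial as a product of a degree-$l$ factor annihilating the $l$ largest eigenvalues of $\vecmat{M}$ and a degree-$l'$ shifted Chebyshev polynomial on the remaining spectrum $[\lambda_n(\vecmat{M}),\lambda_{l+1}(\vecmat{M})]$ gives $\|\vecmat{u}_{l+l'}-\vecmat{u}\|_{\vecmat{M}}\leq 2\bigl((\sqrt{\kappa_l}-1)/(\sqrt{\kappa_l}+1)\bigr)^{l'}\|\vecmat{u}_0-\vecmat{u}\|_{\vecmat{M}}$ with $\kappa_l=\lambda_{l+1}(\vecmat{M})/\lambda_n(\vecmat{M})$. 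Substituting the upper bound on $\lambda_{l+1}(\vecmat{M})$ (the $\lambda_l$-bound evaluated at index $l+1$) and the sharper lower bound on $\lambda_n(\vecmat{M})$ shows $\kappa_l\leq\gamma_l^{\text{VADU}}$; since $x\mapsto(\sqrt{x}-1)/(\sqrt{x}+1)$ is increasing this gives the first inequality in \eqref{CG_conv_VADU}, and using instead the crude lower bound $\lambda_n(\vecmat{M})>\min((D_iW_{ii}+1)^{-1})$ gives the second. The chain $\tilde\gamma_l^{\text{VADU}}\geq\gamma_l^{\text{VADU}}$ holds because $(D_iW_{ii}+1)^{-1}\leq 1\leq\lambda_n(\tilde{\vecmat{\Sigma}})\min(W_{ii})+1$. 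The LVA inequalities in \eqref{CG_conv_LVA} follow identically with $\gamma_l^{\text{LVA}},\tilde\gamma_l^{\text{LVA}}$.

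I expect the main obstacle to be the CG step rather than the spectral algebra. The two-sided eigenvalue bounds are a mechanical sequence of Weyl and Loewner-monotonicity estimates once one spots that pushing the diagonal $\vecmat{D}$ through the congruence reconstructs $\tilde{\vecmat{\Sigma}}=\vecmat{B}^{-1}\vecmat{D}\vecmat{B}^{-T}$, whereas the $(l+l')$-iteration bound with effective condition number $\lambda_{l+1}/\lambda_n$ requires the non-routine polynomial-deflation argument and care that the error is measured in the $\vecmat{M}$-norm of the preconditioned system. A secondary point to handle cleanly is the strictness of the lower bounds, which needs $\vecmat{W}\succ 0$ so that $\vecmat{M}_1$ and its LVA analogue are positive definite rather than merely positive semidefinite.
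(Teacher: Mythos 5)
Your proposal is correct, and every step checks out, but it reaches the spectral bounds by a genuinely different mechanism than the paper. You keep the additive split $\vecmat{M}=\vecmat{M}_1+\vecmat{M}_2$ from \eqref{P_VADU_prec} and combine Weyl's inequalities ($\lambda_n(\vecmat{M})\geq\lambda_n(\vecmat{M}_1)+\lambda_n(\vecmat{M}_2)$ and $\lambda_l(\vecmat{M})\leq\lambda_l(\vecmat{M}_1)+\lambda_1(\vecmat{M}_2)$) with Loewner-monotone chains, using $(\vecmat{W}+\vecmat{D}^{-1})^{-1}=\vecmat{D}^{\frac{1}{2}}(\vecmat{D}\vecmat{W}+\vecmat{I}_n)^{-1}\vecmat{D}^{\frac{1}{2}}$ and $\lambda_l(\vecmat{E}^T\vecmat{E})=\lambda_l(\vecmat{E}\vecmat{E}^T)$ to reconstruct $\tilde{\vecmat{\Sigma}}=\vecmat{B}^{-1}\vecmat{D}\vecmat{B}^{-T}$; your chains reproduce exactly the stated bounds, and your justification of $\gamma^{\text{VADU}}_l\leq\tilde\gamma^{\text{VADU}}_l$ via $\max((D_iW_{ii}+1)^{-1})\leq 1\leq \lambda_n(\tilde{\vecmat{\Sigma}})\min(W_{ii})+1$ is also what is needed. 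The paper instead absorbs the diagonal term into a single congruence, $\vecmat{M}=(\vecmat{D}\vecmat{W}+\vecmat{I}_n)^{-\frac{1}{2}}(\vecmat{D}^{\frac{1}{2}}\vecmat{B}^{-T}\vecmat{W}\vecmat{B}^{-1}\vecmat{D}^{\frac{1}{2}}+\vecmat{I}_n)(\vecmat{D}\vecmat{W}+\vecmat{I}_n)^{-\frac{1}{2}}$, and applies Ostrowski's theorem twice to obtain the exact representation $\lambda_l(\vecmat{M})=(\lambda_l(\tilde{\vecmat{\Sigma}})\tilde\theta_l+1)\theta_l$ with $\theta_l\in[\min((D_iW_{ii}+1)^{-1}),\max((D_iW_{ii}+1)^{-1})]$ and $\tilde\theta_l\in[\min(W_{ii}),\max(W_{ii})]$; that identity delivers the upper and lower bounds for every eigenvalue in one stroke, whereas your route needs separate monotone chains for the two directions but is more elementary, resting only on Weyl and congruence monotonicity rather than Ostrowski. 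For the CG inequalities the two arguments coincide in substance: the paper cites the effective-condition-number bound (Theorem B.4 of \citet{nishimura2022prior}, or Section 5.3 of \citet{van2003iterative}), while you sketch the proof of that cited result (a deflation polynomial annihilating the top $l$ eigenvalues times a shifted Chebyshev polynomial on $[\lambda_n,\lambda_{l+1}]$, combined with monotonicity of $x\mapsto(\sqrt{x}-1)/(\sqrt{x}+1)$); either is adequate. Finally, your explicit caveat that strictness of the lower bounds requires $\vecmat{W}\succ 0$ is a point the paper leaves implicit---its Ostrowski step likewise needs $\min(W_{ii})>0$ for the strict inequality---so flagging it is a small clarification rather than a discrepancy.
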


A proof of Theorem \ref{conv_VADU} can be found in Appendix \ref{proofs}. Note that $\gamma^{\text{VADU}}_1$ and $\gamma^{\text{LVA}}_1$ are upper bounds on the condition number of the preconditioned matrices $\vecmat{P}_{\text{VADU}}^{-\frac{1}{2}}(\vecmat{W}+\tilde{\vecmat{\Sigma}}^{-1})\vecmat{P}_{\text{VADU}}^{-\frac{T}{2}}$ and $\vecmat{P}_{\text{LVA}}^{-\frac{1}{2}}(\vecmat{W}+\tilde{\vecmat{\Sigma}}^{-1})\vecmat{P}_{\text{LVA}}^{-\frac{T}{2}}$, respectively. This theorem provides further arguments, in addition to those made in Section \ref{sec_vadu}, that both $\vecmat{P}_{\text{VADU}}$ and $\vecmat{P}_{\text{LVA}}$ are expected to perform well. The spectrum of covariance matrices $\tilde{\vecmat{\Sigma}}$ often contains relatively few large eigenvalues and many small ones, and $\lambda_{l+1}(\tilde{\vecmat{\Sigma}})$ is often small already for small $l$. Thus, if after $l$ iterations of the preconditioned CG method, the largest eigenvalues of $\vecmat{P}^{-\frac{1}{2}}(\vecmat{W}+\tilde{\vecmat{\Sigma}}^{-1})\vecmat{P}^{-\frac{T}{2}}$ are ``removed" and the ``residual condition number" $(\lambda_{l+1}(\vecmat{P}^{-\frac{1}{2}}(\vecmat{W}+\tilde{\vecmat{\Sigma}}^{-1})\vecmat{P}^{-\frac{T}{2}}))/(\lambda_n(\vecmat{P}^{-\frac{1}{2}}(\vecmat{W}+\tilde{\vecmat{\Sigma}}^{-1})\vecmat{P}^{-\frac{T}{2}}))\leq\gamma^{\text{VADU}/\text{LVA}}_l$ is small, the convergence becomes fast. In addition, Theorem \ref{conv_VADU} shows that the smallest eigenvalues of the preconditioned matrices are larger than $ \min((D_iW_{ii} + 1)^{-1})$ and $1$ for $\vecmat{P}_{\text{VADU}}$ and $\vecmat{P}_{\text{LVA}}$, respectively. In essence, the theorem thus shows that these preconditioners remove the small eigenvalues of $\vecmat{W}+\tilde{\vecmat{\Sigma}}^{-1}$ which can be very small and close to zero, in particular, for large $n$. This is beneficial because, first, the convergence of CG algorithms is more delayed by small eigenvalues compared to large ones \citep{nishimura2022prior} and because more small eigenvalues are typically added to $\vecmat{W}+\tilde{\vecmat{\Sigma}}^{-1}$ when $n$ increases. Further, we recall that $D_i$ decreases towards $0$ when $n$ goes to infinity under an in-fill asymptotic regime, and the difference in the convergence rates for $\vecmat{P}_{\text{VADU}}$ and $\vecmat{P}_{\text{LVA}}$ thus vanishes.

The following theorem provides stochastic additive error bounds for the SLQ approximation in \eqref{def_SLQ} with the VADU and LVA preconditionrers as well as a multiplicative error bound for $\vecmat{P}_{\text{LVA}}$. A proof of Theorem \ref{acc_SLQ} can be found in Appendix \ref{proofs}.

\begin{theorem}\label{acc_SLQ}
    Let $\kappa$ and $\lambda_n$ denote the condition number and smallest eigenvalue of $\vecmat{P}^{-\frac{1}{2}}(\vecmat{W}+\tilde{\vecmat{\Sigma}}^{-1})\vecmat{P}^{-\frac{T}{2}}$, $C_{nt} = \frac{Q_{\chi^2_{nt}}({1-\eta/2})}{nt}$, where $Q_{\chi^2_{nt}}(\cdot)$ is the quantile function of a $\chi^2-$distribution with $nt$ degrees of freedom, and $\widehat \Gamma = \frac{1}{t} \sum_{i=1}^t\|\vecmat{P}^{-\frac{1}{2}}\vecmat{z}_i\|_2^2 \vecmat{e}_1^T \log(\tilde{\vecmat{T}}_i) \vecmat{e}_1$ an SLQ approximation. If the SLQ method is run with $l\geq l(\kappa) = \frac{\sqrt{3\kappa}}{4}\log\left( \frac{C_{nt}20\log(2(\kappa+1))\sqrt{2\kappa+1}}{\epsilon}\right)$ preconditioned CG steps and $t\geq t(\kappa) =  \frac{32}{\epsilon^2} \log(\kappa + 1)^2\log\left(\frac{4}{\eta}\right)$ number of random vectors, the following holds for any preconditioner $\vecmat{P}$:
\begin{equation}\label{bound_slq}
    P(|\widehat \Gamma - \log\det(\vecmat{P}^{-\frac{1}{2}}(\vecmat{W}+\tilde{\vecmat{\Sigma}}^{-1})\vecmat{P}^{-\frac{T}{2}})|\leq \epsilon n) \geq 1 - \eta,
\end{equation}
 where $\epsilon, \eta \in (0,1)$. The bound \eqref{bound_slq} holds for the VADU and LVA preconditioners also if $l\geq l(\gamma^{\text{VADU}}_1)$ and $t\geq t(\gamma^{\text{VADU}}_1)$ and $l\geq l(\gamma^{\text{LVA}}_1)$ and $t\geq t(\gamma^{\text{LVA}}_1)$, respectively. 

    In addition, if $\lambda_n(\vecmat{P}^{-\frac{1}{2}}(\vecmat{W}+\tilde{\vecmat{\Sigma}}^{-1})\vecmat{P}^{-\frac{T}{2}})>1$, e.g., for $\vecmat{P}=\vecmat{P}_{\text{LVA}}$, $l\geq l^m(\kappa)$, where $ l^m(\kappa) = \frac{1}{2}\log\left(\frac{4C_{nt}(\log\left(\lambda_n(\kappa + 1 -1/\kappa)\right)+\pi)(\sqrt{2\kappa+1}+1)}{\log(\lambda_n)\epsilon}\right)/\log\left(\frac{\sqrt{2\kappa+1}+1}{\sqrt{2\kappa+1}-1}\right)$, and $t\geq \frac{32}{\epsilon^2}\log\left(\frac{4}{\eta}\right)$, then
    \begin{equation}\label{mult_bound_slq}
    P(|\widehat \Gamma - \log\det(\vecmat{P}^{-\frac{1}{2}}(\vecmat{W}+\tilde{\vecmat{\Sigma}}^{-1})\vecmat{P}^{-\frac{T}{2}})|\leq \epsilon\log\det(\vecmat{P}^{-\frac{1}{2}}(\vecmat{W}+\tilde{\vecmat{\Sigma}}^{-1})\vecmat{P}^{-\frac{T}{2}}) ) \geq 1 - \eta.
    \end{equation}
    For the LVA preconditioner, the bound \eqref{mult_bound_slq} holds also if $l\geq l^m(\gamma^{\text{LVA}}_1)$.
\end{theorem}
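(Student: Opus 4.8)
The plan is to bound the total error $|\widehat\Gamma-\log\det(\hat{\vecmat{A}})|$, where $\hat{\vecmat{A}}=\vecmat{P}^{-\frac{1}{2}}(\vecmat{W}+\tilde{\vecmat{\Sigma}}^{-1})\vecmat{P}^{-\frac{T}{2}}$, by splitting it into a deterministic \emph{Lanczos quadrature} error and a stochastic \emph{Hutchinson trace} error, controlling each on a high-probability event of mass $1-\eta/2$, and combining the two by a union bound. Writing $\vecmat{w}_i=\vecmat{P}^{-\frac{1}{2}}\vecmat{z}_i\sim\mathcal{N}(\vecmat{0},\vecmat{I}_n)$ (since $\vecmat{z}_i\sim\mathcal{N}(\vecmat{0},\vecmat{P})$), and using $\log\det(\hat{\vecmat{A}})=\tr(\log\hat{\vecmat{A}})$, the decomposition is
\begin{equation*}
\widehat\Gamma-\tr(\log\hat{\vecmat{A}})=\frac{1}{t}\sum_{i=1}^t\|\vecmat{w}_i\|_2^2\left(\vecmat{e}_1^T\log(\tilde{\vecmat{T}}_i)\vecmat{e}_1-\frac{\vecmat{w}_i^T\log(\hat{\vecmat{A}})\vecmat{w}_i}{\|\vecmat{w}_i\|_2^2}\right)+\left(\frac{1}{t}\sum_{i=1}^t\vecmat{w}_i^T\log(\hat{\vecmat{A}})\vecmat{w}_i-\tr(\log\hat{\vecmat{A}})\right),
\end{equation*}
where the first bracket is the per-vector quadrature error and the second is the Hutchinson trace error.

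For the quadrature term I would use that the $l$-step Lanczos output $\vecmat{e}_1^T\log(\tilde{\vecmat{T}}_i)\vecmat{e}_1$ is the $l$-node Gauss rule for the Riemann--Stieltjes integral defining $\vecmat{w}_i^T\log(\hat{\vecmat{A}})\vecmat{w}_i/\|\vecmat{w}_i\|_2^2$, hence exact on polynomials of degree $\le 2l-1$; the per-vector error is therefore at most $2\min_{p\in\mathcal{P}_{2l-1}}\max_{x\in[\lambda_n,\lambda_1]}|\log x-p(x)|$, the uniform best-approximation error of $\log$ on the spectral interval $[\lambda_n,\lambda_1]$ of $\hat{\vecmat{A}}$. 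I would then make this explicit via a Chebyshev/Bernstein-ellipse estimate: mapping $[\lambda_n,\lambda_1]$ to $[-1,1]$ sends the branch point of $\log$ at $0$ to a point outside the interval with Bernstein parameter $\rho=\frac{\sqrt{\kappa}+1}{\sqrt{\kappa}-1}$, giving geometric decay $\rho^{-2l}$ up to a $\kappa$-dependent prefactor of order $\log(2(\kappa+1))\sqrt{2\kappa+1}$. The random weights are handled by noting $\sum_{i=1}^t\|\vecmat{w}_i\|_2^2\sim\chi^2_{nt}$, so $\frac{1}{t}\sum_{i=1}^t\|\vecmat{w}_i\|_2^2\le nC_{nt}$ with probability $\ge 1-\eta/2$ by the definition of $C_{nt}$. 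Imposing that the product of $nC_{nt}$ and the best-approximation error be at most $\epsilon n$ (the constants being absorbed into the bound) and solving for $l$ reproduces $l(\kappa)$; for the multiplicative statement one instead normalizes by $\log\det(\hat{\vecmat{A}})\ge n\log(\lambda_n)$, valid when $\lambda_n>1$, which is what places $\log(\lambda_n)$ in the denominator of $l^m(\kappa)$.

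For the trace term I would apply a Gaussian Hutchinson concentration inequality. The eigenvalues of $\log\hat{\vecmat{A}}$ lie in an interval of width $\log\kappa\le\log(\kappa+1)$, which serves as the sub-exponential scale of each summand $\vecmat{w}_i^T\log(\hat{\vecmat{A}})\vecmat{w}_i$; a Bernstein-type bound then shows that $t\ge t(\kappa)=\frac{32}{\epsilon^2}\log(\kappa+1)^2\log(4/\eta)$ samples force this term below $\epsilon n$ with probability $\ge 1-\eta/2$, the two-sided level $\eta/2$ producing the argument $\log(4/\eta)$. When $\lambda_n>1$ the matrix $\log\hat{\vecmat{A}}$ is positive definite and $\tr(\log\hat{\vecmat{A}})>0$, so a \emph{relative}-error Gaussian Hutchinson bound applies, removing the spectral-width factor and yielding the cleaner $\frac{32}{\epsilon^2}\log(4/\eta)$ for \eqref{mult_bound_slq}. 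A union bound over the two events then gives \eqref{bound_slq} and \eqref{mult_bound_slq}. Finally, since $l(\cdot)$, $l^m(\cdot)$, and $t(\cdot)$ are increasing in their argument, replacing the true condition number $\kappa$ by the upper bounds $\gamma^{\text{VADU}}_1$ and $\gamma^{\text{LVA}}_1$ supplied by Theorem \ref{conv_VADU} only strengthens the hypotheses, which yields the stated specializations for $\vecmat{P}_{\text{VADU}}$ and $\vecmat{P}_{\text{LVA}}$.

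The hard part will be the quadrature step: producing a fully explicit bound on the uniform best-approximation error of $\log$ on $[\lambda_n,\lambda_1]$ with constants sharp enough to reproduce the precise forms of $l(\kappa)$ and $l^m(\kappa)$, because the branch point of $\log$ at the origin sits just outside the spectral interval and forces the Bernstein-ellipse analysis to track the $\kappa$-dependence of both the decay rate and the prefactor carefully. By contrast, once the spectral width $\log(\kappa+1)$ is identified, the trace-estimation concentration and the union-bound bookkeeping are routine.
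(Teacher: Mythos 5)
Your overall architecture coincides with the paper's proof: the same decomposition of $\widehat\Gamma-\tr\log(\vecmat{A})$, $\vecmat{A}=\vecmat{P}^{-\frac{1}{2}}(\vecmat{W}+\tilde{\vecmat{\Sigma}}^{-1})\vecmat{P}^{-\frac{T}{2}}$, into a per-vector Gauss--Lanczos quadrature error and a Hutchinson trace-estimation error; the same treatment of the Gaussian weights via $\sum_{i=1}^t\|\vecmat{P}^{-\frac{1}{2}}\vecmat{z}_i\|_2^2\sim\chi^2_{nt}$ and the quantile constant $C_{nt}$ on an event of probability $1-\eta/2$; the same use of $\tr\log(\vecmat{A})\geq n\log(\lambda_n)$ (positive since $\lambda_n>1$) to pass to the multiplicative bound \eqref{mult_bound_slq}; the reduction of the trace term to a relative-error Gaussian estimator bound (the paper invokes Theorem 3 of \citet{roosta2015improved} with $\varepsilon=\epsilon/2$, $\delta=\eta/2$, which is where the constant $32$ and the argument $4/\eta$ come from); a union bound; and monotonicity of $l(\cdot)$, $l^m(\cdot)$, $t(\cdot)$ combined with $\gamma_1^{\text{VADU}},\gamma_1^{\text{LVA}}\geq\kappa$ from Theorem \ref{conv_VADU} for the preconditioner specializations.

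The genuine gap is exactly in the step you identify as the hard part, and as written it fails. You place the Bernstein ellipse through the image of the singularity: after mapping $[\lambda_n,\lambda_1]$ to $[-1,1]$, the branch point of the logarithm sits at $-\frac{\kappa+1}{\kappa-1}$, which lies precisely on the ellipse with parameter $\rho=\frac{\sqrt{\kappa}+1}{\sqrt{\kappa}-1}$. On that ellipse, $\sup_{z\in E_\rho}|g(z)|=\infty$ for $g(z)=\log\bigl(\frac{\lambda_1-\lambda_n}{2}z+\frac{\lambda_1+\lambda_n}{2}\bigr)$, so the analyticity-based estimate you invoke (error $\lesssim M_\rho\,\rho^{-2l}/(1-\rho^{-1})$) has no finite prefactor, and your claimed prefactor of order $\log(2(\kappa+1))\sqrt{2\kappa+1}$ cannot be extracted from this ellipse. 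This is exactly the defect the paper points out in Theorem 4.1 of \citet{ubaru2017fast} --- the logarithm diverges at the boundary of their Bernstein ellipse --- and is why the paper's proof retreats to a strictly smaller ellipse with $\rho=\frac{\sqrt{2\kappa+1}+1}{\sqrt{2\kappa+1}-1}$, whose semimajor axis $\frac{\kappa+1}{\kappa}$ is strictly less than the branch-point distance $\frac{\kappa+1}{\kappa-1}$, so that $\sup_{z\in E_\rho}|g(z)|\leq M_\rho=\log\left(\lambda_n(\kappa+1-1/\kappa)\right)+\pi$ is finite. Combined with the corrections of \citet{cortinovis2021randomized} to \citet{ubaru2017fast} (the quadrature bound $C_\rho=4M_\rho/(1-\rho^{-1})$ without the erroneous factor $\frac{\lambda_1-\lambda_n}{2}$, and the replacement of $\frac{5\kappa\log(2(\kappa+1))}{\sqrt{2\kappa+1}}$ by $20\log(2(\kappa+1))\sqrt{2\kappa+1}$), this smaller ellipse is what produces the $\sqrt{2\kappa+1}$ terms, the rate $\log\bigl(\frac{\sqrt{2\kappa+1}+1}{\sqrt{2\kappa+1}-1}\bigr)$ in $l^m(\kappa)$, and the $\lambda_n$-dependent prefactor $\left(\log\left(\lambda_n(\kappa+1-1/\kappa)\right)+\pi\right)\left(\sqrt{2\kappa+1}+1\right)$ --- none of which are recoverable from your choice of $\rho$. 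The repair is mechanical once you adopt the smaller ellipse (your Gauss-exactness/best-approximation framing can be kept, though matching the stated constants requires the corrected integral-representation bound rather than the generic two-times-best-approximation estimate), but without it the quadrature bound, and hence \eqref{bound_slq} and \eqref{mult_bound_slq} with the stated $l(\kappa)$ and $l^m(\kappa)$, do not follow.
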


\subsection{Predictive covariance matrices} \label{section:iterativePred}
The computational bottleneck for prediction is the calculation of the posterior predictive (co-)variances $\vecmat{\Omega_p}$ in \eqref{latentpostpredictive}. These are required for predictive distributions of the latent and response variables and also for predictive means of response variables. Calculating $\vecmat{\Omega_p}$ requires solving $n_p$ linear equation systems with the matrix $\vecmat{W}+\tilde{\vecmat{\Sigma}}^{-1}$ and $\vecmat{B}_{po}^T\vecmat{B}_p^{-1}\in \mathbb{R}^{n \times n_p}$ as right-hand side. For large $n_p$ and $n$, solving these linear systems with the CG method can thus be computationally prohibitive. We propose two solutions for this: a simulation-based method and an approach based on preconditioned Lanczos algorithms. 

\subsubsection{Predictive (co-)variances using simulation} \label{section:simAlgo}
The predictive covariance matrix $\vecmat{\Omega_p}$ can be approximated using simulation with Algorithm \ref{simAlgo}. This algorithm results in an unbiased and consistent approximation for $\vecmat{\Omega_p}$, see Appendix \ref{proofs} for a proof of Proposition \ref{pred_var_sim}. The linear solves $(\vecmat{W}+\tilde{\vecmat{\Sigma}}^{-1})^{-1} \vecmat{z}_i^{(3)}$ can be done using the preconditioned CG method. The computational complexity of Algorithm \ref{simAlgo} is thus $O(sln + sn_p)$, where $l$ denotes the number of CG iterations and $s$ the number of simulation iterations. Furthermore, the algorithm can be trivially parallelized.

\begin{algorithm}[ht!]
    \caption{Approximate predictive covariance matrices using simulation}\label{simAlgo}
    \begin{algorithmic}[1]
        \Require Matrices $\vecmat{B}, \vecmat{D}^{-1}, \vecmat{W}, \vecmat{B}_p^{-1}, \vecmat{B}_{po}, \vecmat{D}_p$
        \Ensure Approximate predictive covariance matrix $\hat{\vecmat{\Omega}}_p$
        \For{$i \gets 1$ to $s$}
            \State{Sample $\vecmat{z}_i^{(1)}, \vecmat{z}_i^{(2)} \overset{\text{i.i.d.}}{\sim} \mathcal{N}(\vecmat{0}, \vecmat{I}_n)$ and set $\vecmat{z}_i^{(3)} \gets \vecmat{W}^{\frac{1}{2}}\vecmat{z}_i^{(1)} +\vecmat{B}^T\vecmat{D}^{-\frac{1}{2}}\vecmat{z}_i^{(2)}$}
            \State{$\vecmat{z}_i^{(4)} \gets \vecmat{B}_p^{-1}\vecmat{B}_{po}(\vecmat{W}+\tilde{\vecmat{\Sigma}}^{-1})^{-1} \vecmat{z}_i^{(3)}$}
        \EndFor
        \State{$\hat{\vecmat{\Omega}}_p\gets \vecmat{B}_p^{-1}\vecmat{D}_p\vecmat{B}_p^{-T} + \frac{1}{s}\sum_{i=1}^s\vecmat{z}_i^{(4)}(\vecmat{z}_i^{(4)})^T$}
    \end{algorithmic}
\end{algorithm}

\begin{proposition}\label{pred_var_sim}
Algorithm \ref{simAlgo} produces an unbiased and consistent estimator $\hat{\vecmat{\Omega}}_p$ for the predictive covariance matrix $\vecmat{\Omega_p}$ given in \eqref{latentpostpredictive}.
\end{proposition}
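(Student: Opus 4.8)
The plan is to verify unbiasedness by a direct second-moment computation and consistency by the law of large numbers applied entrywise. The crux is the construction of the intermediate vector $\vecmat{z}_i^{(3)}$ in Algorithm \ref{simAlgo}, which is designed precisely so that its covariance equals $\vecmat{W}+\tilde{\vecmat{\Sigma}}^{-1}$. First I would compute this covariance. Since $\vecmat{z}_i^{(1)}$ and $\vecmat{z}_i^{(2)}$ are independent standard Gaussian vectors and $\vecmat{W}^{\frac{1}{2}}$, $\vecmat{D}^{-\frac{1}{2}}$ are symmetric (being diagonal), the cross term vanishes and we have $\mathbb{E}[\vecmat{z}_i^{(3)}]=\vecmat{0}$ together with
\begin{equation*}
\text{Cov}(\vecmat{z}_i^{(3)}) = \vecmat{W}^{\frac{1}{2}}\vecmat{W}^{\frac{1}{2}} + \vecmat{B}^T\vecmat{D}^{-\frac{1}{2}}\vecmat{D}^{-\frac{1}{2}}\vecmat{B} = \vecmat{W} + \vecmat{B}^T\vecmat{D}^{-1}\vecmat{B} = \vecmat{W}+\tilde{\vecmat{\Sigma}}^{-1},
\end{equation*}
using $\tilde{\vecmat{\Sigma}}^{-1}=\vecmat{B}^T\vecmat{D}^{-1}\vecmat{B}$ from Section \ref{vecchia_approx}.

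Next I would push this through the linear map $\vecmat{z}_i^{(4)} = \vecmat{B}_p^{-1}\vecmat{B}_{po}(\vecmat{W}+\tilde{\vecmat{\Sigma}}^{-1})^{-1}\vecmat{z}_i^{(3)}$. Writing $\vecmat{A}=\vecmat{W}+\tilde{\vecmat{\Sigma}}^{-1}$, which is symmetric, the second-moment matrix of $\vecmat{z}_i^{(4)}$ becomes
\begin{equation*}
\mathbb{E}[\vecmat{z}_i^{(4)}(\vecmat{z}_i^{(4)})^T] = \vecmat{B}_p^{-1}\vecmat{B}_{po}\vecmat{A}^{-1}\,\text{Cov}(\vecmat{z}_i^{(3)})\,\vecmat{A}^{-1}\vecmat{B}_{po}^T\vecmat{B}_p^{-T} = \vecmat{B}_p^{-1}\vecmat{B}_{po}\vecmat{A}^{-1}\vecmat{B}_{po}^T\vecmat{B}_p^{-T},
\end{equation*}
since $\vecmat{A}^{-1}\,\text{Cov}(\vecmat{z}_i^{(3)})\,\vecmat{A}^{-1} = \vecmat{A}^{-1}\vecmat{A}\vecmat{A}^{-1} = \vecmat{A}^{-1}$. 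This is exactly the second (stochastic) term of $\vecmat{\Omega}_p$ in \eqref{latentpostpredictive}. Taking expectations in the output $\hat{\vecmat{\Omega}}_p$ and using linearity together with the fact that the first term $\vecmat{B}_p^{-1}\vecmat{D}_p\vecmat{B}_p^{-T}$ is deterministic then yields $\mathbb{E}[\hat{\vecmat{\Omega}}_p] = \vecmat{\Omega}_p$, which establishes unbiasedness.

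For consistency I would invoke the strong law of large numbers entrywise. The summands $\vecmat{z}_i^{(4)}(\vecmat{z}_i^{(4)})^T$ are i.i.d.\ across $i$ (since the pairs $(\vecmat{z}_i^{(1)},\vecmat{z}_i^{(2)})$ are), and each entry is a quadratic form in jointly Gaussian variables and therefore possesses finite first and second moments. Hence $\frac{1}{s}\sum_{i=1}^s \vecmat{z}_i^{(4)}(\vecmat{z}_i^{(4)})^T$ converges almost surely, entrywise, to its mean $\vecmat{B}_p^{-1}\vecmat{B}_{po}\vecmat{A}^{-1}\vecmat{B}_{po}^T\vecmat{B}_p^{-T}$ as $s\to\infty$, so that $\hat{\vecmat{\Omega}}_p\to\vecmat{\Omega}_p$ almost surely.

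I do not expect a serious obstacle here: the argument is routine once the covariance of $\vecmat{z}_i^{(3)}$ is identified. The only two points requiring care are recognizing why the construction of $\vecmat{z}_i^{(3)}$ reproduces $\vecmat{W}+\tilde{\vecmat{\Sigma}}^{-1}$ exactly (the cross term drops out by independence of $\vecmat{z}_i^{(1)}$ and $\vecmat{z}_i^{(2)}$, and $\vecmat{W}^{\frac{1}{2}}$, $\vecmat{D}^{-\frac{1}{2}}$ are symmetric), and verifying the finite-moment condition needed to apply the law of large numbers, which holds because the entries are quadratic forms in Gaussians.
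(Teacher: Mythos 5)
Your proposal is correct and follows essentially the same route as the paper's proof: identify that $\vecmat{z}_i^{(3)}\sim\mathcal{N}(\vecmat{0},\vecmat{W}+\tilde{\vecmat{\Sigma}}^{-1})$, deduce the covariance of $\vecmat{z}_i^{(4)}$ under the linear map, and conclude that the empirical second-moment matrix is unbiased and consistent. The only difference is that you spell out what the paper calls ``standard results'' (the vanishing cross term, linearity of expectation, and the strong law of large numbers with the finite-moment check), which makes your write-up slightly more self-contained but not substantively different.
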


\subsubsection{Predictive (co-)variances using preconditioned Lanczos algorithms} \label{section:predLanczos}
\citet{pleiss2018constant} use the Lanczos algorithm to approximate predictive covariance matrices for Gaussian process regression. This is currently a state-of-the-art approach in machine learning. The Lanczos algorithm factorizes a symmetric matrix $\vecmat{A} = \vecmat{Q}\vecmat{T}\vecmat{Q}^T$, where $\vecmat{Q} \in \mathbb{R}^{n\times n}$ is orthonormal and $\vecmat{T}\in \mathbb{R}^{n\times n}$ is a tridiagonal matrix. This decomposition is computed iteratively, and $k$ iterations produce an approximation $\vecmat{A} \approx \tilde{\vecmat{Q}}\tilde{\vecmat{T}} \tilde{\vecmat{Q}}^{T}$, where $\tilde{\vecmat{Q}} \in \mathbb{R}^{n\times k}$ contains the first $k$ columns of $\vecmat{Q}$ and $\tilde{\vecmat{T}} \in \mathbb{R}^{k\times k}$ the corresponding coefficients of $\vecmat{T}$. In the following, we adapt and extend the approach of \citet{pleiss2018constant} to Vecchia-Laplace approximations. We run $k$ steps of the Lanczos algorithm with the matrix $\vecmat{W}+\tilde{\vecmat{\Sigma}}^{-1}$ using the average of the column vectors of $\vecmat{B}_{po}^T\vecmat{B}_p^{-1}$ as initial value and approximate $(\vecmat{W}+\tilde{\vecmat{\Sigma}}^{-1})^{-1}\vecmat{B}_{po}^T\vecmat{B}_p^{-1} \approx \tilde{\vecmat{Q}} \tilde{\vecmat{T}}^{-1} \tilde{\vecmat{Q}}^{T}\vecmat{B}_{po}^T\vecmat{B}_p^{-1}$. This gives the approximation $\vecmat{\Omega}_p \approx \vecmat{B}_p^{-1}\vecmat{D}_p\vecmat{B}_p^{-T} + \vecmat{B}_p^{-1}\vecmat{B}_{po}\tilde{\vecmat{Q}} \tilde{\vecmat{T}}^{-1} \tilde{\vecmat{Q}}^{T}\vecmat{B}_{po}^T\vecmat{B}_p^{-1}$. Computing a rank $k$ Lanczos approximation requires one matrix-vector multiplication with $\vecmat{W}+\tilde{\vecmat{\Sigma}}^{-1}$ in each iteration. Since the Lanczos algorithm suffers from numerical stability issues due to loss of orthogonality, we use a full reorthogonalization scheme \citep{wu2000thick}. Computing the Lanczos approximation then has complexity $O(k^2n)$.

Let $\lambda_n\leq \dots\leq \lambda_1$ denote the eigenvalues of a symmetric matrix $\vecmat{A}\in \mathbb{R}^{n\times n}$, here $\vecmat{A} = \vecmat{W}+\tilde{\vecmat{\Sigma}}^{-1}$. The inverse function $(\vecmat{A})^{-1}$ is mostly determined by the small eigenvalues of $\vecmat{A}$. It is known that the Lanczos algorithm approximates the smallest eigenvalue $\lambda_n$ the faster, the smaller $\lambda_1 - \lambda_{n}$ and the larger $\frac{\lambda_{n-1} - \lambda_{n}}{\lambda_1 - \lambda_{n-1}}$ \citep{golub2013matrix}. Similar results can also be obtained for other eigenvalues. This suggests to apply the following way of preconditioning. Choose a symmetric preconditioner matrix $\vecmat{P}$ such that for $\vecmat{P}^{-\frac{1}{2}}\vecmat{A}\vecmat{P}^{-\frac{T}{2}}$, $\lambda_1 - \lambda_{n}$ is smaller and $\frac{\lambda_{n-1} - \lambda_{n}}{\lambda_1 - \lambda_{n-1}}$ is larger compared to $\vecmat{A}$. Then, apply the partial Lanczos algorithm to $\vecmat{P}^{-\frac{1}{2}}\vecmat{A}\vecmat{P}^{-\frac{T}{2}} \approx \tilde{\vecmat{Q}}\tilde{\vecmat{T}} \tilde{\vecmat{Q}}^{T}$ and correct for this by using $\vecmat{P}^{-\frac{T}{2}}\tilde{\vecmat{Q}} \tilde{\vecmat{T}}^{-1} \tilde{\vecmat{Q}}^{T}\vecmat{P}^{-\frac{1}{2}}\vecmat{B}_{po}^T\vecmat{B}_p^{-1} \approx  \vecmat{A}^{-1}\vecmat{B}_{po}^T\vecmat{B}_p^{-1}$ as approximation. Note that the Lanczos algorithm requires a symmetric matrix and in contrast to the preconditioned CG method, we need to explicitly calculate a factor $\vecmat{P}^{-\frac{1}{2}}$, where $\vecmat{P}=\vecmat{P}^{\frac{1}{2}}\vecmat{P}^{\frac{T}{2}}$. Below, we suggest two preconditioners. Experiments in Section \ref{section:LanczosVSsim} show that this form of preconditioning leads to more accurate predictive variance approximations.

\paragraph{Lanczos preconditioner 1}

We consider the preconditioner $\vecmat{P}_{L1} = \vecmat{B}^T\left(\vecmat{W} + \vecmat{D}^{-1}\right)\vecmat{B}$. Note that the matrix $\vecmat{P}_{L1}^{-\frac{1}{2}}\left(\vecmat{W} + \tilde{\vecmat{\Sigma}}^{-1}\right)\vecmat{P}_{L1}^{-\frac{T}{2}}$, which is factorized, is given in \eqref{P_VADU_prec}.

\paragraph{Lanczos preconditioner 2}

We write $\vecmat{W}+\tilde{\vecmat{\Sigma}}^{-1}=\tilde{\vecmat{\Sigma}}^{-1}(\tilde{\vecmat{\Sigma}} + \vecmat{W}^{-1})\vecmat{W}$ and use the Lanczos algorithm to approximate $\vecmat{P}_{L2}^{-\frac{1}{2}}(\tilde{\vecmat{\Sigma}} + \vecmat{W}^{-1})\vecmat{P}_{L2}^{-\frac{T}{2}} \approx \tilde{\vecmat{Q}} \tilde{\vecmat{T}} \tilde{\vecmat{Q}}^T$, where $\vecmat{P}_{L2} \in \mathbb{R}^{n\times n}$ is a diagonal preconditioner with entries $(\vecmat{P}_{L2})_{ii} = \left(\tilde{\vecmat{\Sigma}}+\vecmat{W}^{-1}\right)_{ii}$. We then use the approximation $\left(\vecmat{W}+\tilde{\vecmat{\Sigma}}^{-1}\right)^{-1}\vecmat{B}_{po}^T\vecmat{B}_p^{-1} \approx \vecmat{W}^{-1}\vecmat{P}_{L2}^{-\frac{T}{2}}\tilde{\vecmat{Q}} \tilde{\vecmat{T}}^{-1} \tilde{\vecmat{Q}}^T\vecmat{P}_{L2}^{-\frac{1}{2}}\tilde{\vecmat{\Sigma}} \vecmat{B}_{po}^T\vecmat{B}_p^{-1}$.

\subsection{Software implementation}\label{software}
The iterative methods presented in this article are implemented in the \if1\blind{\texttt{GPBoost} }\else{\#hidden\# }\fi library written in C++ with Python and R interface packages\if1\blind{, see \url{https://github.com/fabsig/GPBoost}}\fi.\footnote{Iterative methods can be enabled via the parameter \texttt{matrix\_inversion\_method = "iterative"} and  the preconditioner is chosen via the parameter \texttt{cg\_preconditioner\_type}.} For linear algebra calculations, we use the \texttt{Eigen} library version 3.4.99 and its sparse matrix algebra operations whenever possible. Multi-processor parallelization is done using \texttt{OpenMP}. 

\section{Experiments using simulated data}
In the following, we conduct several experiments using simulated data. First, we compare different preconditioners and methods for calculating predictive variances. We then analyze the proposed iterative methods concerning the accuracy for parameter estimation, prediction, likelihood evaluation, and runtime. Among other things, we compare this to calculations based on the Cholesky decomposition. Note, however, that the Cholesky decomposition is also not exact since round-off errors can accumulate when using finite precision arithmetic. Code to reproduce the simulated and the real-world data experiments of this article is available at \url{https://github.com/pkuendig/iterativeVL}.

\subsection{Experimental setting}\label{simSetting}
We simulate from a zero-mean GP with a Matérn covariance function with smoothness parameter $\nu=1.5$, $c_{\theta}(\vecmat{s},\vecmat{s}')=\sigma_1^2 \left(1 +\frac{\sqrt{3}||\vecmat{s}-\vecmat{s}'||}{\rho}\right)\exp{\left(-\frac{\sqrt{3}||\vecmat{s}-\vecmat{s}'||}{\rho}\right)}$, marginal variance $\sigma_1^2=1$, range $\rho=0.05$, and the locations $\vecmat{s}$ are sampled uniformly from $[0,1]^2$. The response variable $y \in \{0,1\}$ follows a Bernoulli likelihood with a logit link function. 

We measure both the accuracy of point and probabilistic predictions for the latent process $\vecmat{b}$. Point predictions in the form of predictive means are evaluated using the root mean squared error (RMSE), and for probabilistic predictions, we use the log score (LS) $-\sum_{i=1}^{n_p}\log\left(\mathcal{N}\left(b_{p,i};\omega_{p,i}, (\vecmat{\Omega}_p)_{ii}\right)\right)$, where $b_{p,i}$ is the simulated Gaussian process at the test locations, $\omega_{p,i}$ and $(\vecmat{\Omega}_p)_{ii}$ are the predictive latent means and variances given in \eqref{latentpostpredictive}, and $\mathcal{N}\left(b_{p,i};\omega_{p,i}, (\vecmat{\Omega}_p)_{ii}\right)$ denotes a Gaussian density evaluated at $b_{p,i}$. We also measure the runtime in seconds. All calculations are done on a laptop with an Intel i7-12800H processor and 32 GB of random-access memory. Calculations with Vecchia-Laplace approximations based on the Cholesky decomposition (`Cholesky-VL') and iterative methods (`Iterative-VL') and Laplace approximations without a Vecchia approximation are done using the \if1\blind{\texttt{GPBoost} }\else{\#hidden\# }\fi library version 1.4.0 compiled with the MinGW64 compiler version 10.0.0. In addition, we compare our methodology to the approach of \citet{zilber2021vecchia} (`GPVecchia') using the \texttt{R} package \texttt{GPVecchia} version 0.1.7.

We use a random ordering for the Vecchia approximation in Iterative-VL and Cholesky-VL since this often gives very good approximations \citep{guinness2018permutation}. For prediction, the observed locations appear first in the ordering, and the prediction locations are only conditioned on the observed locations. We could not detect any differences in our results when also conditioning on the prediction locations. For GPVecchia, we use the maxmin ordering in combination with the response-first ordering and full conditioning (``RF-full") for prediction as recommended by \citet{zilber2021vecchia} and \citet{katzfuss2020vecchia}. If not noted otherwise, we use $m=20$ number of neighbors for training and prediction.

For the iterative methods, if not stated otherwise, we use the VADU preconditioner given in \eqref{P_VADU}, set the number of random vectors for STE and SLQ to $t=50$, and use a rank of $k=50$ for the LRAC preconditioner. For the CG algorithm, we use a convergence tolerance of $10^{-2}$ for calculating marginal likelihoods and gradients, and a tolerance of $10^{-3}$ for predictive variances. Further decreasing the CG tolerance does not change the results. For predictive distributions, we use simulation-based predictive variances with $s=2000$ samples unless stated otherwise. Further, we adopt a sample average approximation \citep{kim2015guide} when maximizing the marginal likelihood. As optimization method, we use the L-BFGS algorithm, except for GPVecchia, for which we follow \citet{zilber2021vecchia} and use the Nelder-Mead algorithm provided in the \texttt{R} function \texttt{optim}. We have also tried using the BFGS algorithm in \texttt{optim} for GPVecchia, but this often crashed due to singular matrices. The calculation of gradients is not supported in \texttt{GPVecchia}. 

\subsection{Preconditioner comparison}\label{CG_P_Comparison}
First, we compare the proposed preconditioners with regard to the runtime and the variance of log-marginal likelihoods. We consider varying numbers of random vectors $t$ for the SLQ approximation. For each preconditioner and number of random vectors, the calculation of the likelihood is done at the true covariance parameters and repeated $100$ times with different random vectors. Since with some preconditioners computations are rather slow as many CG iterations are required, we use a relatively small sample size of $n=5'000$. Figure \ref{fig:P_comparison_CG_5000} reports the results. The runtime reported is the average wall-clock time. The VADU, LRAC, and LVA preconditioners clearly have the fastest runtime and a small variance. The ZIRC preconditioner has an even smaller variance but a longer runtime than the VADU, LRAC, and LVA preconditioners. For the ZIRC preconditioner, we use the sparsity pattern of $\tilde{\vecmat{\Sigma}}^{-1}$ since breakdowns occur when using the sparsity pattern of $\vecmat{B}$. For larger data sets, we also observe breakdowns when using the sparsity pattern of $\tilde{\vecmat{\Sigma}}^{-1}$. Note that the runtime of the ZIRC preconditioner is mainly determined by the construction of the preconditioner and not by the CG method itself, since this preconditioner is quite accurate in our experiments. The three remaining preconditioners have much larger runtimes and variances. In Figure \ref{fig:P_comparison_CG_100000} in Appendix \ref{appendix:add_res_sim}, we additionally report the results when repeating the experiment using a large sample size of $n=100'000$ for the VADU and LRAC preconditioners. We observe only minor differences between these two preconditioners. In Figures \ref{fig:P_comparison_CG_rho=0.01} and \ref{fig:P_comparison_CG_rho=0.25} in Appendix \ref{appendix:add_res_sim}, we also report analogous results when using a small and large range of $\rho=0.01$ and $\rho=0.25$, respectively. We find that the VADU preconditioner outperforms the LRAC preconditioner for the small range, while the LRAC preconditioner outperforms the VADU preconditioner for the large range. In contrast to the VADU, LVA, and ZIRC preconditioners, the LRAC has a tuning parameter corresponding to the rank $k$. \citet{maddox2021iterative} recommend a rank of $k=50$. In Figure \ref{fig:P_LRAC_ranks} in Appendix \ref{appendix:add_res_sim}, we briefly analyze the impact of the rank $k$. In line with \citet{maddox2021iterative}, we find that $k=50$ is a good choice. Note that the experiments in this subsection and the following Section \ref{section:LanczosVSsim} are conducted on only one simulated data set since we do not want to mix sampling variability and variability of the simulation-based approximations. However, the results do not change when using other samples (results not reported).

\begin{figure}[ht!]
\centering
    \includegraphics[width=0.7\linewidth]{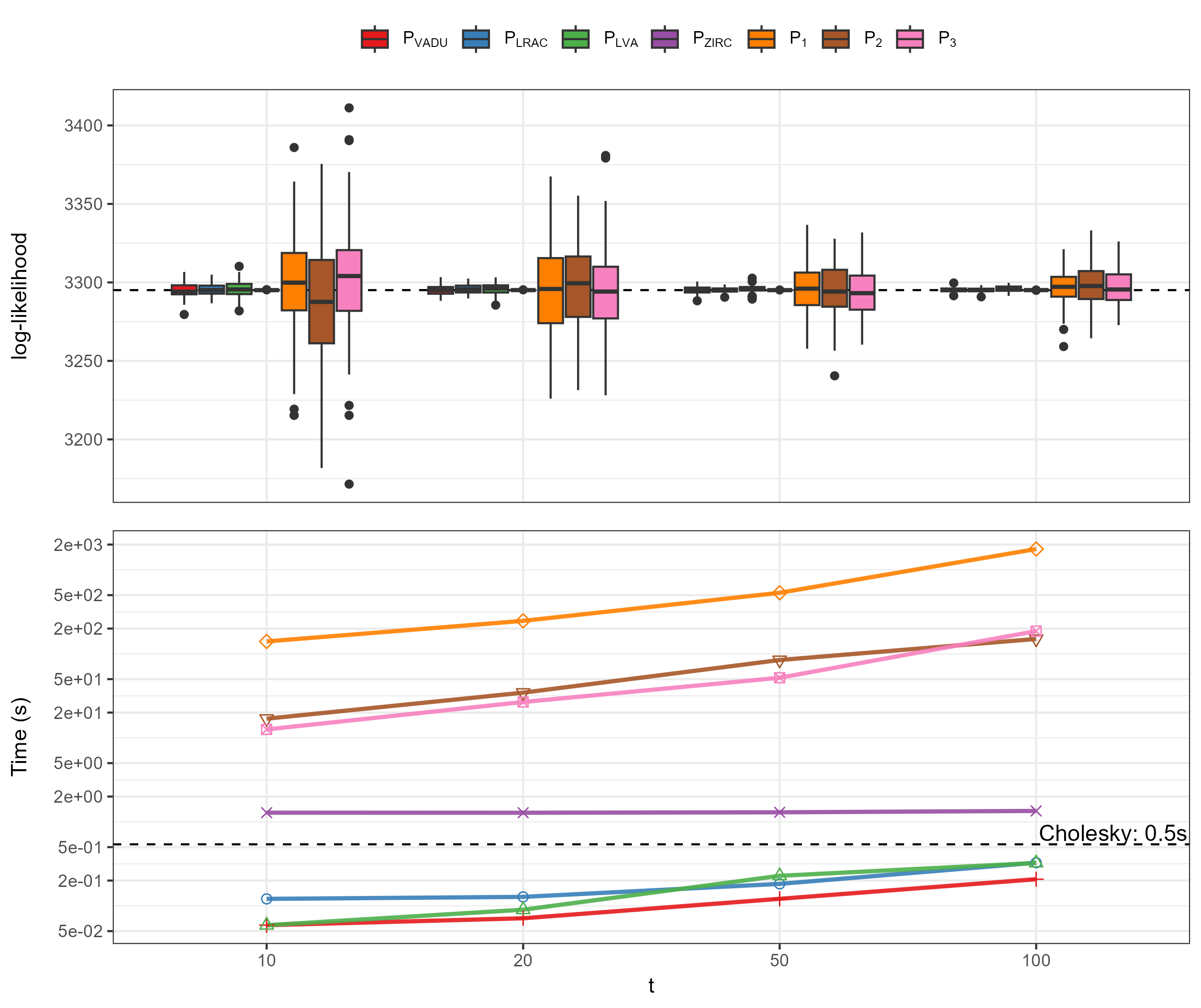}
    \caption{Negative log-marginal likelihood and runtime for different preconditioners and numbers of random vectors $t$. The dashed lines are the results for the Cholesky decomposition.}
    \label{fig:P_comparison_CG_5000}
\end{figure}

\subsection{Comparing simulation and Lanczos-based predictive variances} \label{section:LanczosVSsim}
Next, we compare the accuracy of the different approximations for predictive variances introduced in Section \ref{section:iterativePred} using simulated data with $n=n_p=100'000$ training and test points. We calculate the RMSE of the predictive variances compared to the ``exact" Cholesky-based results, and we measure the runtime for prediction, which includes the calculation of the mode, the latent predictive means, and the latent predictive variances. We use the true covariance parameters to calculate predictive distributions.

\begin{figure}[ht!]
\centering
    \includegraphics[width=0.7\linewidth]{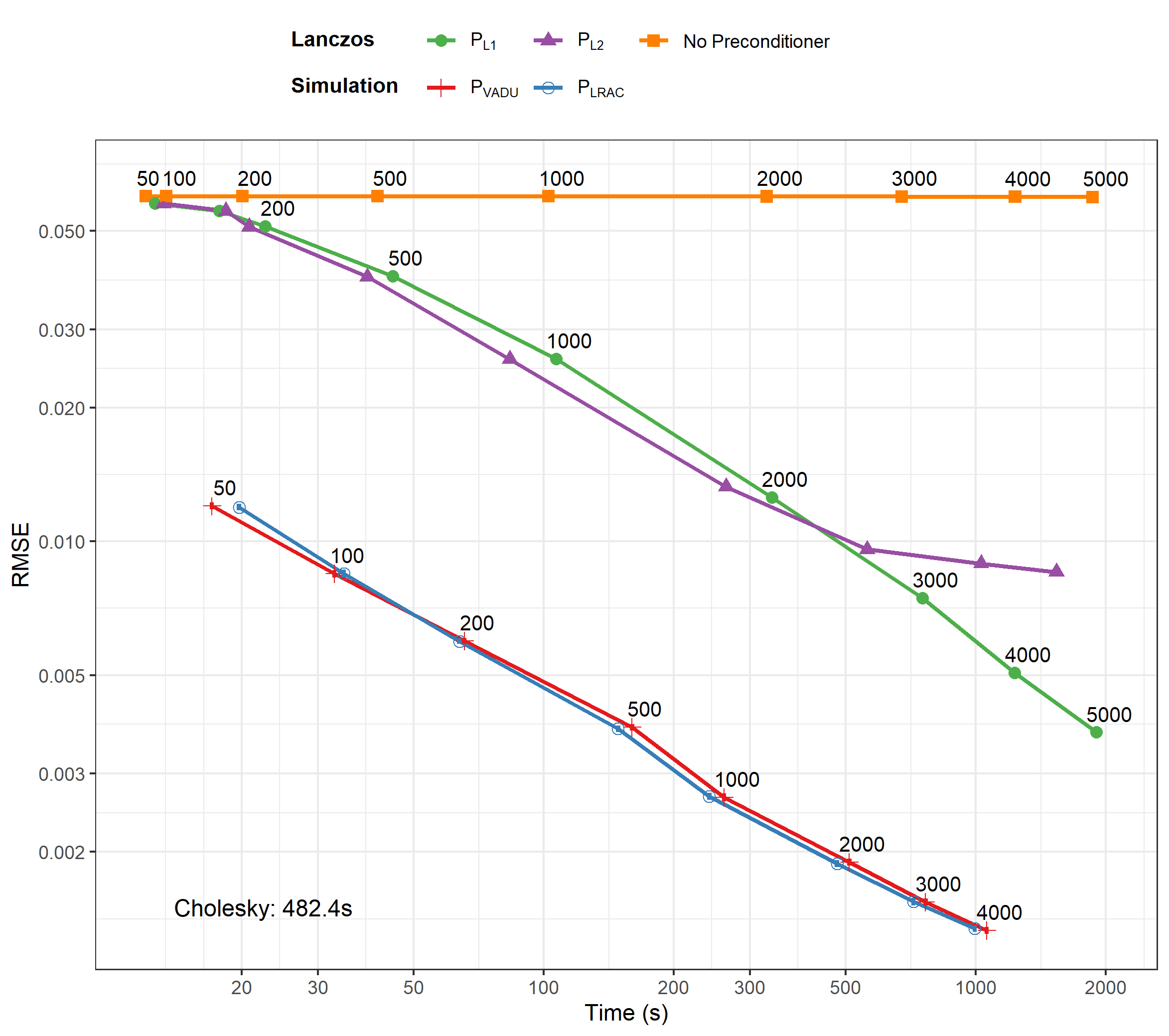}
    \caption{Comparison of simulation- and Lanczos-based methods for predictive variances. The number of random vectors $s$ and the Lanczos rank $k$ are annotated in the plot.}
    \label{fig:Lanczos_vs_simulation1e5}
\end{figure}

Figure \ref{fig:Lanczos_vs_simulation1e5} shows the RMSE versus the wall-clock time for different ranks $k$ for the Lanczos-based approximations and different numbers of random vectors $s$ for the simulation-based methods. For the latter, we average the results over $10$ independent repetitions for every $s$ and add whiskers to the plot representing confidence intervals for the RMSE obtained as $\pm 2 \times$ standard errors, but the corresponding whiskers are not visible since they are very small. Figure \ref{fig:sim1e5} in Appendix \ref{appendix:add_res_sim_var} reports box-plots of the RMSE of the stochastic approximation versus different $s$'s to visualize the variability. The simulation-based methods are often faster by more than a factor of ten for the same accuracy compared to the Lanczos-based approximations. Further, as the rank of the Lanczos decomposition increases, the RMSE decreases faster when a preconditioner $\vecmat{P}_{L1}$ or $\vecmat{P}_{L2}$ is used compared to when no preconditioning is applied. For the simulation-based predictions, the choice of the preconditioner for the CG method does not play a major role. In Figure \ref{fig:latentpostpredictivepath} in Appendix \ref{appendix:add_res_sim_var}, we additionally show the predictive variance $(\vecmat{\Omega}_p)_{i,i}$ for a single data point for simulation- and Lanczos-based approximations with different numbers of random vectors $s$ and Lanczos ranks $k$. The Lanczos-based predictions systematically underestimate the variance, but the bias decreases as $k$ increases. As expected, simulation-based variances are unbiased. Very similar results are obtained for other prediction points (results not reported). Figure \ref{fig:Lanczos_vs_simulation5000} in Appendix \ref{appendix:add_res_sim_var} also reports qualitatively very similar results for a smaller sample size of $n=n_p=5000$.

\subsection{Parameter estimation}\label{param_est}
In the following, we analyze the properties of covariance parameter estimates obtained as maximizers of the log-marginal likelihood. Estimation is repeated $100$ times on data sets with a size of $n=20'000$. This is, approximately, the largest $n$ such that we can run GPVecchia and Cholesky-based calculations in a reasonable amount of time. Figure \ref{fig:estimates20000} reports the estimates for the marginal variance and range parameter and Table \ref{table:estimates20000} in Appendix \ref{appendix:add_res_sim_est} reports the RMSE and bias. Comparing Cholesky-based and iterative methods-based estimates, we observe almost no differences. GPVecchia, on the other hand, produces estimates with large upward biases and large RMSEs. 
\begin{figure}[ht!]
    \centering
    \includegraphics[width=\linewidth]{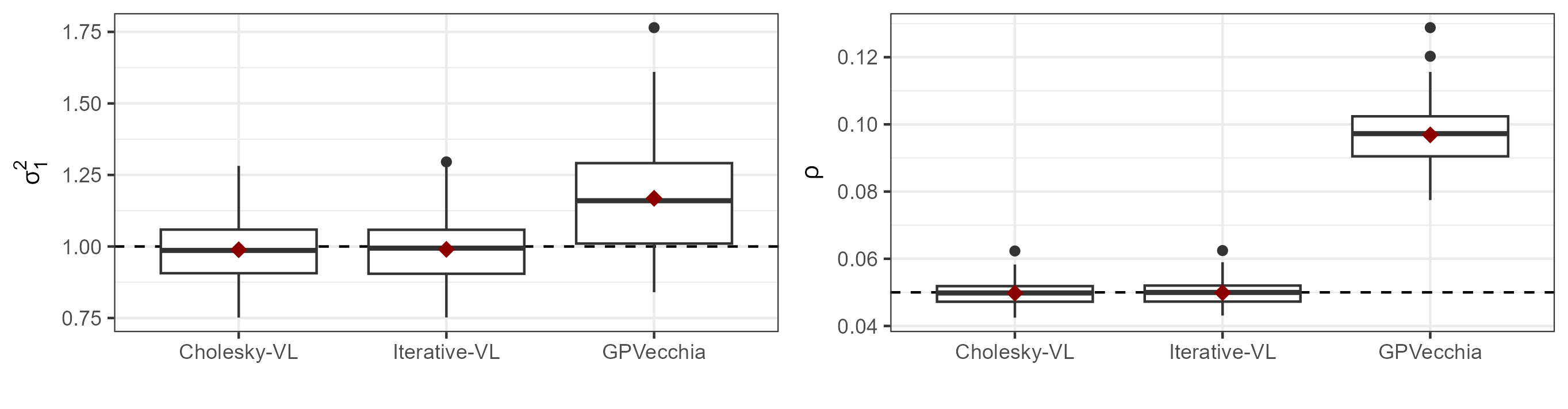}
    \caption{Estimated marginal variance $\sigma_1^2$ and range $\rho$ parameter. The red rhombi represent means. The dotted lines indicate the true values.}
    \label{fig:estimates20000}
\end{figure}

\subsection{Prediction accuracy}
Next, we analyze the prediction accuracy using simulated training and test data with $n=n_p=20'000$ and $100$ simulation repetitions. Predictive distributions are calculated using estimated covariance parameters. Figure \ref{fig:predictionRMSE20000} and Table \ref{table:prediction20000} in Appendix \ref{appendix:add_res_sim_est} report the results. We observe that Cholesky- and iterative methods-based predictions have virtually the same accuracy in terms of both the RMSE and the log score. Predictive means obtained with GPVecchia are less accurate. Further, GPVecchia produces negative predictive variances in each simulation run, and we could thus not calculate the log score. In Section \ref{varying_m}, we additionally analyze predictions for other sample sizes and numbers of neighbors using true covariance parameters instead of estimated ones, and we obtain very similar results.
\begin{figure}[ht!]
    \centering
    \includegraphics[width=0.55\linewidth]{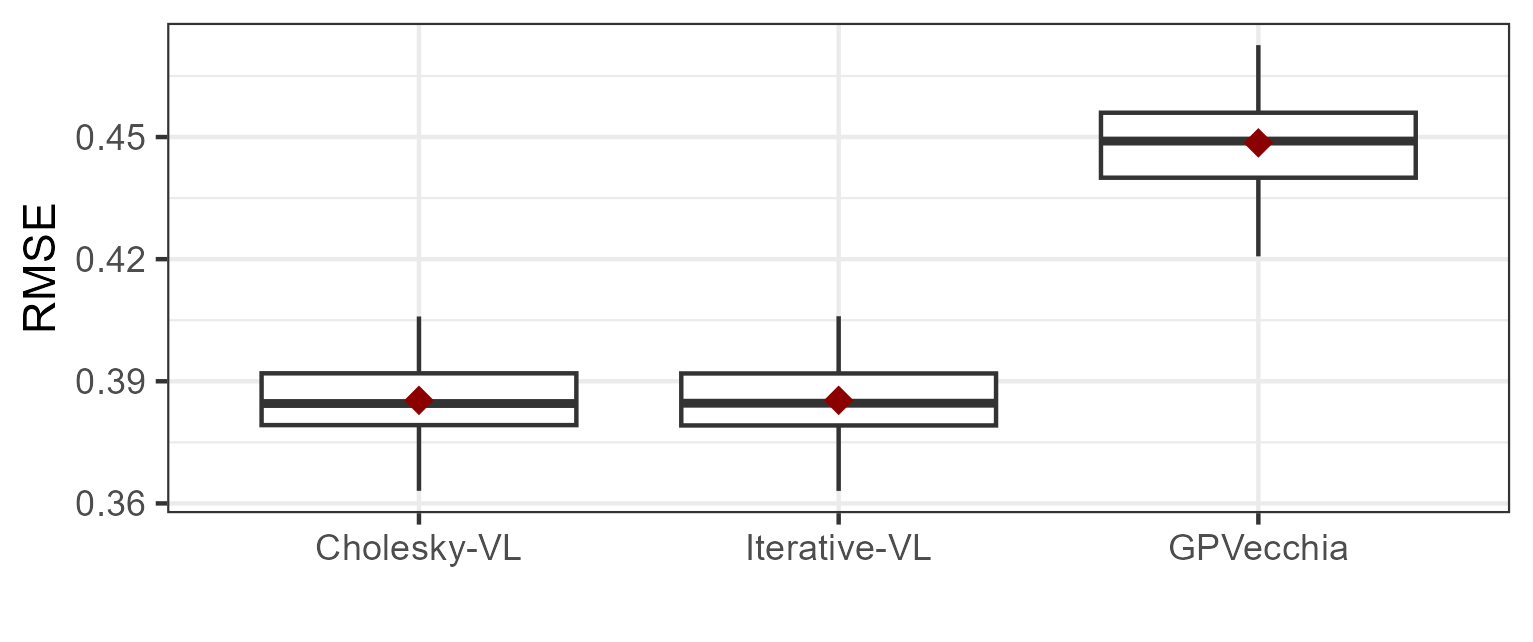}
    \includegraphics[width=0.42\linewidth]{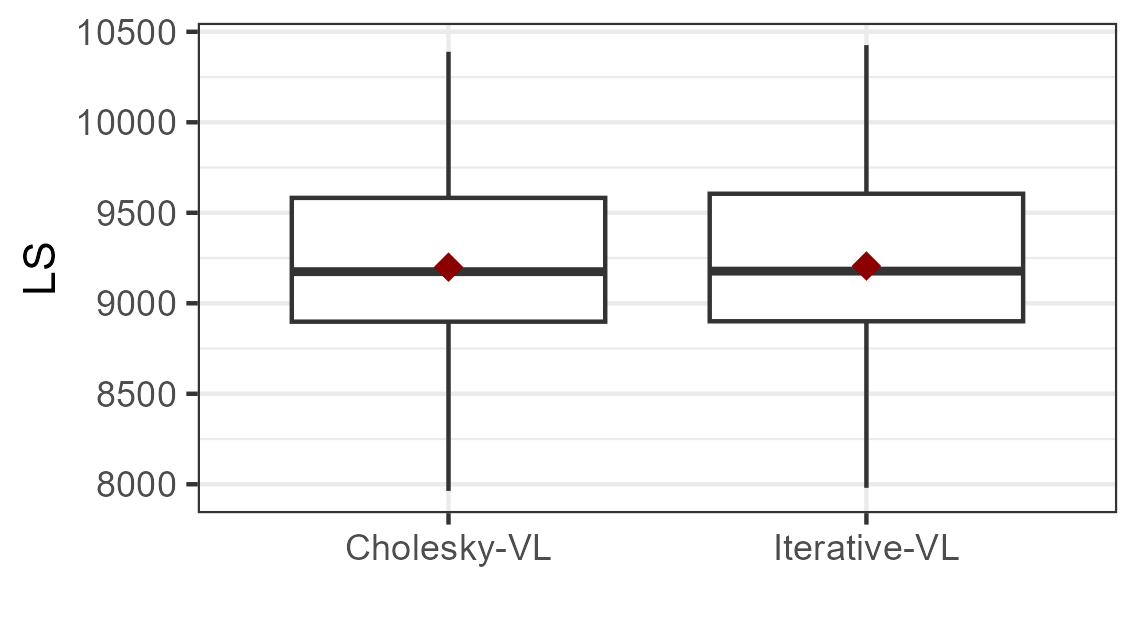}
    \caption{RMSE for predictive means and log score (LS) for probabilistic predictions. For GPVecchia, the log score could not be calculated due to negative predictive variances.}
    \label{fig:predictionRMSE20000}
\end{figure}

\subsection{Likelihood evaluation}
We simulate $100$ times data sets with a sample size of $n=20'000$ and evaluate the likelihood at the true parameters. We then calculate relative differences in the negative log-marginal likelihood of the Iterative-VL and GPVecchia approximations compared to the ``exact" results obtained with a Cholesky decomposition. Figure \ref{fig:relLikelihood20000} reports the results. The log-marginal likelihood calculated with iterative methods shows only minor deviations compared to Cholesky-based calculations with no systematic upward or downward bias. GPVecchia, on the other hand, overestimates the negative log-marginal likelihood.
\begin{figure}[ht!]
    \centering
    \includegraphics[width=0.4\linewidth]{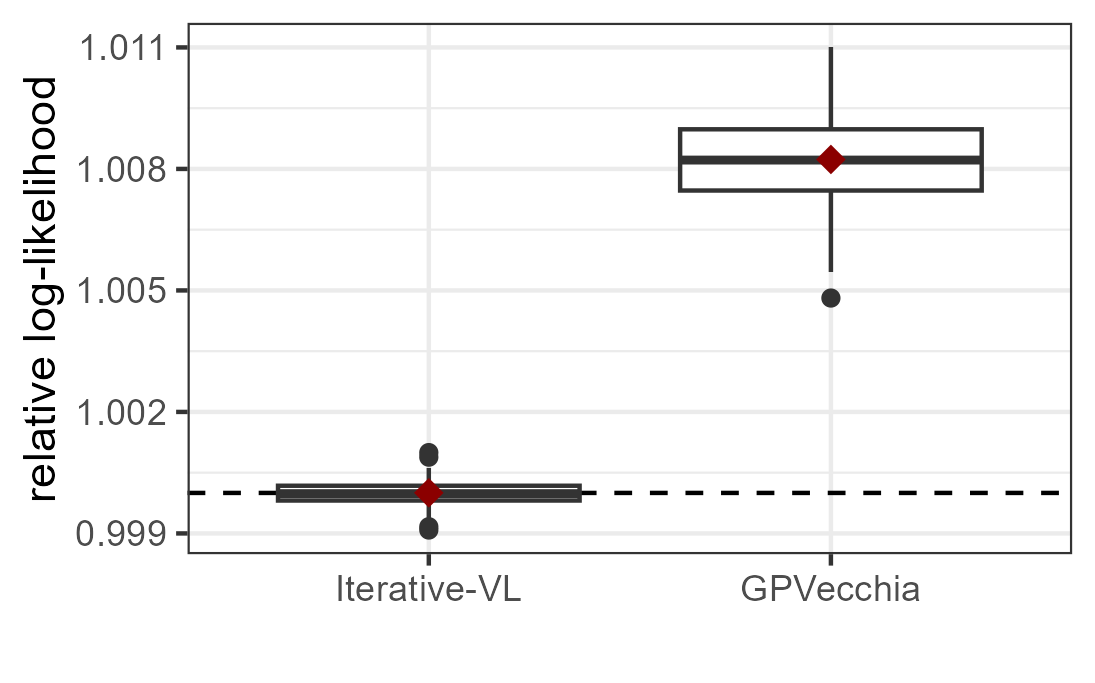}
    \caption{Relative differences in log-likelihood compared to Cholesky-based calculations.}
    \label{fig:relLikelihood20000}
\end{figure}

\subsection{Runtime comparison}
In the following, we compare the runtime to calculate marginal likelihoods of the different methods. We consider sample sizes ranging from $n=1'000$ up to $n=200'000$ and evaluate the likelihood at the true covariance parameters.
\begin{figure}[ht!]
\centering
    \includegraphics[width=0.7\linewidth]{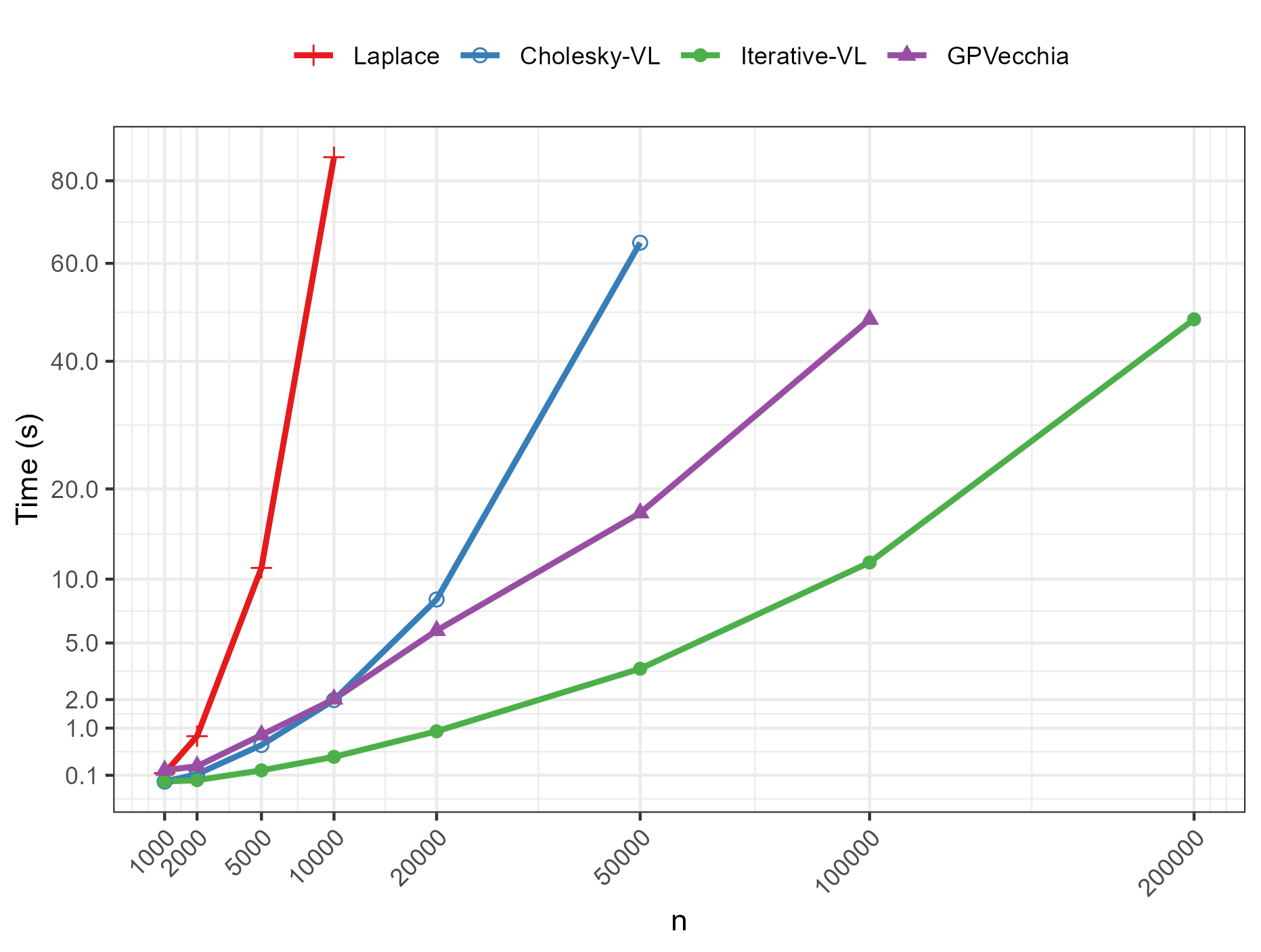}
    \caption{Runtime for one evaluation of the marginal likelihood for varying sample sizes.}
    \label{fig:runtime}
\end{figure}
The results in Figure \ref{fig:runtime} show that the iterative methods have the smallest runtime for all sample sizes $n \geq 2000$. For large $n$'s, e.g., $n=20'000$ and $n=50'000$, iterative methods are faster than Cholesky-based calculations by approximately one order of magnitude. GPVecchia has the second-fastest runtime for large sample sizes but is slower than the iterative methods. In Table \ref{table:runtime20000}, we also report the average runtime for estimating covariance parameters on data with $n=20'000$ as done in Section \ref{param_est}. Iterative-VL is more than seventeen times faster than GPVecchia and approximately six times faster than Cholesky-VL.

\begin{table}[ht!]
    \centering
    \begin{tabular}{llll}
        \hline
        \hline
        Cholesky-VL & Iterative-VL & GPVecchia\\ 
        \hline
        117.3s & 18.1s & 314.9s\\ 
        \hline
        \hline
    \end{tabular}
    \caption{Average wall clock times (s) for estimation on simulated data with $n=20'000$.} 
    \label{table:runtime20000}
\end{table}

\subsection{Varying number of nearest neighbors and sample sizes}\label{varying_m}
Next, we compare the iterative methods and GPVecchia for varying numbers of neighbors $m$ in the Vecchia approximation. We consider the marginal likelihood, prediction accuracy, and runtime for the likelihood evaluation for $m \in \{10,20,40,60\}$ on simulated data with $n=n_p=5'000$ and $n=n_p=50'000$. The likelihood is evaluated at the true covariance parameters.

\begin{figure}[ht!]
\centering
    \includegraphics[width=\linewidth]{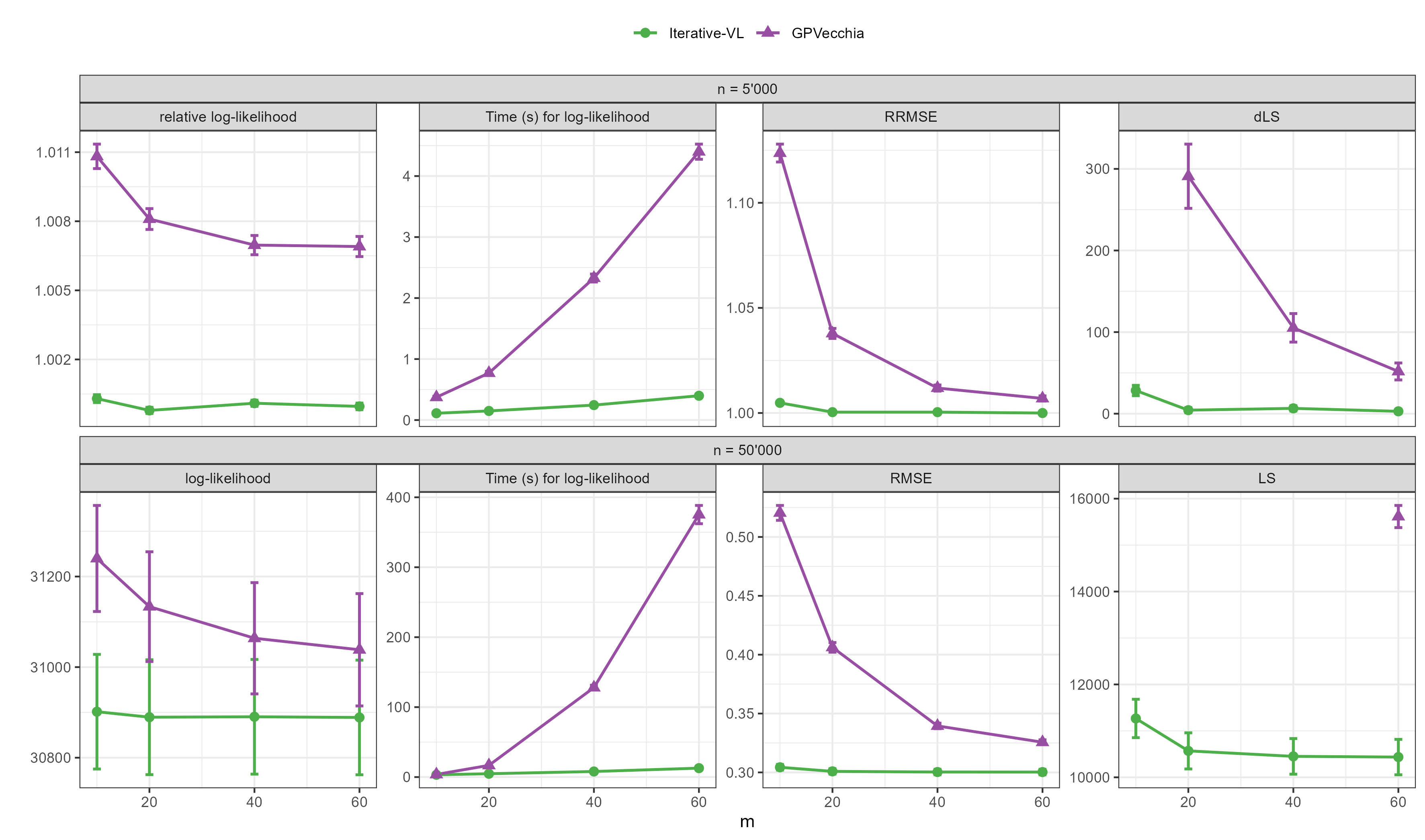}
    \caption{Negative log-marginal likelihood, RMSE, log score (LS), and runtime for likelihood calculation for varying numbers of neighbors $m$ and sample sizes $n=n_p$. For $n=n_p=5'000$, we report the relative difference in the likelihood and RMSE (RRMSE) and the difference in the log score (dLS) compared to a Laplace approximation. The whiskers represent confidence intervals obtained as $\pm 2 \times$ standard errors.}
    \label{fig:NN}
\end{figure}

Figure \ref{fig:NN} reports the results for $n=n_p=5'000$ and $100$ simulation runs and $n=n_p=50'000$ and $50$ simulation repetitions. For $n=n_p=5'000$, we report the average relative difference in the negative log-marginal likelihood, the relative root mean squared error (RRMSE), and the difference in the log score (dLS) compared to a Laplace approximation, and prediction is done using estimated parameters. For $n=n_p=50'000$, it is infeasible to run the experiment in a reasonable amount of time with a Laplace or a Cholesky-based Vecchia-Laplace approximation. We thus simply report the average negative log-marginal likelihood, the RMSE, and the log score. Further, prediction is done using the true covariance parameters for $n=n_p=50'000$, since estimation is very slow for GPVecchia, and for $n=n_p=5'000$, we use estimated ones. The results in Figure \ref{fig:NN} show that log-marginal likelihoods and predictions are again highly accurate for Iterative-VL also when the number of neighbors is small, and the runtime remains low as $m$ increases. GPVecchia is less accurate regarding likelihood evaluation and prediction even with a lot more neighbors, say $m=60$, compared to the iterative methods with $m=10$ neighbors. Further, a larger number of neighbors increases the runtime of GPVecchia in a non-linear manner. Note that GPVecchia produces negative predictive variances and the log score cannot be calculated in the following number of simulation runs: $50$, $50$, $43$, and $9$ for $m=10, 20, 40$, and $60$, respectively, for $n=50'000$ as well as $100$, $61$, $7$, and $2$ for $m=10, 20, 40$, and $60$, respectively, for $n=5'000$.

\section{Real-world applications}
In the following, we present experiments on two real-world data sets. We compare the proposed iterative methods (`Iterative-VL') with Cholesky-based calculations (`Cholesky-VL'), the approach of \citet{zilber2021vecchia} (`GPVecchia'), and a modified predictive process approximation \citep{finley2009improving}  (`LowRank') concerning prediction accuracy and runtime for calculating marginal likelihoods.

We first consider a large, spatial, and non-Gaussian data set of column water vapor data collected from the NASA Earthdata website \url{https://www.earthdata.nasa.gov}. This data set contains measurements of column water vapor amounts over cloud-free ocean areas that were collected by NASA's Moderate-Resolution Imaging Spectroradiometer (MODIS) on the Aqua satellite between 13:45 and 13:50 on March 28, 2019, over a region with west, north, east, and south bounding coordinates -42.249, 67.444, 4.261, and 45.260, respectively. We consider data produced by the near-infrared algorithm with a 1 km spatial resolution. The same data was also used in \citet{zilber2021vecchia}. 

Since water vapor amounts are strictly positive, we use a gamma likelihood as in \citet{zilber2021vecchia}, $y_i|\mu_i \sim \Gamma(\alpha, \alpha \exp{(-\mu_i)})$, where $\alpha$ denotes the shape parameter, and $\beta=\alpha \exp{(-\mu_i)}$ denotes the rate parameter such that $\mathbb{E}[y_i|\mu_i] = \exp{(\mu_i)}$. We include a linear predictor term, $\vecmat{F}(\vecmat{X}) = \vecmat{X}\vecmat{\beta}$, and use an intercept and scaled longitude and latitude coordinates as predictor variables. Following \citet{zilber2021vecchia}, the latter ones are obtained by projecting the coordinates onto a rectangular grid with equally-spaced 1 km by 1 km cells, see Figure \ref{fig:realmean}. We use a Matérn covariance function with smoothness parameter $\nu=1.5$. 
For the Vecchia approximation and the iterative methods, we use the same settings as described in Section \ref{simSetting}, except that we choose $s=1000$ samples for the simulation-based prediction variances. Regarding the low-rank approximation, we set the number of inducing points to $500$. For measuring the prediction accuracy, we randomly generate two disjoint training and test sets with $n=n_p=250'000$. We use the test RMSE to evaluate predictive response means $\mathbb{E}[\vecmat{y}_p|\vecmat{y},\vecmat{\theta},\vecmat{\xi}]=\mathbb{E}[\exp{(\vecmat{\mu}_p)}|\vecmat{y},\vecmat{\theta},\vecmat{\xi}]=\exp{(\vecmat{\omega}_{p} + \text{diag}(\vecmat{\Omega}_p)/2)}$ and the continuous ranked probability score (CRPS) to measure the accuracy of posterior predictive distributions $p(\vecmat{y}_p|\vecmat{y},\vecmat{\theta},\vecmat{\xi})$. The latter are obtained using simulation. Parameter estimation is done on a random sub-sample of the training data with $n=50'000$. We use a random sub-sample since the gamma marginal likelihood is very flat in some directions, which translates into long optimization times when jointly estimating the shape, covariance, and $\beta$ parameters. For GPVecchia, we could not estimate all parameters jointly because the estimation crashed and singular matrices occurred. Therefore, we adopt the same estimation procedure as in \citet{zilber2021vecchia}. For measuring the time for calculating the log-likelihood, we use the random subsample of size $n=50'000$, the covariance parameters at the optimum, $\alpha=28$, $\sigma_1^2=0.3$, $\rho=15$, and $\vecmat{\beta}$ is set to the generalized linear model estimate. Note that we do not report the total estimation time since we must use a different and simpler estimation approach for GPVecchia. However, Iterative-VL is approximately six times faster in terms of total estimation time compared to GPVecchia (results not tabulated).

In Table \ref{table:realPrediction}, we report the test RMSE and CRPS, and the runtime to calculate the marginal likelihood. The iterative approach has a CRPS that is approximately three times lower compared to GPVecchia and more than two times lower than the low-rank approximation. The RMSE of the iterative methods is also considerably lower compared to GPVecchia and LowRank. Note that for GPVecchia, we can reproduce the results of \citet{zilber2021vecchia} and obtain almost the same RMSE and CRPS up to the third digit. Comparing iterative and Cholesky-based calculations, we obtain virtually identical CRPS and RMSE values. Furthermore, the iterative methods are about fifteen times faster in evaluating the marginal likelihood than Cholesky-based calculations and over three times faster than the GPVecchia approach.

\begin{table}[ht!]
	\centering
	\begin{tabular}{lllll}
		\hline
		\hline
		& Cholesky-VL & Iterative-VL & GPVecchia & LowRank\\ 
		\hline
		RMSE & 0.0812 & 0.0812 & 0.1207 & 0.1774\\ 
        \hline
		CRPS & 0.0444 & 0.0444 & 0.1412 & 0.0987\\ 
        \hline
		Time (s) & 89.6s & 5.9s & 21.4s & 5.2s\\ 
		\hline
		\hline
	\end{tabular}
	\caption{Prediction accuracy (lower = better) and runtime (s) for calculating marginal likelihoods on the MODIS column water vapor data set.}
    \label{table:realPrediction}
\end{table}

The estimated parameters are reported in Table \ref{table:realEstimates} in Appendix \ref{appendix_rw}. The results are almost identical when using other random sub-samples with $n=50'000$ for estimation (results not tabulated). In Table \ref{table:realLarge} in Appendix \ref{appendix_rw}, we also report the results for Iterative-VL when the parameters are estimated on the entire training data set when fixing the shape parameter to the estimate obtained on the random sub-sample. Both the parameter estimates and the prediction accuracy measures change only marginally.

Figures \ref{fig:realmean} and \ref{fig:realvar} in Appendix \ref{appendix_rw} show the test data and maps of posterior predictive means $\mathbb{E}[\vecmat{y}_p|\vecmat{y},\vecmat{\theta},\vecmat{\xi}]$ and variances $\text{Var}[\vecmat{y}_p|\vecmat{y},\vecmat{\theta},\vecmat{\xi}]$. These maps visually confirm the results from Table \ref{table:realPrediction}. In particular, Iterative-VL and Cholesky-VL predict the fine-scale structure more accurately than GPVecchia, while the low-rank approximation misses the fine-scale structure completely. 

In Appendix \ref{precip_data}, we additionally apply our methods to a spatial precipitation data set, and we obtain similar findings with an even larger outperformance of the iterative methods compared to GPVecchia and the low-rank approximation in terms of the CRPS and RMSE. The speed-up over the Cholesky decomposition is approximately four fold. This smaller speed-up is likely due to the fact that this data set is smaller ($n=10'901$), and we generally expect the VADU preconditioner to be less efficient on smaller data sets, see Sections \ref{section:preconditioners} and \ref{section:conv_theory}.

\section{Conclusion}
We present iterative methods for Vecchia-Laplace approximations whose main computational operations are matrix-vector multiplications with sparse triangular matrices. In particular, we introduce and analyze several preconditioners, derive novel theoretical results, and propose different methods for accurately approximating predictive variances. In experiments, we find that the current state-of-the-art approach for large sample sizes \citep{zilber2021vecchia} can be inaccurate in marginal likelihoods and can breakdown in joint parameter estimation. We observe that our novel methods are faster, more accurate, and more stable. Further, we find that the iterative methods achieve a substantial reduction in runtime compared to Cholesky-based calculations with negligible loss in accuracy.


An open challenge for the proposed iterative methods is that their convergence rates remain dependent on the covariance and fixed effects parameters despite preconditioning. Future research can systematically analyze in which situations which preconditioner is preferred. Finally, we mention that for very large data sets, computations can still become slow with the current software implementation. Future research can investigate whether and how graphics processing units (GPU)-based calculations can further speed-up calculations with iterative methods. 


\section*{Acknowledgments}
This research was partially supported by the Swiss Innovation Agency - Innosuisse (grant number `55463.1 IP-ICT').

\clearpage
\bibliographystyle{abbrvnat}
\bibliography{bib_Iterative_VL.bib}
\clearpage

\appendix
\section{Appendix}

\subsection{Proofs of Theorem 3.1, Theorem 3.2, and Proposition 3.3}\label{proofs}
\begin{proof}[Proof of Theorem \ref{conv_VADU}]
We first observe that
\begin{equation*}
\begin{split}
\vecmat{P}_{\text{VADU}}^{-\frac{1}{2}}(\vecmat{W}+\tilde{\vecmat{\Sigma}}^{-1})\vecmat{P}_{\text{VADU}}^{-\frac{T}{2}} =& (\vecmat{W} + \vecmat{D}^{-1})^{-\frac{1}{2}}\vecmat{B}^{-T}\vecmat{W}\vecmat{B}^{-1}(\vecmat{W} + \vecmat{D}^{-1})^{-\frac{1}{2}} + (\vecmat{D}\vecmat{W} + \vecmat{I}_n)^{-1}\\
=&(\vecmat{D}\vecmat{W} + \vecmat{I}_n)^{-\frac{1}{2}}(\vecmat{D}^{\frac{1}{2}}\vecmat{B}^{-T}\vecmat{W}\vecmat{B}^{-1}\vecmat{D}^{\frac{1}{2}} + \vecmat{I}_n)(\vecmat{D}\vecmat{W} + \vecmat{I}_n)^{-\frac{1}{2}},
\end{split}
\end{equation*}
where $\vecmat{I}_n$ denotes the identity matrix of dimension $n$. It follows that for all $l$, $1\leq l \leq n$,
\begin{equation}\label{EV_PC}
    \begin{split}
    \lambda_l(\vecmat{P}_{\text{VADU}}^{-\frac{1}{2}}(\vecmat{W}+\tilde{\vecmat{\Sigma}}^{-1})\vecmat{P}_{\text{VADU}}^{-\frac{T}{2}}) &= \lambda_l(\vecmat{D}^{\frac{1}{2}}\vecmat{B}^{-T}\vecmat{W}\vecmat{B}^{-1}\vecmat{D}^{-\frac{1}{2}} + \vecmat{I}_n)\theta_l \\
        &= (\lambda_l(\vecmat{D}^{\frac{1}{2}}\vecmat{B}^{-T}\vecmat{W}\vecmat{B}^{-1}\vecmat{D}^{-\frac{1}{2}}) + 1)\theta_l\\
        &= (\lambda_l(\tilde{\vecmat{\Sigma}})\tilde\theta_l + 1)\theta_l,
    \end{split}
\end{equation}
where $\min((D_iW_{ii} + 1)^{-1}) \leq \theta_l \leq \max((D_iW_{ii} + 1)^{-1}) \leq 1$ and $\min(W_{ii}) \leq \tilde \theta_l \leq \max(W_{ii})$. The first line in \eqref{EV_PC} is due to Ostrowski's theorem \citep[][Theorem 4.5.9]{horn2012matrix}. The second line holds since for any symmetric matrix $\vecmat{A}$, we have $\lambda_l(\vecmat{A}+\vecmat{I}_n) = \lambda_l(\vecmat{A}) + 1$. And the third line holds again due to Ostrowski's theorem and since the two matrices $\vecmat{D}^{\frac{1}{2}}\vecmat{B}^{-T}\vecmat{W}\vecmat{B}^{-1}\vecmat{D}^{\frac{1}{2}}$ and $\vecmat{W}^{\frac{1}{2}}\vecmat{B}^{-1}\vecmat{D}\vecmat{B}^{-T}\vecmat{W}^{\frac{1}{2}} = \vecmat{W}^{\frac{1}{2}}\tilde{\vecmat{\Sigma}} \vecmat{W}^{\frac{1}{2}}$ have the same eigenvalues. The latter statement can be seen as follows. If $\lambda_l$ and $x_l$ are an eigenvalue and eigenvector of $\vecmat{D}^{\frac{1}{2}}\vecmat{B}^{-T}\vecmat{W}\vecmat{B}^{-1}\vecmat{D}^{\frac{1}{2}}$, then
$$
\vecmat{W}^{\frac{1}{2}}\vecmat{B}^{-1}\vecmat{D}^{\frac{1}{2}}\vecmat{D}^{\frac{1}{2}}\vecmat{B}^{-T}\vecmat{W}^{\frac{1}{2}}\vecmat{W}^{\frac{1}{2}}\vecmat{B}^{-1}\vecmat{D}^{\frac{1}{2}}x_l = \lambda_l \vecmat{W}^{\frac{1}{2}}\vecmat{B}^{-1}\vecmat{D}^{\frac{1}{2}}x_l
$$
which shows that $\lambda_l$ and $\vecmat{W}^{\frac{1}{2}}\vecmat{B}^{-1}\vecmat{D}^{\frac{1}{2}}x_l$ are an eigenvalue and eigenvector of $\vecmat{W}^{\frac{1}{2}}\tilde{\vecmat{\Sigma}} \vecmat{W}^{\frac{1}{2}}$. 

From \eqref{EV_PC}, it follows that 
\begin{equation*}
 (\lambda_l(\tilde{\vecmat{\Sigma}})\min(W_{ii}) + 1)\min((D_iW_{ii} + 1)^{-1}) \leq \lambda_l(\vecmat{P}_{\text{VADU}}^{-\frac{1}{2}}(\vecmat{W}+\tilde{\vecmat{\Sigma}}^{-1})\vecmat{P}_{\text{VADU}}^{-\frac{T}{2}})
\end{equation*}
and 
\begin{equation*}
\lambda_l(\vecmat{P}_{\text{VADU}}^{-\frac{1}{2}}(\vecmat{W}+\tilde{\vecmat{\Sigma}}^{-1})\vecmat{P}_{\text{VADU}}^{-\frac{T}{2}}) \leq (\lambda_l(\tilde{\vecmat{\Sigma}})\max(W_{ii}) + 1)\max((D_iW_{ii} + 1)^{-1}).
\end{equation*}
The statement in \eqref{CG_conv_VADU} then follows from existing results about the convergence of conjugate gradient algorithms, see, e.g., \citet[][Theorem B.4]{nishimura2022prior} or \citet[][Section 5.3]{van2003iterative}.

For the LVA preconditioner it holds that
$$
\vecmat{P}_{\text{LVA}}^{-\frac{1}{2}}(\vecmat{W}+\tilde{\vecmat{\Sigma}}^{-1})\vecmat{P}_{\text{LVA}}^{-\frac{T}{2}} = \vecmat{D}^{\frac{1}{2}}\vecmat{B}^{-T}\vecmat{W}\vecmat{B}^{-1}\vecmat{D}^{\frac{1}{2}} + \vecmat{I}_n.
$$
Analogous arguments as for the VADU preconditioner above then lead to
$$
\lambda_l(\vecmat{P}_{\text{LVA}}^{-\frac{1}{2}}(\vecmat{W}+\tilde{\vecmat{\Sigma}}^{-1})\vecmat{P}_{\text{LVA}}^{-\frac{T}{2}}) = \lambda_l(\tilde{\vecmat{\Sigma}})\tilde\theta_l + 1,
$$
where again $\min(W_{ii}) \leq \tilde \theta_l \leq \max(W_{ii})$, and thus
\begin{equation*}
\lambda_l(\tilde{\vecmat{\Sigma}})\min(W_{ii}) + 1 \leq \lambda_l(\vecmat{P}_{\text{LVA}}^{-\frac{1}{2}}(\vecmat{W}+\tilde{\vecmat{\Sigma}}^{-1})\vecmat{P}_{\text{LVA}}^{-\frac{T}{2}}) \leq \lambda_l(\tilde{\vecmat{\Sigma}})\max(W_{ii}) + 1.
\end{equation*}

\end{proof}

\begin{proof}[Proof of Theorem \ref{acc_SLQ}]


We first proof the multiplicative bound in \eqref{mult_bound_slq}. This is done using similar arguments as in \citet{ubaru2017fast} and the fact that Theorem \ref{conv_VADU} shows that the smallest eigenvalue $\lambda_n(\vecmat{P}_{\text{LVA}}^{-\frac{1}{2}}(\vecmat{W}+\tilde{\vecmat{\Sigma}}^{-1})\vecmat{P}_{\text{LVA}}^{-\frac{T}{2}})$ is larger than $1$. However, Theorem 4.1 of \citet{ubaru2017fast} holds for Rademacher vectors, but we use Gaussian random vector. The latter implies, among other things, that the inequality in (15) in \citet{ubaru2017fast} does not hold since the norm of Gaussian random vectors is stochastic and unbounded whereas the norm of Rademacher vectors of size $n$ is $n$. Further, despite the fact that the smallest eigenvalue is larger than $1$, we could not directly apply Theorem 4.1 of \citet{ubaru2017fast} even if we used Rademacher vectors since the logarithm function diverges at the boundary of the Bernstein ellipse used in Theorem 4.1 of \citet{ubaru2017fast}. We thus use a smaller ellipse. In addition, \citet{cortinovis2021randomized} point out two errors in the proof of Theorem 4.1 of \citet{ubaru2017fast} whose corrections we use below.

In the following, we denote $\vecmat{A} = \vecmat{P}^{-\frac{1}{2}}(\vecmat{W}+\tilde{\vecmat{\Sigma}}^{-1})\vecmat{P}^{-\frac{T}{2}}$ and 
$$
\hat\tr_n(\log(\vecmat{A})) =\frac{1}{t}\sum_{i=1}^t\vecmat{z}_i^T \vecmat{P}^{-\frac{T}{2}}\log(\vecmat{A}) \vecmat{P}^{-\frac{1}{2}} \vecmat{z}_i.
$$
Similarly as in the proof of Theorem 4.1 of \citet{ubaru2017fast}, we consider the function 
$$
g(t) = \log\left(\frac{\lambda_1-\lambda_n}{2}t+\frac{\lambda_1+\lambda_n}{2}\right),
$$
which is analytic in $[-1,1]$. Further, denote by $E_\rho$ a Bernstein ellipse with semimajor axis length $\alpha = \frac{\kappa+1}{\kappa}$, sum of the two semiaxis $\rho = \frac{\sqrt{2\kappa + 1} + 1}{\sqrt{2\kappa + 1} - 1}>1$, and foci $\pm 1$. It follows that
\begin{equation*}
\begin{split}
    \sup_{z\in E_\rho}|g(z)| &\leq \sqrt{\sup_{z\in E_\rho} \left(\log\left(\frac{\lambda_1-\lambda_n}{2}|z|+\frac{\lambda_1+\lambda_n}{2}\right)\right)^2+\pi^2} = \sqrt{\left(\log\left(\lambda_n(\kappa + 1 -1/\kappa)\right)\right)^2+\pi^2}\\
    &\leq \log\left(\lambda_n(\kappa + 1 -1/\kappa)\right)+\pi= M_\rho,
\end{split}
\end{equation*}
where we use the fact that the function $g(t)$ attains its maximum at $\alpha$ and $g(\alpha) = \log\left(\lambda_n(\kappa + 1 -1/\kappa)\right)$. The integral representation used in \citet[][Equation (5)]{ubaru2017fast} and Theorem 3 of \citet{cortinovis2021randomized}, instead of the erroneous Theorem 4.2 of \citet{ubaru2017fast}, applied to $g(t)$, with $\rho$ and $M_\rho$ are given above and, e.g., $\rho_0 = \frac{\sqrt{\kappa}+1}{\sqrt{\kappa}-1}$, then lead to
$$
|\vecmat{z}_i^T \vecmat{P}^{-\frac{T}{2}}\log(\vecmat{A}) \vecmat{P}^{-\frac{1}{2}} \vecmat{z}_i - \vecmat{z}_i^T \vecmat{P}^{-\frac{T}{2}} \tilde{\vecmat{Q}}_i\log(\tilde{\vecmat{T}}_i) \tilde{\vecmat{Q}}_i^T \vecmat{P}^{-\frac{1}{2}} \vecmat{z}_i| \leq \| \vecmat{z}_i^T \vecmat{P}^{-\frac{T}{2}}\|_2^2 \frac{C_\rho}{\rho^{2l}},
$$
where 
\begin{equation}\label{defC}
C_\rho = 4\frac{M_\rho}{1-\rho^{-1}} = 2(\log\left(\lambda_n(\kappa + 1 -1/\kappa)\right)+\pi)(\sqrt{2\kappa+1}+1),
\end{equation}
and thus
\begin{equation}\label{ineq1}
|\frac{1}{t}\sum_{i=1}^t\left(\hat\tr_n(\log(\vecmat{A})) - \vecmat{z}_i^T \vecmat{P}^{-\frac{T}{2}} \tilde{\vecmat{Q}}_i\log(\tilde{\vecmat{T}}_i) \tilde{\vecmat{Q}}_i^T \vecmat{P}^{-\frac{1}{2}} \vecmat{z}_i\right)| \leq \frac{1}{t}\sum_{i=1}^t\| \vecmat{z}_i^T \vecmat{P}^{-\frac{T}{2}}\|_2^2 \frac{C_\rho}{\rho^{2l}}.
\end{equation}
Note that, in contrast to \citet[][page 1087]{ubaru2017fast}, there is no factor $\frac{\lambda_1-\lambda_n}{2}$ in $C_\rho$ since \citet[][page 889]{cortinovis2021randomized} point out that this factor is erroneous. Next, the fact that $\tr\log(\vecmat{A}) \geq n \log(\lambda_n)$, \eqref{ineq1}, and \eqref{defC} imply that
\begin{equation}\label{ineq2}
    \begin{split}
P\left(|\hat\tr_n(\log(\vecmat{A})) -\widehat \Gamma | \leq \frac{\epsilon}{2}\tr\log(\vecmat{A})\right) & \geq P\left(|\hat\tr_n(\log(\vecmat{A})) -\widehat \Gamma | \leq \frac{\epsilon n\log(\lambda_n)}{2}\right)\\
&\geq P\left( \frac{1}{t}\sum_{i=1}^t\| \vecmat{z}_i^T \vecmat{P}^{-\frac{T}{2}}\|_2^2 \frac{C_\rho}{\rho^{2l}}\leq \frac{\epsilon n\log(\lambda_n)}{2}\right)\\
&\geq P\left( \frac{1}{nt}\sum_{i=1}^t\| \vecmat{z}_i^T \vecmat{P}^{-\frac{T}{2}}\|_2^2\leq C_{nt}\right) \\
&\geq 1 - \eta/2,
    \end{split}
\end{equation}
where the second last inequality holds since $\frac{C_\rho}{\rho^{2l}}\leq \frac{\epsilon \log(\lambda_n)}{2C_{nt}}$ is equivalent to
\begin{equation*}
l\geq \frac{1}{2}\log\left(\frac{4C_{nt}(\log\left(\lambda_n(\kappa + 1 -1/\kappa)\right)+\pi)(\sqrt{2\kappa+1}+1)}{\log(\lambda_n)\epsilon}\right)/\log\left(\frac{\sqrt{2\kappa+1}+1}{\sqrt{2\kappa+1}-1}\right),
\end{equation*}
and the last inequality in \eqref{ineq2} holds due to the definition of $C_{nt}$ and by noting that $\sum_{i=1}^t\| \vecmat{z}_i^T \vecmat{P}^{-\frac{T}{2}}\|_2^2$ follows a chi-squared distribution with $nt$ degrees of freedom.

Finally, plugging everything together gives
\begin{equation*}
    \begin{split}
        P\left((|\tr\log(\vecmat{A}) - \widehat \Gamma |\leq \epsilon\tr\log(\vecmat{A})\right) \geq & P\left(|\tr\log(\vecmat{A}) - \hat\tr_n(\log(\vecmat{A}))| + |\hat\tr_n(\log(\vecmat{A})) -\widehat \Gamma | \leq \epsilon\tr\log(\vecmat{A})\right) \\
        \geq & P\left(|\tr\log(\vecmat{A}) - \hat\tr_n(\log(\vecmat{A}))|\leq \frac{\epsilon}{2}\tr\log(\vecmat{A})\right) \\
        & ~~ + P\left(|\hat\tr_n(\log(\vecmat{A})) -\widehat \Gamma | \leq \frac{\epsilon}{2}\tr\log(\vecmat{A})\right) - 1\\
        \geq & 1 - \eta,
    \end{split}
\end{equation*}
where in the last inequality, we used \eqref{ineq2} and the fact that $P\left(|\tr\log(\vecmat{A}) - \hat\tr_n(\log(\vecmat{A}))|\leq \frac{\epsilon}{2}\tr\log(\vecmat{A})\right) \geq 1 - \eta/2$ if $t\geq  32 / \epsilon^2 \log(4/\eta)$. The latter statement concerning the accuracy of a stochastic trace estimator can be shown by applying Theorem 3 of \citet{roosta2015improved} with $\varepsilon = \epsilon/2$ and $\delta = \eta/2$. This completes the proof for the multiplicative bound in \eqref{mult_bound_slq}. Clearly, \eqref{mult_bound_slq} holds also for $l\geq l^m(\gamma^{\text{LVA}}_1)$ and $t\geq t^m(\gamma^{\text{LVA}}_1)$ since $l^m(\kappa)$ and $t^m(\kappa)$ are increasing and  $\gamma^{\text{LVA}}_1\geq \kappa$ and $\gamma^{\text{LVA}}_1 \geq \kappa$ by Theorem \ref{conv_VADU}.

The additive bound in \eqref{bound_slq} follows using the same arguments as in the proof of Corollary 4.5 of \citet{ubaru2017fast}, however, applying the analogous modifications as above to account for the fact that we use Gaussian random vectors instead of Rademacher vectors and making the two above-mentioned corrections observed by \citet{cortinovis2021randomized}. The former means that, in contrast to \citet[][Corrolary 4.5]{ubaru2017fast}, the factor $C_{nt}$ is added in $l(\kappa)$ and that for $t(\kappa)$, the multiplicative constant is $32$ instead of $24$ and $\eta$ is replaced by $\eta/2$. The corrections by \citet{cortinovis2021randomized} imply that the factor $\frac{5\kappa\log(2(\kappa+1))}{\sqrt{2\kappa+1}}$ in \citet{ubaru2017fast} is replaced by $20\log(2(\kappa+1))\sqrt{2\kappa+1}$.

\end{proof}

\begin{proof}[Proof of Proposition \ref{pred_var_sim}]
First, observe that $\vecmat{z}_i^{(3)} = \vecmat{W}^{\frac{1}{2}}\vecmat{z}_i^{(1)} +\vecmat{B}^T\vecmat{D}^{-\frac{1}{2}}\vecmat{z}_i^{(2)} \sim \mathcal{N}(\vecmat{0}, (\vecmat{W}+\tilde{\vecmat{\Sigma}}^{-1}))$. It follows that $\vecmat{z}_i^{(4)} = \vecmat{B}_p^{-1}\vecmat{B}_{po}\vecmat{z}_i^{(3)}\sim \mathcal{N}(\vecmat{0},  \vecmat{B}_p^{-1}\vecmat{B}_{po}(\vecmat{W} + \tilde{\vecmat{\Sigma}}^{-1})^{-1} \vecmat{B}_{po}^T \vecmat{B}_p^{-1})$. By standard results, the empirical covariance matrix $\frac{1}{s}\sum_{i=1}^s\vecmat{z}_i^{(4)}(\vecmat{z}_i^{(4)})^T$ in Algorithm \ref{simAlgo} is an unbiased and consistent estimator for $\vecmat{B}_p^{-1}\vecmat{B}_{po}(\vecmat{W} + \tilde{\vecmat{\Sigma}}^{-1})^{-1} \vecmat{B}_{po}^T \vecmat{B}_p^{-1}$, and the claim in Proposition \ref{pred_var_sim} thus follows.
\end{proof}

\subsection{Derivations for stochastic Lanczos quadrature for log-determinants and stochastic trace estimation for gradients}\label{deriv_SLQ}
The approximation in \eqref{def_SLQ} is obtained as follows:
\begin{equation*}
\begin{split}
    \log\det(\vecmat{P}^{-\frac{1}{2}}(\vecmat{W}+\tilde{\vecmat{\Sigma}}^{-1})\vecmat{P}^{-\frac{T}{2}}) & = \tr(\mathbb{E}[\vecmat{P}^{-\frac{1}{2}} \vecmat{z}_i \vecmat{z}_i^T \vecmat{P}^{-\frac{T}{2}}]\log(\vecmat{P}^{-\frac{1}{2}}(\vecmat{W}+\tilde{\vecmat{\Sigma}}^{-1})\vecmat{P}^{-\frac{T}{2}}))\\
    & = \mathbb{E}[\vecmat{z}_i^T \vecmat{P}^{-\frac{T}{2}}\log(\vecmat{P}^{-\frac{1}{2}}(\vecmat{W}+\tilde{\vecmat{\Sigma}}^{-1})\vecmat{P}^{-\frac{T}{2}}) \vecmat{P}^{-\frac{1}{2}} \vecmat{z}_i]\\
    &\approx \mathbb{E}[\vecmat{z}_i^T \vecmat{P}^{-\frac{T}{2}} \tilde{\vecmat{Q}}_i \log(\tilde{\vecmat{T}}_i) \tilde{\vecmat{Q}}_i^T \vecmat{P}^{-\frac{1}{2}} \vecmat{z}_i] \\
    &\approx \frac{1}{t} \sum_{i=1}^t \vecmat{z}_i^T \vecmat{P}^{-\frac{T}{2}} \tilde{\vecmat{Q}}_i\log(\tilde{\vecmat{T}}_i) \tilde{\vecmat{Q}}_i^T \vecmat{P}^{-\frac{1}{2}} \vecmat{z}_i\\
    &= \frac{1}{t} \sum_{i=1}^t\|\vecmat{P}^{-\frac{1}{2}}\vecmat{z}_i\|_2^2 \vecmat{e}_1^T \log(\tilde{\vecmat{T}}_i) \vecmat{e}_1 \approx \frac{n}{t} \sum_{i=1}^t \vecmat{e}_1^T \log(\tilde{\vecmat{T}}_i) \vecmat{e}_1,
\end{split}
\end{equation*}
The equality in the last line holds because all columns of $\tilde{\vecmat{Q}}_i$ except the first one are orthogonal to the start vector: $\tilde{\vecmat{Q}}_i^T \vecmat{P}^{-\frac{1}{2}}\vecmat{z}_i=\|\vecmat{P}^{-\frac{1}{2}}\vecmat{z}_i\|_2\vecmat{e}_1\approx\sqrt{n}\vecmat{e}_1$.

The approximation in \eqref{grad_STE} is obtained as follows:
\begin{equation*}
    \begin{split}
    \frac{\partial\log\det(\vecmat{W}+\tilde{\vecmat{\Sigma}}^{-1})}{\partial\theta_k} 
    &= \tr\left((\vecmat{W}+\tilde{\vecmat{\Sigma}}^{-1})^{-1}\frac{\partial (\vecmat{W}+\tilde{\vecmat{\Sigma}}^{-1})}{\partial\theta_k}\mathbb{E}[\vecmat{P}^{-1}\vecmat{z}_i\vecmat{z}_i^T]\right)\\
    &= \mathbb{E}\left[\vecmat{z}_i^T (\vecmat{W}+\tilde{\vecmat{\Sigma}}^{-1})^{-1}\frac{\partial (\vecmat{W}+\tilde{\vecmat{\Sigma}}^{-1})}{\partial\theta_k}\vecmat{P}^{-1} \vecmat{z}_i\right]\\
    &\approx \frac{1}{t}\sum_{i=1}^t ((\vecmat{W}+\tilde{\vecmat{\Sigma}}^{-1})^{-1}\vecmat{z}_i)^T\frac{\partial (\vecmat{W}+\tilde{\vecmat{\Sigma}}^{-1})}{\partial\theta_k}\vecmat{P}^{-1} \vecmat{z}_i.
    \end{split}
\end{equation*}

\subsection{Gradients of Vecchia-Laplace approximations}\label{gradientsVLA}
The gradients with respect to $\vecmat{F}$, $\vecmat{\theta}$, and $\vecmat{\xi}$ of the approximate negative logarithmic marginal likelihood of a Vecchia-Laplace approximation $L^{LAV}(\vecmat{y},\vecmat{F},\vecmat{\theta},\vecmat{\xi})$ in \eqref{VLA_loss} are given by
\begin{equation*}
\begin{split}
\frac{\partial L^{VLA}(\vecmat{y},\vecmat{F},\vecmat{\theta},\vecmat{\xi})}{\partial F_i}=&-\frac{\partial \log p(y_i|\mu^*_i,\vecmat{\xi})}{\partial \mu^*_i} + \frac{1}{2}\frac{\partial\log\det(\tilde{\vecmat{\Sigma}}\vecmat{W}+\vecmat{I}_n)}{\partial\mu^*_i}\\
& + \left(\frac{\partial L^{VLA}(\vecmat{y},\vecmat{F},\vecmat{\theta},\vecmat{\xi})}{\partial \vecmat{b}^*}\right)^T\frac{\partial \vecmat{b}^*}{\partial F_i},~~~~ i=1,\dots,n,
\end{split}
\end{equation*}
\begin{equation*}
\begin{split}
\frac{\partial L^{VLA}(\vecmat{y},\vecmat{F},\vecmat{\theta},\vecmat{\xi})}{\partial \theta_k}=& \frac{1}{2}{\vecmat{b}^*}^T \frac{\partial \tilde{\vecmat{\Sigma}}^{-1}}{\partial  \theta_k} \vecmat{b}^* + \frac{1}{2}\frac{\partial\log\det(\tilde{\vecmat{\Sigma}}\vecmat{W}+\vecmat{I}_n)}{\partial\theta_k}\\
&+ \left(\frac{\partial L^{VLA}(\vecmat{y},\vecmat{F},\vecmat{\theta},\vecmat{\xi})}{\partial \vecmat{b}^*}\right)^T\frac{\partial \vecmat{b}^*}{\partial \theta_k}, ~~~~ k=1,\dots,q,
\end{split}
\end{equation*}
\begin{equation*}
\begin{split}
\frac{\partial L^{VLA}(\vecmat{y},\vecmat{F},\vecmat{\theta},\vecmat{\xi})}{\partial \xi_l}=&-\frac{\partial \log p(\vecmat{y}|\vecmat{\mu}^*,\vecmat{\xi})}{\partial \xi_l}  + \frac{1}{2}\frac{\partial\log\det(\tilde{\vecmat{\Sigma}}\vecmat{W}+\vecmat{I}_n)}{\partial\xi_l}\\
&+ \left(\frac{\partial L^{VLA}(\vecmat{y},\vecmat{F},\vecmat{\theta},\vecmat{\xi})}{\partial \vecmat{b}^*}\right)^T\frac{\partial \vecmat{b}^*}{\partial \xi_l}, ~~~~ l=1,\dots,r,
\end{split}
\end{equation*}
where
\begin{equation*}
\frac{\partial L^{VLA}(\vecmat{y},\vecmat{F},\vecmat{\theta},\vecmat{\xi})}{\partial b^*_i}
=\frac{1}{2}\frac{\partial\log\det(\tilde{\vecmat{\Sigma}}\vecmat{W}+\vecmat{I}_n)}{\partial\mu^*_i},
\end{equation*} 
\begin{equation*}
\frac{\partial \vecmat{b}^*}{\partial F_i} =- \left(\vecmat{W} + \tilde{\vecmat{\Sigma}}^{-1}\right)^{-1}\vecmat{W}_{\cdot i},
\end{equation*}
\begin{equation}\label{grad_mode_par}
\frac{\partial \vecmat{b}^*}{\partial \theta_k} =-\left(\vecmat{W} + \tilde{\vecmat{\Sigma}}^{-1}\right)^{-1} \frac{\partial \tilde{\vecmat{\Sigma}}^{-1}}{\partial \theta_k} \vecmat{b}^*,
\end{equation}
\begin{equation*}
\frac{\partial \vecmat{b}^*}{\partial \xi_l} = \left(\vecmat{W} + \tilde{\vecmat{\Sigma}}^{-1}\right)^{-1} \frac{\partial^2 \log p(\vecmat{y}|\vecmat{\mu}^*,\vecmat{\xi})}{\partial  \xi_l\partial \vecmat{\mu}^*}.
\end{equation*}
$\vecmat{W}_{\cdot i}$ denotes column $i$ of $\vecmat{W}$. To reduce the number of linear systems in the implementation, one should calculate the implicit derivatives by first solving for $\left(\vecmat{W}+\tilde{\vecmat{\Sigma}}^{-1}\right)^{-1}\frac{\partial L^{VLA}(\vecmat{y}, \vecmat{F},\vecmat{\theta},\vecmat{\xi})}{\partial \vecmat{b}^*}$. Further, it is preferred to work with $\frac{\partial \tilde{\vecmat{\Sigma}}^{-1}}{\partial \theta_k}$ instead of $\frac{\partial \tilde{\vecmat{\Sigma}}}{\partial \theta_k}$ since then $\vecmat{B}^{-1}$ is not required, see  Equations \eqref{vecchiaSigmaderiv} and \eqref{vecchiaSigmaIderiv} in Appendix \ref{vecchiaAppendix} for expressions for $\frac{\partial \tilde{\vecmat{\Sigma}}^{-1}}{\partial \theta_k}$ and $\frac{\partial \tilde{\vecmat{\Sigma}}}{\partial \theta_k}$. Therefore we use in \eqref{grad_mode_par} that $-\frac{\partial \tilde{\vecmat{\Sigma}}^{-1}}{\partial \theta_k} \vecmat{b}^* = \tilde{\vecmat{\Sigma}}^{-1} \frac{\partial \tilde{\vecmat{\Sigma}}}{\partial \theta_k} \frac{\partial \log p(\vecmat{y}|\vecmat{\mu}^*,\vecmat{\xi})}{\partial  \vecmat{\mu}^*}$ which holds because $\vecmat{b}^*=\argmax_{\vecmat{b}}\log p(\vecmat{y}|\vecmat{\mu},\vecmat{\xi}) - \frac{1}{2} \vecmat{b}^T\tilde{\vecmat{\Sigma}}^{-1} \vecmat{b}$.

\subsection{Derivatives of log-determinants using stochastic trace estimation and variance reduction for different preconditioners} \label{PAppendix}
In the following, $\frac{\partial \vecmat{W}}{\partial \mu^*_i}$ denotes a matrix of $0$'s except for the entry $\left(\frac{\partial \vecmat{W}}{\partial \mu^*_i}\right)_{ii} = -\frac{\partial^3 \log p(y_i|\mu^*_i,\vecmat{\xi})}{\partial {\mu^*_i}^3}$, and $\frac{\partial \vecmat{W}}{\partial\xi_l}$ denotes a diagonal matrix with entries $\left(\frac{\partial \vecmat{W}}{\partial \xi_l}\right)_{ii}=-\frac{\partial^3 \log p(y_i|\mu^*_i,\vecmat{\xi})}{\partial \xi_l \partial{\mu^*_i}^2}$. Further, we recall that $\frac{\partial \tilde{\vecmat{\Sigma}}^{-1}}{\partial \theta_k}$ is given in \eqref{vecchiaSigmaderiv} in Appendix \ref{vecchiaAppendix}.

\subsubsection{Preconditioner VADU} \label{P_VADUAppendix}
The following holds:
\begin{equation*}
\begin{split}
    \frac{\partial\log\det\left(\tilde{\vecmat{\Sigma}} \vecmat{W}+\vecmat{I}_n\right)}{\partial\mu^*_i}
    =&\;\frac{\partial}{\partial\mu^*_i}\left(c\log\det(\vecmat{P}_{\text{VADU}})+\log\det\left(\vecmat{W} +\tilde{\vecmat{\Sigma}}^{-1}\right)-c\log\det(\vecmat{P}_{\text{VADU}})\right) \\
    =&\;c\;\tr\left(\left(\vecmat{W}+\vecmat{D}^{-1}\right)^{-1}\frac{\partial \vecmat{W}}{\partial\mu^*_i}\right) \\
    &\;+\frac{1}{t}\sum_{i=1}^t \left(\underbrace{\left(\left(\vecmat{W} +\tilde{\vecmat{\Sigma}}^{-1}\right)^{-1}\vecmat{z}_i\right)^{T}\frac{\partial \vecmat{W}}{\partial\mu^*_i}\vecmat{P}_{\text{VADU}}^{-1}\vecmat{z}_i}_\text{$h(\vecmat{z}_i)$} \right.\\
    &\left.-\;c\underbrace{\left(\vecmat{P}_{\text{VADU}}^{-1} \vecmat{z}_i\right)^T\vecmat{B}^T\frac{\partial \vecmat{W}}{\partial\mu^*_i}\vecmat{B}\vecmat{P}_{\text{VADU}}^{-1}\vecmat{z}_i}_\text{$r(\vecmat{z}_i)$}\right),
\end{split}
\end{equation*}
\begin{equation*}
\begin{split}
    \frac{\partial\log\det\left(\tilde{\vecmat{\Sigma}} \vecmat{W}+\vecmat{I}_n\right)}{\partial\theta_k}
    =&\;\frac{\partial}{\partial\theta_k}\left(\log\det(\tilde{\vecmat{\Sigma}}) + c\log\det(\vecmat{P}_{\text{VADU}})+\log\det\left(\vecmat{W} +\tilde{\vecmat{\Sigma}}^{-1}\right)\right. \\
    &\left.\;-c\log\det(\vecmat{P}_{\text{VADU}})\right)\\
    =&\;\tr(\vecmat{D}^{-1}\frac{\partial \vecmat{D}}{\partial\theta_k}) - c\;\tr\left(\left(\vecmat{W}+\vecmat{D}^{-1}\right)^{-1}\vecmat{D}^{-1}\frac{\partial \vecmat{D}}{\partial\theta_k}\vecmat{D}^{-1}\right)\\
    &\;+\frac{1}{t}\sum_{i=1}^t \left(\underbrace{\left(\left(\vecmat{\vecmat{W}} +\tilde{\vecmat{\Sigma}}^{-1}\right)^{-1}\vecmat{z}_i\right)^{T}\frac{\partial\tilde{\vecmat{\Sigma}}^{-1}}{\partial\theta_k}\vecmat{P}_{\text{VADU}}^{-1}\vecmat{z}_i}_\text{$h(\vecmat{z}_i)$}\right.\\
    &\left.\;-\;c\underbrace{\left(\vecmat{P}_{\text{VADU}}^{-1} \vecmat{z}_i\right)^T\left(\frac{\partial\tilde{\vecmat{\Sigma}}^{-1}}{\partial\theta_k}+\frac{\partial \vecmat{B}^T}{\partial\theta_k}\vecmat{W}\vecmat{B}+\vecmat{B}^T\vecmat{W}\frac{\partial \vecmat{B}}{\partial\theta_k}\right)\vecmat{P}_{\text{VADU}}^{-1}\vecmat{z}_i}_\text{$r(\vecmat{z}_i)$}\right),
\end{split}
\end{equation*}
\begin{equation*}
\begin{split}
    \frac{\partial\log\det\left(\tilde{\vecmat{\Sigma}} \vecmat{W}+\vecmat{I}_n\right)}{\partial\xi_l}
    =&\;\frac{\partial}{\partial\xi_l}\left(c\log\det(\vecmat{P}_{\text{VADU}})+\log\det\left(\vecmat{W} +\tilde{\vecmat{\Sigma}}^{-1}\right)-c\log\det(\vecmat{P}_{\text{VADU}})\right) \\
    =&\;c\;\tr\left(\left(\vecmat{W}+\vecmat{D}^{-1}\right)^{-1}\frac{\partial \vecmat{W}}{\partial\xi_l}\right) \\
    &\;+\frac{1}{t}\sum_{i=1}^t \left(\underbrace{\left(\left(\vecmat{W} +\tilde{\vecmat{\Sigma}}^{-1}\right)^{-1}\vecmat{z}_i\right)^{T}\frac{\partial \vecmat{W}}{\partial\xi_l}\vecmat{P}_{\text{VADU}}^{-1}\vecmat{z}_i}_\text{$h(\vecmat{z}_i)$} \right.\\
    &\left.-\;c\underbrace{\left(\vecmat{P}_{\text{VADU}}^{-1} \vecmat{z}_i\right)^T\vecmat{B}^T\frac{\partial \vecmat{W}}{\partial\xi_l}\vecmat{B}\vecmat{P}_{\text{VADU}}^{-1}\vecmat{z}_i}_\text{$r(\vecmat{z}_i)$}\right),
\end{split}
\end{equation*}
where $c=\widehat{\text{Cov}}(h(\vecmat{z}_i),r(\vecmat{z}_i))/\widehat{\text{Var}}(r(\vecmat{z}_i))$ is the optimal weight for the variance reduction.

\subsubsection{Preconditioner LRAC} \label{P_LRACAppendix}
The following holds:
\begin{equation*}
\begin{split}
    \frac{\partial\log\det\left(\tilde{\vecmat{\Sigma}} \vecmat{W}+\vecmat{I}_n\right)}{\partial\mu^*_i}
    =&\;\frac{\partial}{\partial\mu^*_i}\left(\log\det(\vecmat{W}) + c\log\det(\vecmat{P}_{\text{LRAC}})+\log\det\left(\tilde{\vecmat{\Sigma}}+\vecmat{W}^{-1}\right)\right.\\
    &\left.-c\log\det(\vecmat{P}_{\text{LRAC}})\right)\\
    =&\;\frac{\partial}{\partial\mu^*_i}\left(\log\det(\vecmat{W}) + c\left(\log\det\left(\vecmat{L}_k^T\vecmat{W} \vecmat{L}_k+\vecmat{I}_k\right)+\log\det\left(\vecmat{W}^{-1}\right)\right)\right.\\
    &\;+\left.\log\det\left(\tilde{\vecmat{\Sigma}}+\vecmat{W}^{-1}\right)-c\log\det(\vecmat{P}_{\text{LRAC}})\right)\\
    =&\;\tr\left(\vecmat{W}^{-1}\frac{\partial \vecmat{W}}{\partial\mu^*_i}\right) + 
    c\left(\tr\left(\left(\vecmat{L}_k^T\vecmat{W} \vecmat{L}_k+\vecmat{I}_k\right)^{-1}\vecmat{L}_k^T\frac{\partial \vecmat{W}}{\partial\mu^*_i}\vecmat{L}_k\right)\right.\\
    &\left.-\tr\left(\vecmat{W}^{-1}\frac{\partial \vecmat{W}}{\partial\mu^*_i}\right)\right) \\
    &\;-\frac{1}{t}\sum_{i=1}^t \left(\underbrace{\left(\left(\tilde{\vecmat{\Sigma}}+\vecmat{W}^{-1}\right)^{-1}\vecmat{z}_i\right)^{T}\vecmat{W}^{-1}\frac{\partial \vecmat{W}}{\partial\mu^*_i}\vecmat{W}^{-1}\vecmat{P}_{\text{LRAC}}^{-1}\vecmat{z}_i}_\text{$h(\vecmat{z}_i)$} \right.\\
    &\;-\left.c\underbrace{\left(\vecmat{W}^{-1}\vecmat{P}_{\text{LRAC}}^{-1} \vecmat{z}_i\right)^T\frac{\partial \vecmat{W}}{\partial\mu^*_i}\vecmat{W}^{-1}\vecmat{P}_{\text{LRAC}}^{-1}\vecmat{z}_i}_\text{$r(\vecmat{z}_i)$}\right),
\end{split}
\end{equation*}
\begin{equation*}
\begin{split}
    \frac{\partial\log\det\left(\tilde{\vecmat{\Sigma}} \vecmat{W}+\vecmat{I}_n\right)}{\partial\theta_k}
    =&\;\frac{\partial}{\partial\theta_k}\left(c\log\det(\vecmat{P}_{\text{LRAC}})+\log\det\left(\tilde{\vecmat{\Sigma}}+\vecmat{W}^{-1}\right)-c\log\det(\vecmat{P}_{\text{LRAC}})\right)\\
    =&\;\frac{\partial}{\partial\theta_k}\left(c\log\det\left(\vecmat{L}_k^T\vecmat{W} \vecmat{L}_k+\vecmat{I}_k\right)+\log\det\left(\tilde{\vecmat{\Sigma}}+\vecmat{W}^{-1}\right)\right.\\
    &\left. -c\log\det(\vecmat{P}_{\text{LRAC}})\right)\\
    =&\;c\;\tr\left(\left(\vecmat{L}_k^T\vecmat{W} \vecmat{L}_k+\vecmat{I}_k\right)^{-1}\left(\frac{\partial \vecmat{L}_k^T}{\partial\theta_k}\vecmat{W} \vecmat{L}_k+\vecmat{L}_k^T\vecmat{W}\frac{\partial \vecmat{L}_k}{\partial\theta_k}\right)\right)\\
    &\;+\frac{1}{t}\sum_{i=1}^t \left(\underbrace{-\left(\left(\tilde{\vecmat{\Sigma}}+\vecmat{W}^{-1}\right)^{-1}\vecmat{z}_i\right)^{T}\tilde{\vecmat{\Sigma}}\frac{\partial\tilde{\vecmat{\Sigma}}^{-1}}{\partial\theta_k}\tilde{\vecmat{\Sigma}} \vecmat{P}_{\text{LRAC}}^{-1}\vecmat{z}_i}_\text{$h(\vecmat{z}_i)$} \right.\\
    &\;-\left.c\underbrace{\left(\vecmat{P}_{\text{LRAC}}^{-1} \vecmat{z}_i\right)^T\left(\frac{\partial \vecmat{L}_k}{\partial\theta_k}\vecmat{L}_k^T+\vecmat{L}_k\frac{\partial \vecmat{L}_k^T}{\partial\theta_k}\right)\vecmat{P}_{\text{LRAC}}^{-1}\vecmat{z}_i}_\text{$r(\vecmat{z}_i)$}\right),
\end{split}
\end{equation*}
\begin{equation*}
\begin{split}
    \frac{\partial\log\det\left(\tilde{\vecmat{\Sigma}} \vecmat{W}+\vecmat{I}_n\right)}{\partial\xi_l}
    =&\;\frac{\partial}{\partial\xi_l}\left(\log\det(\vecmat{W}) + c\log\det(\vecmat{P}_{\text{LRAC}})+\log\det\left(\tilde{\vecmat{\Sigma}}+\vecmat{W}^{-1}\right)\right. \\
    &\left.-c\log\det(\vecmat{P}_{\text{LRAC}})\right)\\
    =&\;\frac{\partial}{\partial\xi_l}\left(\log\det(\vecmat{W}) + c\left(\log\det\left(\vecmat{L}_k^T\vecmat{W} \vecmat{L}_k+\vecmat{I}_k\right)+\log\det\left(\vecmat{W}^{-1}\right)\right)\right.\\
    &\;+\left.\log\det\left(\tilde{\vecmat{\Sigma}}+\vecmat{W}^{-1}\right)-c\log\det(\vecmat{P}_{\text{LRAC}})\right)\\
    =&\;\tr\left(\vecmat{W}^{-1}\frac{\partial \vecmat{W}}{\partial\xi_l}\right) +
    c\left(\tr\left(\left(\vecmat{L}_k^T\vecmat{W} \vecmat{L}_k+\vecmat{I}_k\right)^{-1}\vecmat{L}_k^T\frac{\partial \vecmat{W}}{\partial\xi_l}\vecmat{L}_k\right)\right.\\
    &\left.-\tr\left(\vecmat{W}^{-1}\frac{\partial \vecmat{W}}{\partial\xi_l}\right)\right) \\
    &\;-\frac{1}{t}\sum_{i=1}^t \left(\underbrace{\left(\left(\tilde{\vecmat{\Sigma}}+\vecmat{W}^{-1}\right)^{-1}\vecmat{z}_i\right)^{T}\vecmat{W}^{-1}\frac{\partial \vecmat{W}}{\partial\xi_l}\vecmat{W}^{-1}\vecmat{P}_{\text{LRAC}}^{-1}\vecmat{z}_i}_\text{$h(\vecmat{z}_i)$} \right.\\
    &\;-\left.c\underbrace{\left(\vecmat{W}^{-1}\vecmat{P}_{\text{LRAC}}^{-1} \vecmat{z}_i\right)^T\frac{\partial \vecmat{W}}{\partial\xi_l}\vecmat{W}^{-1}\vecmat{P}_{\text{LRAC}}^{-1}\vecmat{z}_i}_\text{$r(\vecmat{z}_i)$}\right),
\end{split}
\end{equation*}
where $c=\widehat{\text{Cov}}(h(\vecmat{z}_i),r(\vecmat{z}_i))/\widehat{\text{Var}}(r(\vecmat{z}_i))$ is the optimal weight for the variance reduction.

\subsection{Derivatives of the Vecchia-approximated precision and covariance matrices} \label{vecchiaAppendix}
\begin{align}
    \frac{\partial \tilde{\vecmat{\Sigma}}^{-1}}{\partial \theta_k} &= \frac{\partial \vecmat{B}^T}{\partial \theta_k}\vecmat{D}^{-1}\vecmat{B} + \vecmat{B}^T\vecmat{D}^{-1}\frac{\partial \vecmat{B}}{\partial \theta_k} - \vecmat{B}^T\vecmat{D}^{-1}\frac{\partial \vecmat{D}}{\partial \theta_k}\vecmat{D}^{-1}\vecmat{B} \label{vecchiaSigmaderiv} \\
    \frac{\partial \tilde{\vecmat{\Sigma}}}{\partial \theta_k}      &= -\vecmat{B}^{-1}\frac{\partial \vecmat{B}}{\partial \theta_k}\vecmat{B}^{-1}\vecmat{D}\vecmat{B}^{-T} - \vecmat{B}^{-1}\vecmat{D}\vecmat{B}^{-T}\frac{\partial \vecmat{\vecmat{B}}^T}{\partial \theta_k}\vecmat{B}^{-T} + \vecmat{B}^{-1}\frac{\partial \vecmat{D}}{\partial \theta_k}\vecmat{B}^{-T} \label{vecchiaSigmaIderiv} \\
    \left(\frac{\partial \vecmat{B}}{\partial \theta_k}\right)_{i,N(i)} &= -\frac{\partial \vecmat{A}_i}{\partial \theta_k} \nonumber \\
    & = -\left(\frac{\partial \vecmat{\Sigma}}{\partial \theta_k}\right)_{i,N(i)}\vecmat{\Sigma}_{N(i)}^{-1} + \vecmat{\Sigma}_{i,N(i)}\vecmat{\Sigma}^{-1}_{N(i)}\left(\frac{\partial \vecmat{\Sigma}}{\partial \theta_k}\right)_{N(i)}\vecmat{\Sigma}^{-1}_{N(i)} \nonumber \\
    \frac{\partial \vecmat{D}}{\partial \theta_k} &= \left(\frac{\partial \vecmat{\Sigma}}{\partial \theta_k}\right)_{i,i} -\frac{\partial \vecmat{A}_i}{\partial \theta_k} \vecmat{\Sigma}_{N(i),i} - \vecmat{A}_i\left(\frac{\partial \vecmat{\Sigma}}{\partial \theta_k}\right)_{N(i),i} \nonumber
\end{align}
Calculating these gradients costs $O(nm^3).$
\clearpage
\subsection{Preconditioned conjugate gradient algorithm}\label{appendix:CGalgo}
\begin{algorithm}[ht!]
    \caption{Preconditioned conjugate gradient algorithm with Lanczos tridiagonal matrix}
    \begin{algorithmic}[1]
        \Require Matrix $\vecmat{A}$, preconditioner matrix $\vecmat{P}$, vector $\vecmat{b}$
        \Ensure $\vecmat{u}_{l+1} \approx \vecmat{A}^{-1}\vecmat{b}$, tridiagonal matrix $\tilde{\vecmat{T}}$
        \State{early-stopping $\gets$ false}
        \State{$\alpha_0 \gets 1$}
        \State{$\beta_0 \gets 0$}
        \State{$\vecmat{u}_0 \gets 0$}
        \State{$\vecmat{r}_0 \gets \vecmat{b} - \vecmat{A}\vecmat{u}_0$}
        \State{$\vecmat{z}_0 \gets \vecmat{P}^{-1}\vecmat{r}_0$}
        \State{$\vecmat{h}_0 \gets \vecmat{z}_0$}
        \For{$l \gets 0$ to $L$}
            \State{$\vecmat{v}_l \gets \vecmat{A}\vecmat{h}_l$}
            \State{$\alpha_{l+1} \gets \frac{\vecmat{r}_l^T\vecmat{z}_l}{\vecmat{h}_l^T\vecmat{v}_l}$}
            \State{$\vecmat{u}_{l+1} \gets \vecmat{u}_l + \alpha_{l+1} \vecmat{h}_l$}
            \State{$\vecmat{r}_{l+1} \gets \vecmat{r}_l - \alpha_{l+1} \vecmat{v}_l$}
            \If{$||\vecmat{r}_{l+1}||_2 <$ tolerance}
                \State{early-stopping $\gets$ true}
            \EndIf
            \State{$\vecmat{z}_{l+1} \gets \vecmat{P}^{-1}\vecmat{r}_{l+1}$}
            \State{$\beta_{l+1} \gets \frac{\vecmat{r}_{l+1}^T\vecmat{z}_{l+1}}{\vecmat{r}_l^T\vecmat{z}_l}$}
            \State{$\vecmat{h}_{l+1} \gets \vecmat{z}_{l+1} + \beta_{l+1} \vecmat{h}_l$}
            \State{$\tilde{\vecmat{T}}_{l+1,l+1} \gets \frac{1}{\alpha_{l+1}} + \frac{\beta_l}{\alpha_{l}}$}
            \If{$l > 0$}
                \State{$\tilde{\vecmat{T}}_{l,l+1}, \tilde{\vecmat{T}}_{l+1,l} \gets \frac{\sqrt{\beta_l}}{\alpha_{l}}$}
            \EndIf
            \If{early-stopping}
                \State{return $\vecmat{u}_{l+1}, \tilde{\vecmat{T}}$}
            \EndIf
        \EndFor
    \end{algorithmic}
\end{algorithm}
\subsection{Zero fill-in reverse incomplete Cholesky factorization}\label{appendix:ReverseIncompChol}
\begin{algorithm}[ht!]
    \caption{Zero fill-in reverse incomplete Cholesky algorithm}\label{ZIRC_algo}
    \begin{algorithmic}[1]
        \Require Matrix $\vecmat{A}\in\mathbb{R}^{n\times n}$, matrix $\vecmat{S}\in\mathbb{R}^{n\times n}$ with sparsity pattern
        \Ensure Sparse lower triangular matrix $\vecmat{L}$ with $\vecmat{A} \approx \vecmat{L}^T\vecmat{L}$

        \For{$j \gets n$ to $1$}
            \For{$i \gets n$ to $1$}
                \If{$(i,j) \in \vecmat{S}$ and $j \leq i$}
                    \State{$s \gets \vecmat{L}_{\cdot i}^T \vecmat{L}_{\cdot j}$}
                    \If{$j = i$}
                        \State{$L_{jj} \gets \sqrt{A_{jj}-s}$}
                    \Else
                        \State{$L_{ij} \gets \frac{A_{ji}-s}{L_{ii}}$}
                    \EndIf
                \EndIf
            \EndFor
        \EndFor
    \end{algorithmic}
\end{algorithm}

\clearpage

\subsection{Additional results for preconditioner comparison}\label{appendix:add_res_sim}

In the following, we present additional results comparing the VADU and the LRAC preconditioners on simulated data using the simulation setting described in Section \ref{CG_P_Comparison}. Based on the experimental results in this subsection and Section \ref{CG_P_Comparison} as well as the theoretical results in Sections \ref{section:preconditioners} and \ref{section:conv_theory}, we generally prefer the VADU preconditioner in practice. One benefit of the VADU preconditioner is that its construction involves essentially no computational cost. Further, we showed in Sections \ref{section:preconditioners} and \ref{section:conv_theory} that the VADU preconditioner removes the small eigenvalues which tend to delay the convergence of the CG method more than the large ones \citep{nishimura2022prior}. However, the LRAC preconditioner can be advantageous in situations when there is a lot of large-scale variation.

\begin{figure}[ht!]
\centering
    \includegraphics[width=0.7\linewidth]{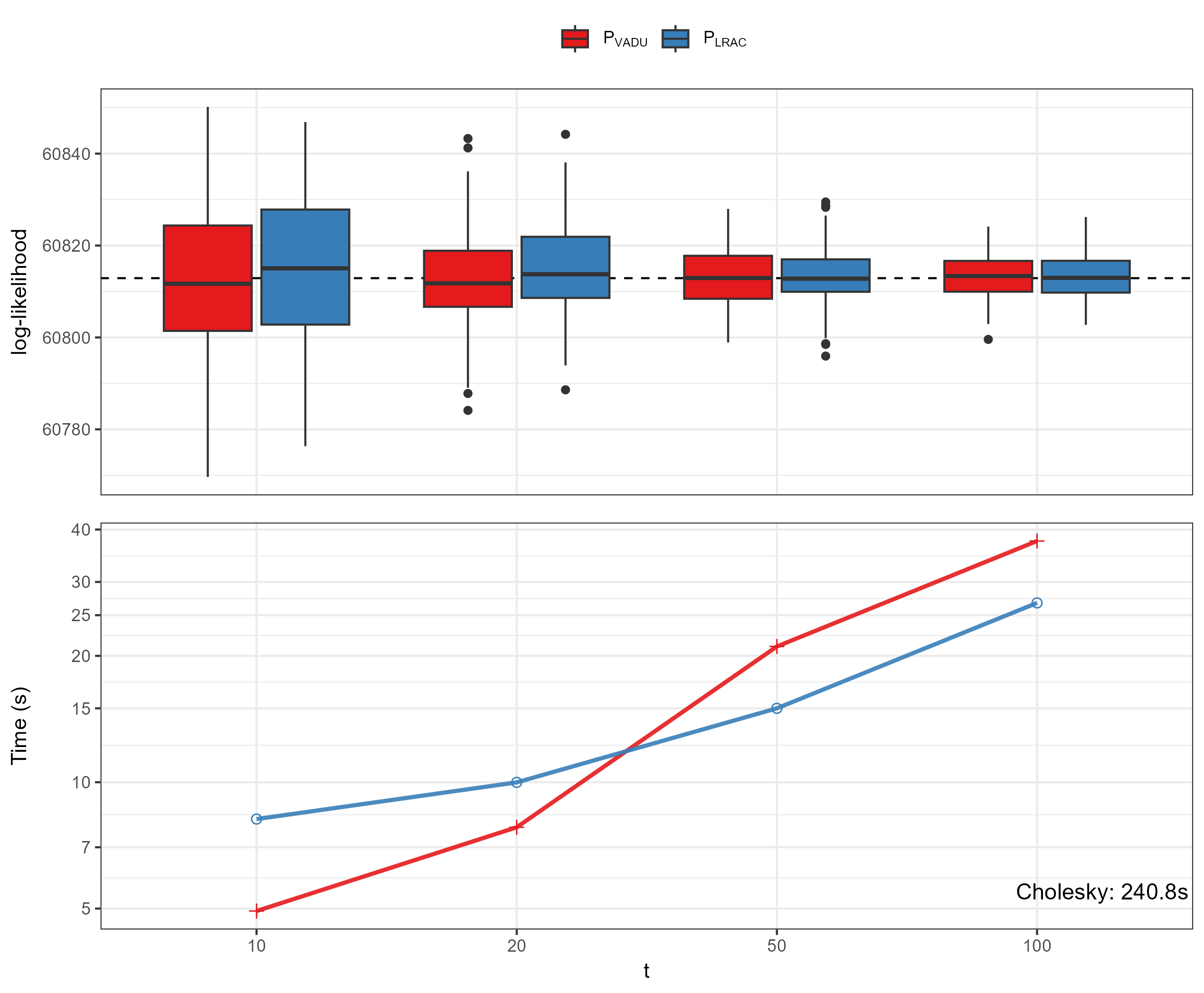}
    \caption{Negative log-marginal likelihood and runtime for the VADU and LRAC preconditioners for different numbers of random vectors $t$ on simulated data with $n=100'000$ and a range $\rho=0.05$. The dashed lines are the results for the Cholesky decomposition.} 
    \label{fig:P_comparison_CG_100000}
\end{figure}

\begin{figure}[ht!]
\centering
    \includegraphics[width=0.7\linewidth]{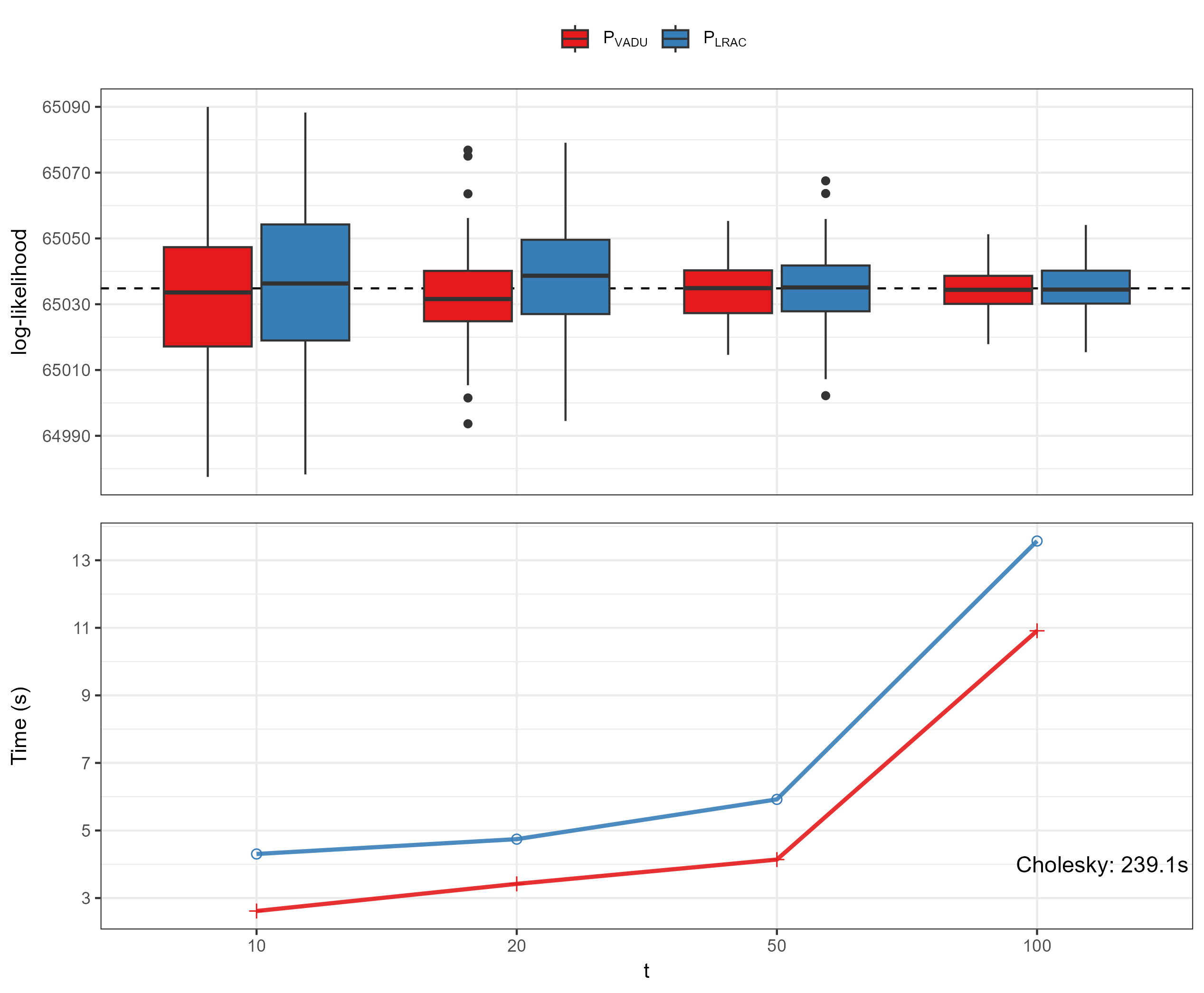}
    \caption{Negative log-marginal likelihood and runtime for the VADU and LRAC preconditioners for different numbers of random vectors $t$ on simulated data with $n=100'000$ and a range $\rho=0.01$.} 
    \label{fig:P_comparison_CG_rho=0.01}
\end{figure}

\begin{figure}[ht!]
\centering
    \includegraphics[width=0.7\linewidth]{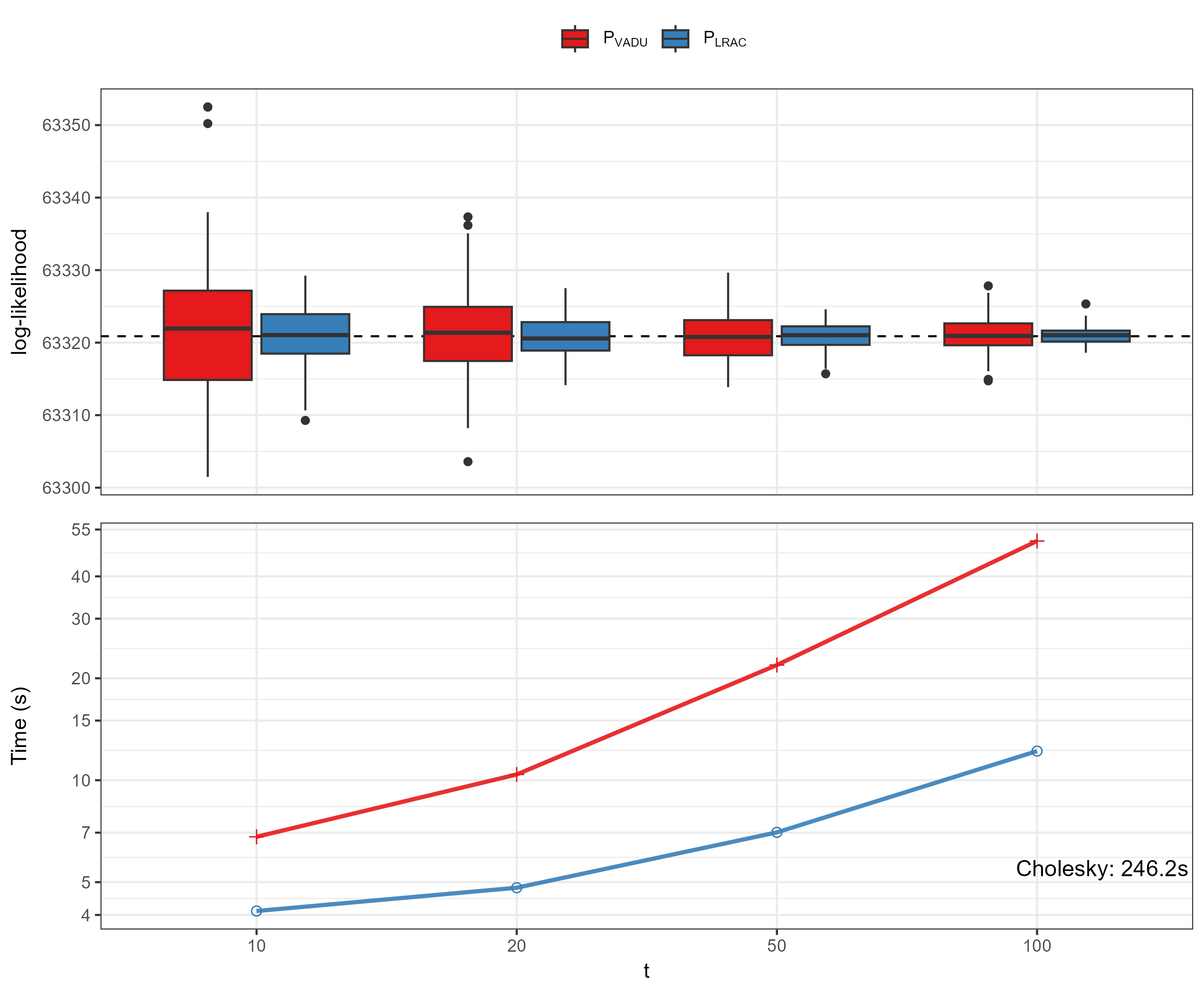}
    \caption{Negative log-marginal likelihood and runtime for the VADU and LRAC preconditioners for different numbers of random vectors $t$ on simulated data with $n=100'000$ and a range $\rho=0.25$.} 
    \label{fig:P_comparison_CG_rho=0.25}
\end{figure}

\begin{figure}[ht!]
\centering
    \includegraphics[width=0.7\linewidth]{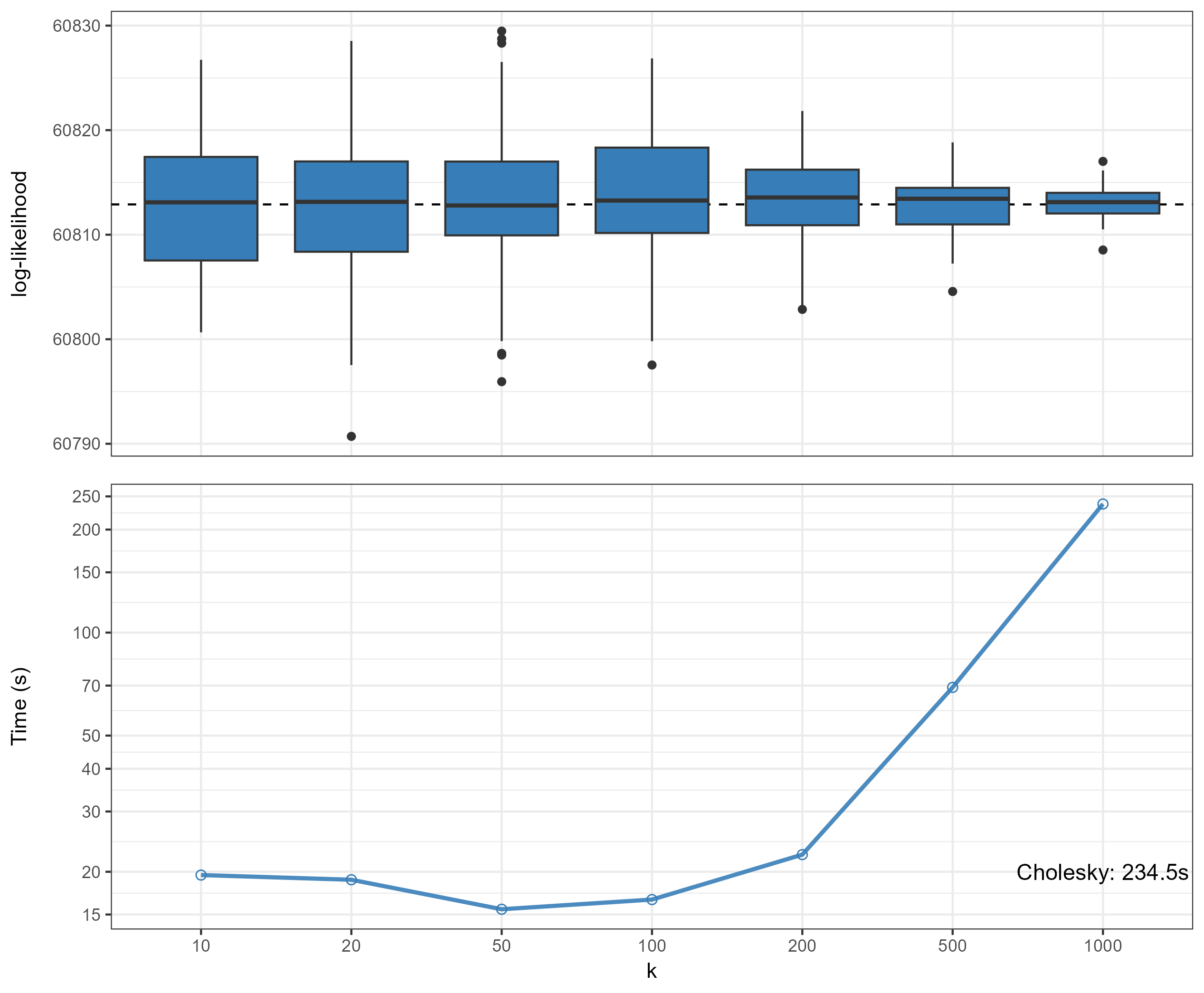}
    \caption{Negative log-marginal likelihood and runtime for the LRAC preconditioner for different ranks $k$ on simulated data with $n=100'000$ and a range of $\rho=0.05$.} 
    \label{fig:P_LRAC_ranks}
\end{figure}

\clearpage
\subsection{Additional results for predictive variances}\label{appendix:add_res_sim_var}
\begin{figure}[ht!]
\centering
    \includegraphics[width=0.6\linewidth]{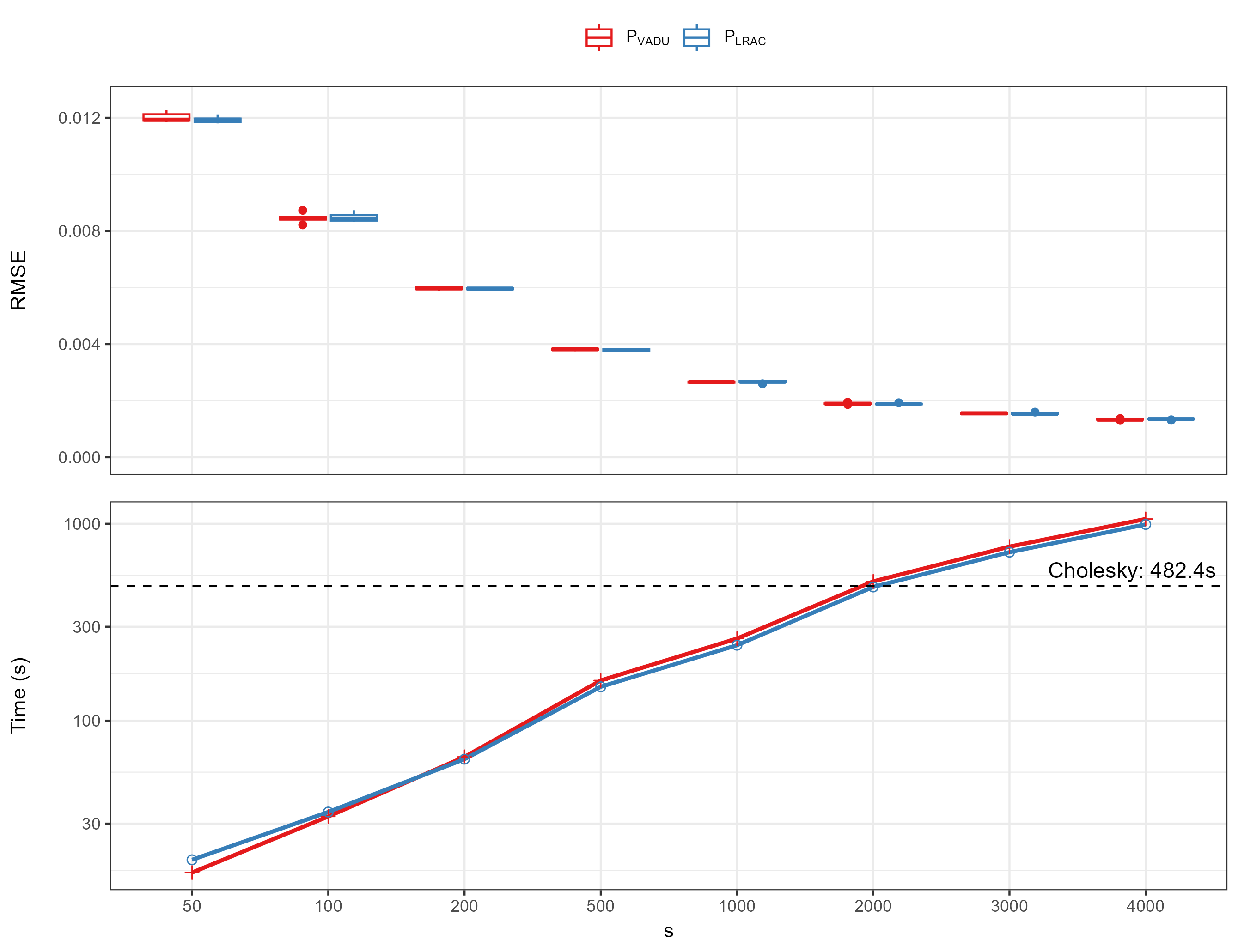}
        \caption{Accuracy and time for simulation-based predictive variances on simulated data with $n=n_p=100'000$. For each preconditioner (VADU, LRAC) and number of random vectors $s$, $10$ simulation-based predictions with different random vectors are performed. We report the root mean squared error (RMSE) for the predictive variances compared to Cholesky-based computations and the average runtime (s) for prediction.}
    \label{fig:sim1e5}
\end{figure}
\begin{figure}[ht!]
\centering
    \includegraphics[width=0.6\linewidth]{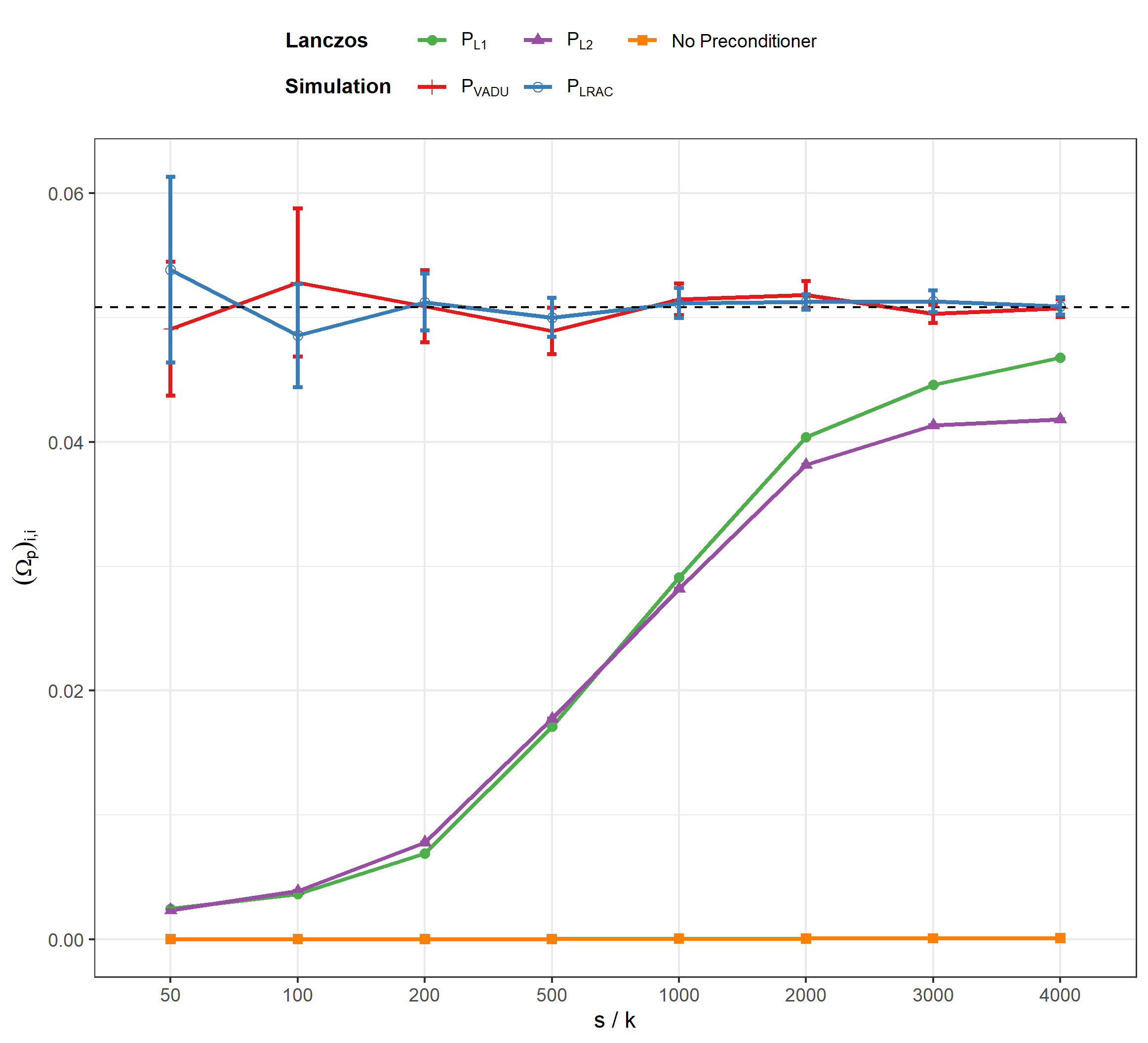}
    \caption{Predictive variance for a single observation $(\vecmat{\Omega}_p)_{i,i}$ for simulation- and Lanczos-based approximations with different numbers of random vectors $s$ and Lanczos ranks $k$ on simulated data with $n_p=n=100'000$. For each preconditioner and number of random vectors $s$, $10$ simulation-based predictions are performed with different random vectors, and the whiskers represent confidence intervals obtained as plus and minus twice the standard error of the mean. The dashed horizontal line shows the predictive variance based on the Cholesky decomposition.}
    \label{fig:latentpostpredictivepath}
\end{figure}
\begin{figure}[ht!]
\centering
    \includegraphics[width=0.6\linewidth]{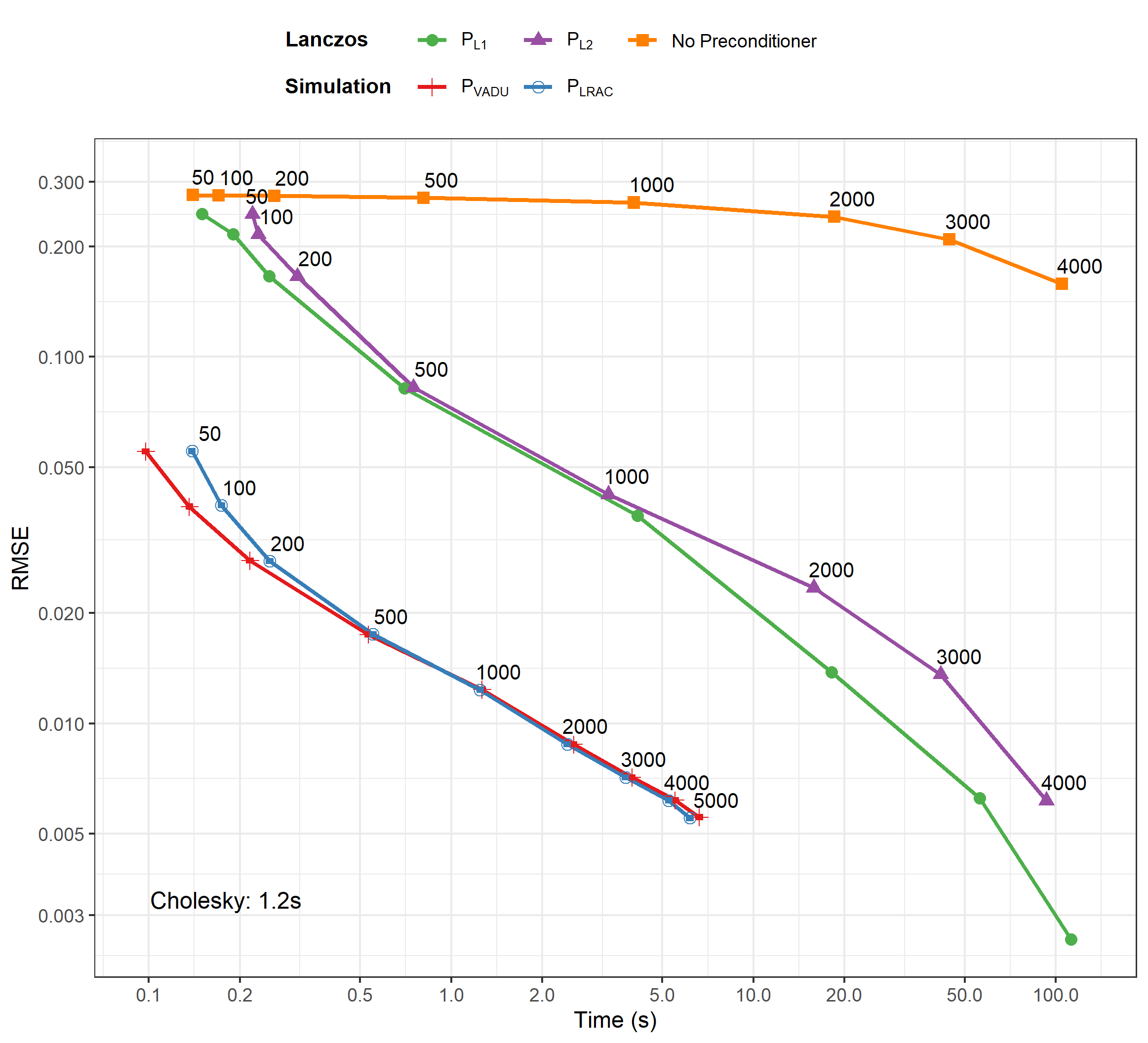}
    \caption{Comparison of simulation- and Lanczos-based methods for predictive variances: accuracy (= RMSE w.r.t. Cholesky-based predictive variances) vs. time (s) on simulated data of size $n=n_p=5'000$. The number of random vectors $s$ and the Lanczos rank $k$ are annotated in the plot beside the points with the results. The results for the simulation-based prediction are averaged over $50$ repetitions, but the corresponding whiskers with confidence intervals for the RMSE are not visible due to very small variances.}
    \label{fig:Lanczos_vs_simulation5000}
\end{figure}

\clearpage
\subsection{Additional results for parameter estimation and prediction in simulated experiments}\label{appendix:add_res_sim_est}
\begin{table}[ht!]
    \centering
    \begin{tabular}{l|lllll}
        \hline
        \hline
        && Cholesky-VL & Iterative-VL & GPVecchia\\ 
        \hline
        $\sigma_1^2$ & RMSE & 0.1160 & 0.1179 & 0.2484\\ 
                     & Bias & -0.0113 & -0.0101 & 0.1684\\ 
        \hline
        $\rho$ & RMSE & 0.0036& 0.0037 & 0.0479\\ 
               & Bias & -0.0002 & -0.0001 & 0.0469\\ 
        \hline
        \hline
    \end{tabular}
    \caption{Root mean squared error (RMSE) and bias of the covariance parameter estimates.} 
    \label{table:estimates20000}
\end{table}

\begin{table}[ht!]
    \centering
    \begin{tabular}{llllll}
        \hline
        \hline
        & Cholesky-VL & Iterative-VL & GPVecchia\\ 
        \hline
        $\overline{\text{RMSE}}$ & 0.3853 & 0.3853 & 0.4486\\ 
        sd($\overline{\text{RMSE}}$) & 0.0009 & 0.0009 & 0.0010\\ 
        \hline
        $\overline{\text{LS}}$ & 9199.3698 &  9205.9101 & -\\ 
        sd($\overline{\text{LS}}$)& 47.8820 & 48.0022 & -\\ 
        \hline
        \hline
    \end{tabular}
    \caption{Average root mean squared error (RMSE) for predictive means and average log score (LS) for probabilistic predictions with corresponding standard errors. For GPVecchia, the log score could not be calculated due to negative predictive variances.} 
    \label{table:prediction20000}
\end{table}

\clearpage
\subsection{Additional results for the MODIS column water vapor data application}\label{appendix_rw}
\begin{table}[ht!]
	\centering
	\begin{tabular}{lcccccc} 
		\hline
		\hline
		& $\alpha$ & $\sigma_1^2$ & $\rho$ & $\beta_0$ & $\beta_1$ & $\beta_2$\\ 
    	\hline
  	Cholesky-VL & 28.21 & 0.30 & 15.32 km & $-2.0 \times 10^{-2}$ & $-4.9 \times 10^{-5}$ & $-8.5 \times 10^{-4}$\\ 
  	\hline
  	Iterative-VL & 28.13 & 0.30 & 15.35 km & $-2.0 \times 10^{-2}$ & $-4.9 \times 10^{-5}$ & $-8.5 \times 10^{-4}$\\ 
		\hline
	  GPVecchia & 0.95 & 0.25 & 149.15 km & $+8.2 \times 10^{-2}$ & $-7.0 \times 10^{-5}$ & $-7.6 \times 10^{-4}$ \\
        \hline
	  LowRank & 6.47 & 1.03 & 153.40 km & $+2.7 \times 10^{-1}$ & $-3.9 \times 10^{-5}$ & $-7.8 \times 10^{-4}$ \\
        \hline
		\hline
	\end{tabular}
	\caption{Estimated parameters on the MODIS column water vapor data set.}
    \label{table:realEstimates}
\end{table}
\begin{table}[ht!]
	\centering
	\begin{tabular}{lllllll}
		\hline
		\hline
		$\sigma_1^2$ & $\rho$ & $\beta_0$ & $\beta_1$ & $\beta_2$ & RMSE & CRPS\\ 
		\hline
		0.28 & 11.86km & $-2.1\times 10^{-2}$ & $-4.6\times 10^{-5}$ & $-8.6\times 10^{-4}$ & 0.0791 & 0.0441 \\ 
		\hline
		\hline
	\end{tabular}
	\caption{Estimated parameters, RMSE, and CRPS for Iterative-VL when parameter estimation is executed on the training data set for fixed shape $\alpha=28.13$.} 
    \label{table:realLarge}
\end{table}

\begin{figure}[ht!]
\centering
    \includegraphics[width=1\linewidth]{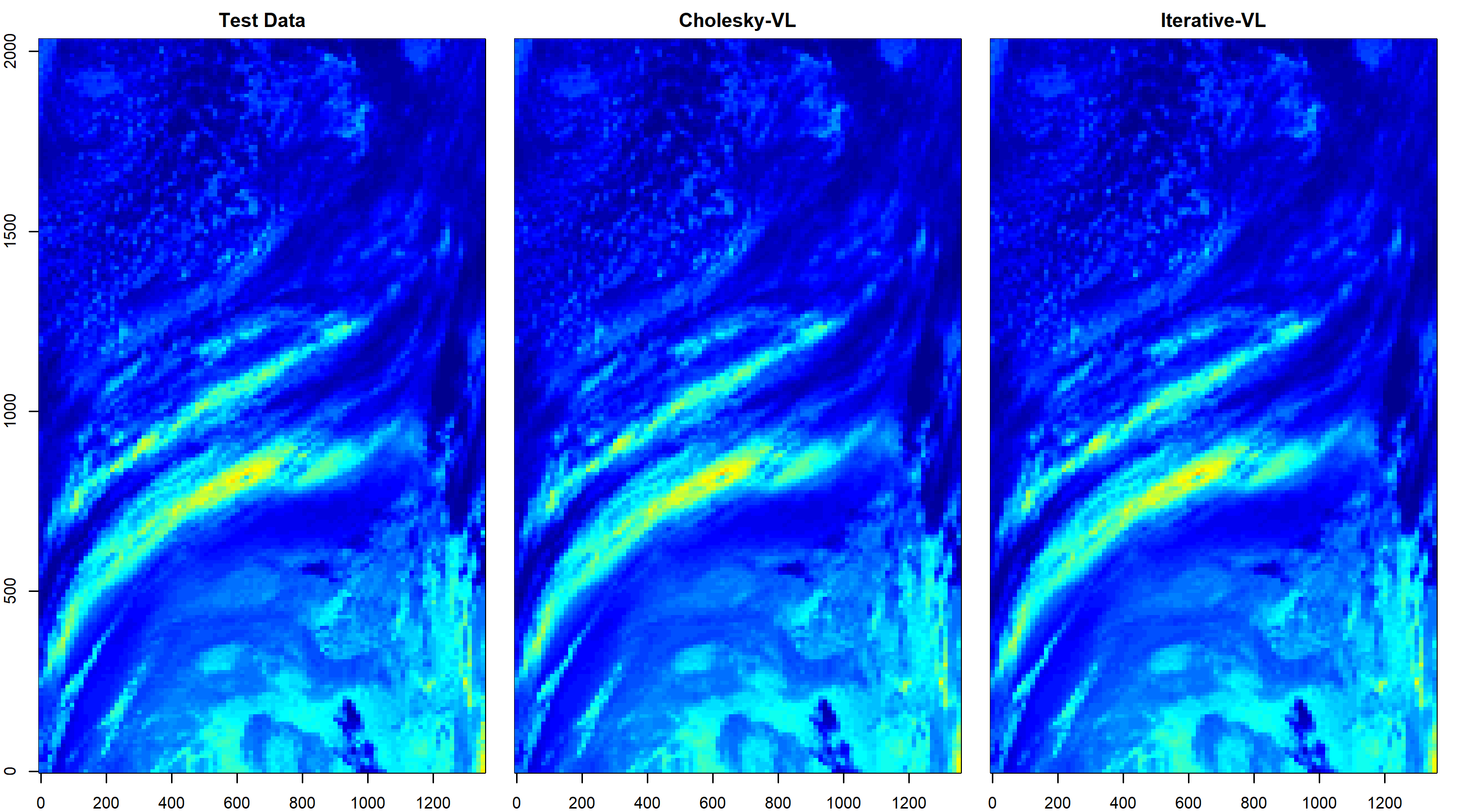}
    \includegraphics[width=0.77\linewidth]{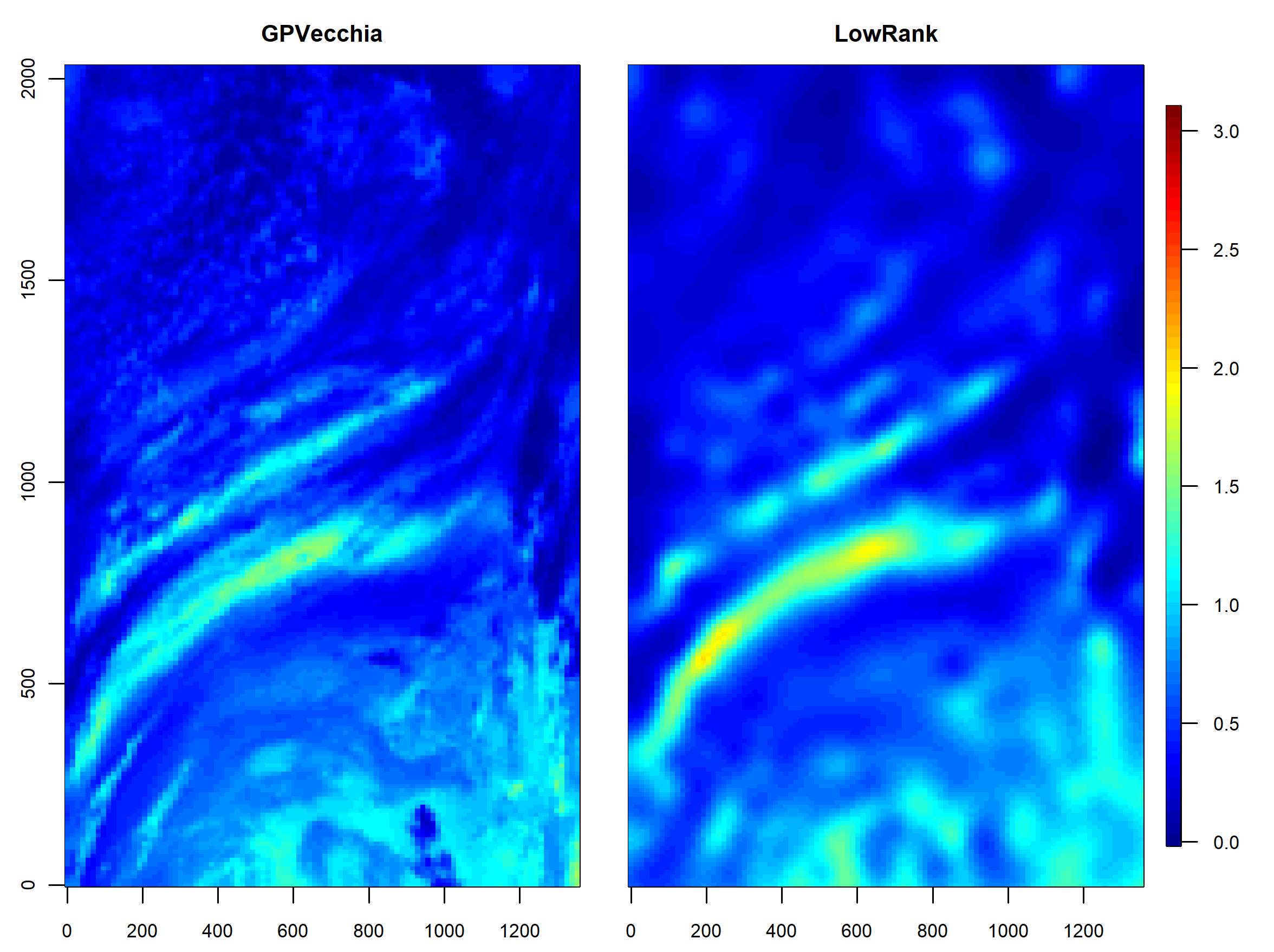}
    \caption{Test data and posterior predictive response means for the MODIS column water vapor data set.}
    \label{fig:realmean}
\end{figure}

\begin{figure}[ht!]
\centering
    \includegraphics[width=1\linewidth]{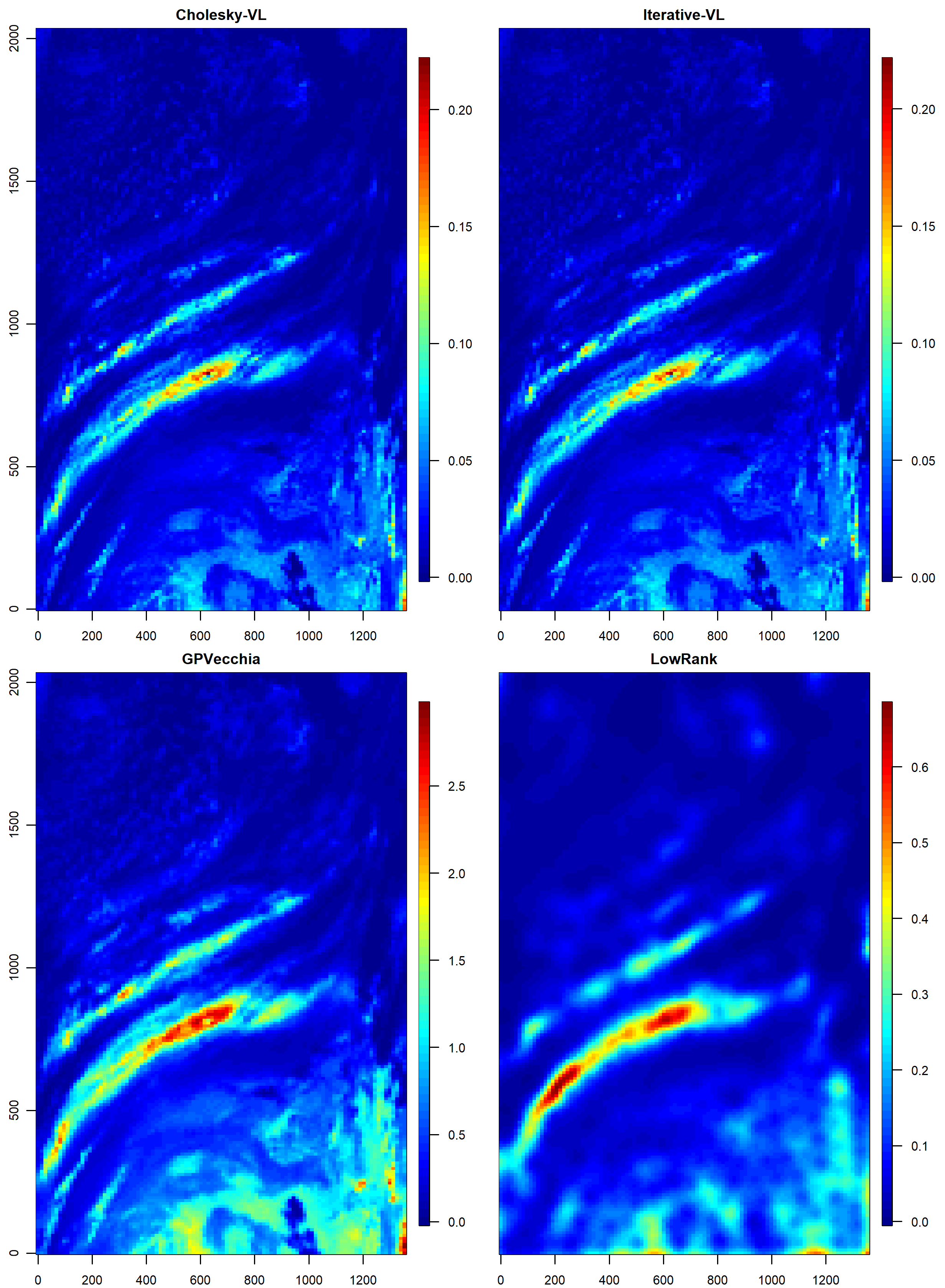}
    \caption{Posterior predictive variances for the MODIS column water vapor data set.}
    \label{fig:realvar}
\end{figure}

\clearpage

\subsection{Precipitation data application}\label{precip_data}
We apply our proposed methods to a spatial precipitation data set from the National Oceanic and Atmospheric Administration (NOAA). Daily precipitation values in mm (divided by 10) are provided for a 0.25° longitude-latitude grid over the contiguous United States with coordinates from 235.375° to 292.875° east and 25.125° to 49.125° north. The data can be accessed using the R function \texttt{rnoaa::cpc\_prcp}, and in our analysis we use the accumulated precipitation values for the month of November 2019. 

Since the accumulated precipitation values are strictly positive, we follow \citet{quiroz2023fast} and use a gamma likelihood $y_i|\mu_i \sim \Gamma(\alpha, \alpha \exp{(-\mu_i)})$. We use a Matérn covariance function with smoothness parameter $\nu=1.5$ and a linear predictor term, $\vecmat{F}(\vecmat{X}) = \vecmat{X}\vecmat{\beta}$, with an intercept and the longitude and latitude coordinates as predictor variables. 
For the low-rank approximation, we set the number of inducing points to 500, and for the simulation-based prediction variances we choose $s=500$ samples. Otherwise, for the Vecchia approximation and the iterative methods, we use the same settings as described in Section \ref{simSetting}.

To evaluate the prediction accuracy, we randomly split the total 13'626 observations into a training and a test set with $n=10'901$ and $n_p=2'725$ respectively. The joint estimation of all parameters with GPVecchia crashed and singular matrices occurred. Therefore, we again use the same estimation procedure as in \citet{zilber2021vecchia} and report the runtime to calculate the marginal likelihood. The latter is evaluated on the training set at $\alpha=5$, $\sigma_1^2=0.4$, $\rho=0.86$, and for $\vecmat{\beta}$ we use the generalized linear model estimate. Table \ref{table:NOAAPrediction} reports the runtime, test RMSE, and CRPS. The RMSE for Iterative-VL and Cholesky-VL is identical up to the fourth digit and almost four times smaller compared to GPVecchia and less than half the RMSE of the low-rank approximation. The CRPS for Iterative-VL and Cholesky-VL are also very similar and about ten and three times smaller compared to GPVecchia and the low-rank approximation, respectively. Further, iterative methods for the Vecchia-Laplace approximation are approximately four times faster in evaluating the marginal likelihood than Cholesky-based calculations and the GPVecchia approach. Estimated parameters are reported in Table \ref{table:NOAAEstimates}.

Figure \ref{fig:NOAAmean} shows the test data and maps of posterior predictive means $\mathbb{E}[\vecmat{y}_p|\vecmat{y},\vecmat{\theta},\vecmat{\xi}]$ and Figure \ref{fig:NOAAvar} shows the posterior predictive variances $\text{Var}[\vecmat{y}_p|\vecmat{y},\vecmat{\theta},\vecmat{\xi}]$. Iterative-VL and Cholesky-VL predict the precipitation values very accurately, while GPVecchia fails to predict the high precipitation values and instead oversmoothes. Further, some predictive variances for GPVecchia are very large. The latter also holds for the low-rank approximation. Iterative-VL and Cholesky-VL generally have smaller predictive variances.

\begin{table}[ht!]
	\centering
	\begin{tabular}{lllll}
		\hline
		\hline
		& Cholesky-VL & Iterative-VL & GPVecchia & LowRank\\ 
		\hline
		RMSE & 4.0836 & 4.0837 & 15.3807 & 11.8916\\ 
        \hline
		CRPS & 2.0587 & 2.0576 & 22.6582 & 6.1480\\ 
        \hline
		Time(s) & 2.6s & 0.7s & 2.8s & 1.1s\\ 
		\hline
		\hline
	\end{tabular}
	\caption{Prediction accuracy (lower = better) and runtime (s) for calculating marginal likelihoods on the NOAA precipitation data set.}
    \label{table:NOAAPrediction}
\end{table}

\begin{figure}[ht!]
\centering
    \includegraphics[width=1\linewidth]{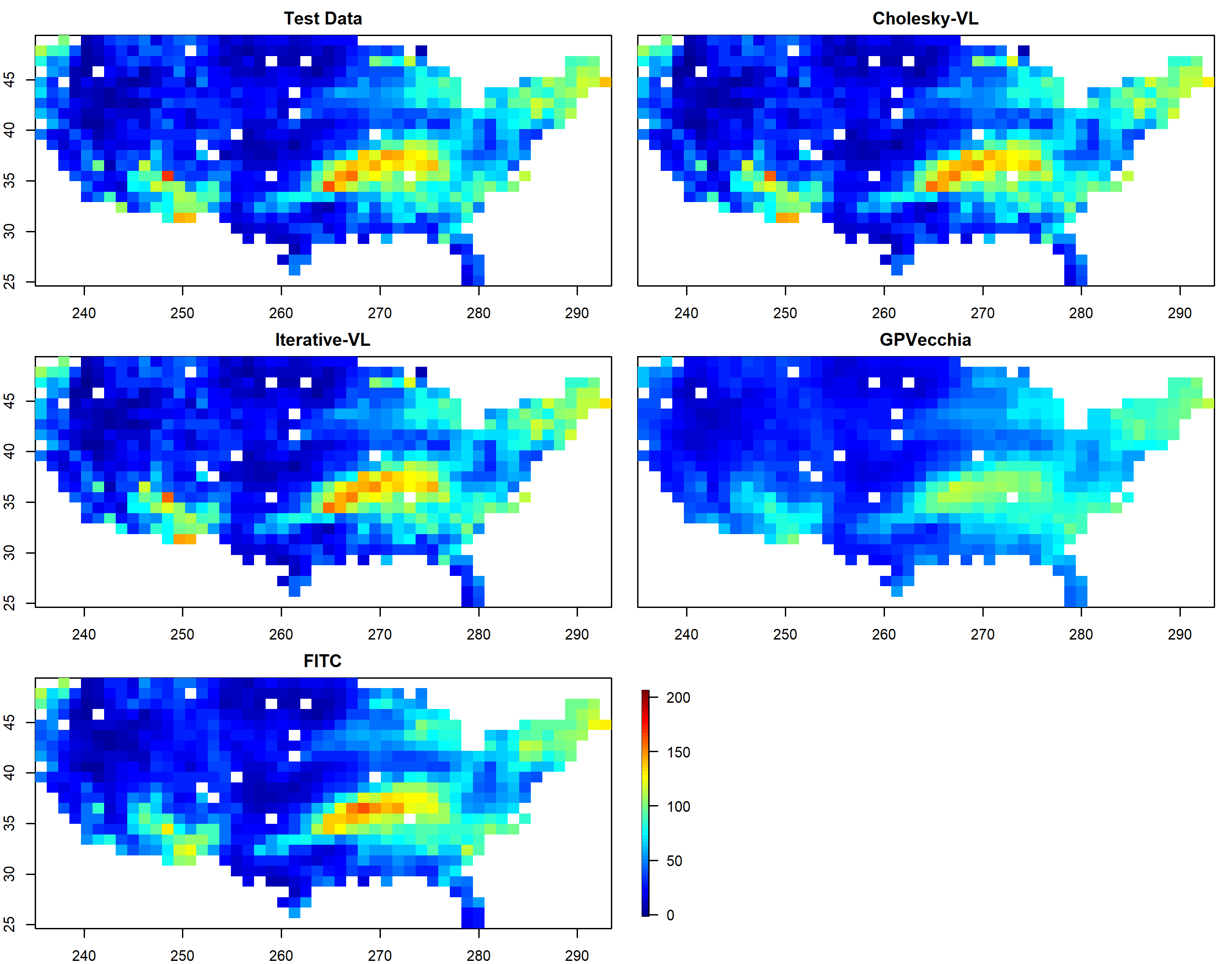}
    \caption{Test data and posterior predictive response means for the NOAA precipitation data set.}
    \label{fig:NOAAmean}
\end{figure}

\begin{table}[ht!]
	\centering
	\begin{tabular}{lcccccc} 
		\hline
		\hline
		&           $\alpha$ & $\sigma_1^2$ & $\rho$ & $\beta_0$ & $\beta_1$ & $\beta_2$\\ 
    	\hline
  	Cholesky-VL & $1.32 \times 10^{6}$ & 0.41 & 0.86 & -1.13 & $+2.1 \times 10^{-2}$ & $-2.0 \times 10^{-2}$\\ 
  	\hline
  	Iterative-VL & $6.26 \times 10^{5}$ & 0.41 & 0.86 & -1.27 & $+2.1 \times 10^{-2}$ & $-1.9 \times 10^{-2}$\\ 
		\hline
	  GPVecchia & $2.43 \times 10^{-1}$ & 0.42 & 11.32 & -0.29 & $+1.9 \times 10^{-2}$ & $-2.9 \times 10^{-2}$ \\
        \hline
	  LowRank & $1.75 \times 10^{1}$ & 0.50 & 2.04 & +0.54 & $+1.8 \times 10^{-2}$ & $-3.0 \times 10^{-2}$ \\
        \hline
		\hline
	\end{tabular}
	\caption{Estimated parameters on the NOAA precipitation data set.}
    \label{table:NOAAEstimates}
\end{table}

\begin{figure}[ht!]
\centering
    \includegraphics[width=1\linewidth]{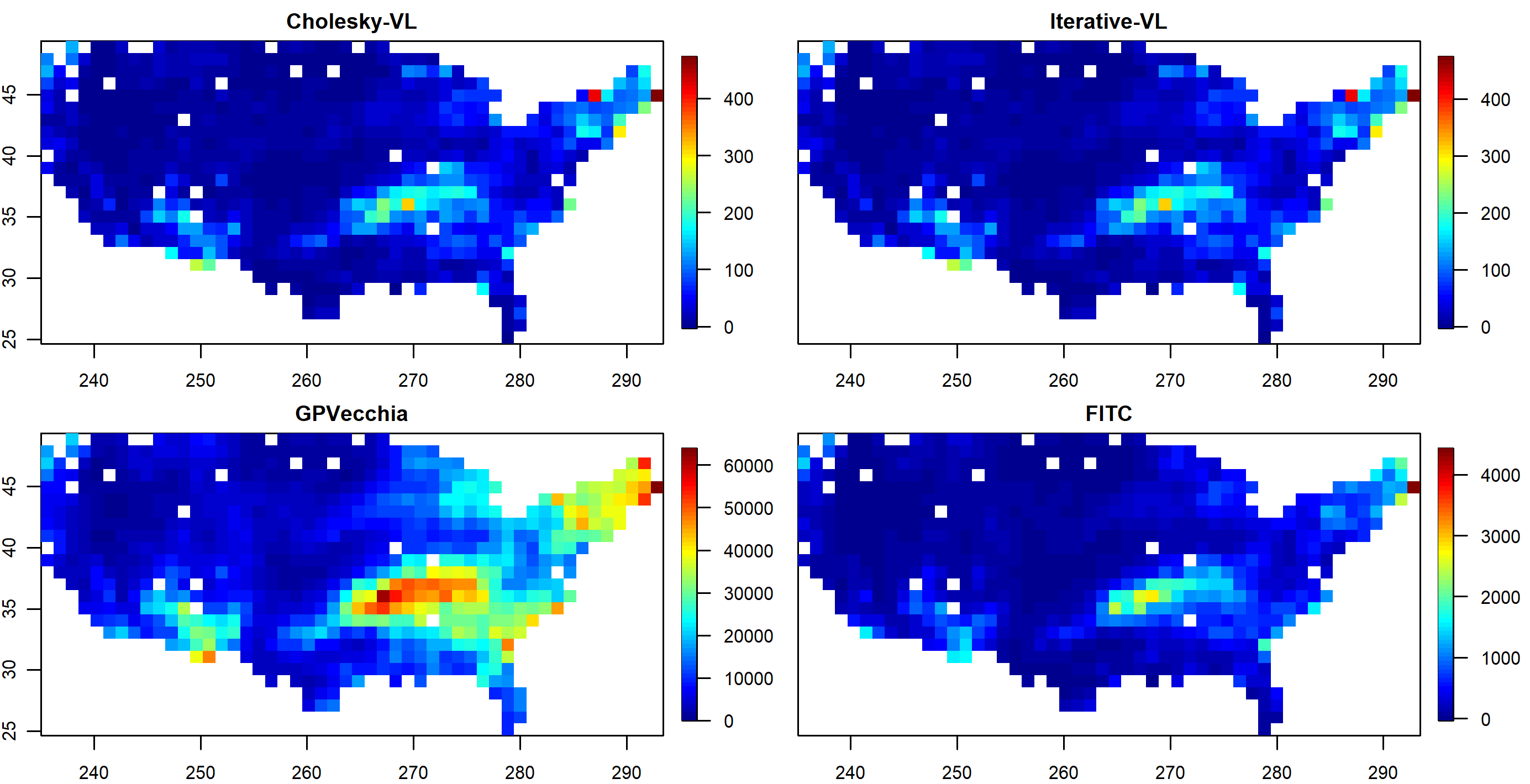}
    \caption{Posterior predictive variances for the NOAA precipitation data set.}
    \label{fig:NOAAvar}
\end{figure}

\end{document}